\journal{Journal of \LaTeX\ Templates}
\theoremstyle{plain}
\newtheorem{theorem}{Theorem}
\newtheorem{lemma}[theorem]{{\bf Lemma}}
\newtheorem{definition}[theorem]{{\bf Definition}}
\newtheorem{proposition}[theorem]{{\bf Proposition}}
\newtheorem*{lemma*}{Lemma}
\newtheorem*{proposition*}{Proposition}
\newtheorem*{theorem*}{Theorem}
\theoremstyle{definition}
\newtheorem{example}[theorem]{{\bf Example}}
\newcommand{\refToFigure}[1]{Fig.\ref{fig:#1}}
\newcommand{\refToSection}[1]{Sect.\ref{sect:#1}}
\newcommand{\refToLemma}[1]{Lemma~\ref{lemma:#1}}
\newcommand{\refToDef}[1]{Definition~\ref{def:#1}}
\newcommand{\refToProp}[1]{Proposition~\ref{prop:#1}}
\newcommand{\refToTheorem}[1]{Theorem~\ref{theo:#1}}
\newcommand{\refToExample}[1]{Example~\ref{ex:#1}}
\newcommand{\Space}{\hskip 0.4em}
\newcommand{\BigSpace}{\hskip 1.5em}
\newcommand{\Q}{\lstinline}
\definecolor{light-gray}{gray}{0.80}
\newcommand{\Tuple}[1]    {\langle{#1}\rangle}
\newcommand{\Pair}[2]     {\Tuple{{#1},{#2}}}
\newcommand{\FourTuple}[4]     {\Tuple{{#1},{#2},{#3},{#4}}}
\newcommand{\SubstFun}[2]{#1[#2]}
\newcommand{\Reduct}[2]{{#1}_{|#2}}
\newcommand{\Finer}[2]{{#1}\sqsubseteq{#2}}
\newcommand{\variant}{\EZ{\approx^-}}
\newcommand{\NamedRule}[4]{\scriptstyle{\textsc{(#1)}}
\displaystyle                  %  #1 = nome regola
\frac{%
\begin{array}{l}#2\end{array}%
}{#3}        %  #2 = premesse (modo math) %  #3 = conseguenza (modo math)
{\begin{array}{l}#4\end{array}}    %  #4 = side conditions (modo math)
}
\newcommand{\rn}[1]{{\scriptsize (\textsc{#1})}}					% Rule name
\newcommand{\NamedRuleOL}[3]{\scriptstyle{\textsc{(#1)}}
\displaystyle                  %  #1 = nome regola
{%
\begin{array}{l}#2\end{array}}\           
\begin{array}{l}#3\end{array}     %  #3 = side conditions (modo math)
}
\newenvironment{grammatica}{$\begin{array}{lcll}}{\end{array}$}
\newcommand{\produzione}[3]{#1&::=&#2&\mbox{#3}}
\newcommand{\produzioneinline}[2]{#1::=#2}
\newcommand{\seguitoproduzione}[2]{&&#1&\mbox{#2}}
\newcommand{\terminale}[1]{\texttt{#1}}
\newcommand{\metavariable}[1]{{\it #1}}
\newcommand{\this}{\terminale{this}}
\newcommand{\resV}{\terminale{res}}
\newcommand{\x}{\metavariable{x}}
\newcommand{\y}{\metavariable{y}}
\newcommand{\z}{\metavariable{z}}
\newcommand{\xs}{\metavariable{xs}}
\newcommand{\ys}{\metavariable{ys}}
\newcommand{\m}{\metavariable{m}}
\newcommand{\e}{\metavariable{e}}
\newcommand{\es}{\metavariable{es}}
\newcommand{\C}{\metavariable{C}}
\newcommand{\D}{\metavariable{D}}
\newcommand{\f}{\metavariable{f}}
\newcommand{\dec}{\metavariable{d}}
\newcommand{\decs}{\metavariable{ds}}
\newcommand{\Field}[2]{#1\, #2}
\newcommand{\FieldAccess}[2]{#1\terminale{.}#2}
\newcommand{\MethCall}[3]{#1{\terminale{.}}#2\terminale{(}#3\terminale{)}}
\newcommand{\FieldAssign}[3]{#1\terminale{.}#2\terminale{=}#3}
\newcommand{\ConstrCall}[2]{\terminale{new}\ #1\terminale{(}#2\terminale{)}}
\newcommand{\Block}[2]{\terminale{\{}#1\;#2\terminale{\}}}
\newcommand{\Dec}[3]{#1\,#2\terminale{=}#3;}
\newcommand{\DecP}[3]{#1#2\terminale{=}#3;}
\newcommand{\Param}[2]{#1\ #2}
\newcommand{\Sequence}[2]{#1\terminale{;}#2}
\newcommand{\capsule}{\terminale{a}}
\newcommand{\Type}[2]{#2^{#1}}
\newcommand{\TypeDec}[2]{#1{:}#2}
\newcommand{\ReturnTypeNew}[2]{#1\,{\mid}\,#2}
\newcommand{\Method}[4]{\FourTuple{#1}{#2}{#3}{#4}}
\newcommand{\T}{\metavariable{T}}
\newcommand{\intType}{\terminale{int}}
\newcommand{\sharingRel}{{\cal S}}
\newcommand{\X}{\metavariable{X}} %insieme di variabili
\newcommand{\Y}{\metavariable{Y}} %insieme di variabili
\newcommand{\Z}{\metavariable{Z}} %insieme di variabili
\newcommand{\Sum}[2]{#1+#2}
\newcommand{\SubstEqRel}[3]{#1[#2/#3]}
\newcommand{\Remove}[2]{#1{\setminus}#2}
\newcommand{\Closure}[2]{[#1]_{#2}}
\newcommand{\BlockLab}[3]{ \terminale{\{}^{#3} #1\;#2\terminale{\}}}
\newcommand{\TypeCheckDecs}[3]{#1\vdash #2:#3}
\newcommand{\TypeCheckAnnotate}[5]{#1\vdash #2:#3{\mid}#4{\leadsto}#5}
\newcommand{\TypeCheck}[4]{#1\vdash #2:#3\,{\mid}\,#4}
\newcommand{\IsCapsule}[1]{\aux{capsule}(#1)}
\newcommand{\HoleCtx}[1]{{\cal B}_{#1}}
\newcommand{\dv}{\metavariable{dv}}
\newcommand{\dvs}{\metavariable{dvs}}
\newcommand{\ctx}{{\cal{E}}}
\newcommand{\valCtx}{{\cal{V}}}
\newcommand{\ctxP}{{\cal E}'}
\newcommand{\ctxS}{{\cal E}''}
\newcommand{\emptyctx}{[\ ]}
\newcommand{\val}{\metavariable{v}}
\newcommand{\valPrime}{\metavariable{u}}
\newcommand{\vs}{\metavariable{vs}}
\newcommand{\Ctx}[1]{\ctx[#1]}
\newcommand{\ValCtx}[1]{\valCtx[#1]}
\newcommand{\CtxP}[1]{{\ctxP}[#1]}
\newcommand{\CtxS}[1]{{\ctxS}[#1]}
\newcommand{\decCtx}[2]{#1_{#2}}
\newcommand{\DecCtx}[3]{\decCtx{#1}{#2}[#3]}
\newcommand{\reduce}[2]{#1\longrightarrow#2}
\newcommand{\Subst}[3]{#1[#2/#3]}
\newcommand{\SubstVal}[3]{#1[#2/#3]}
\newcommand{\UpdateCtx}[4]{\updateCtx{#1}{#2}{#3}[#4]}
\newcommand{\UpdateCtxX}[4]{\updateCtxX{#1}{#2}{#3}[#4]}
\newcommand{\updateCtx}[3]{\ctx^{#2.#3{=}#1}}
\newcommand{\updateCtxX}[3]{\ctx_x^{#2.#3{=}#1}}
\newcommand{\redex}{\rho}
\newcommand{\congruence}[2]{{#1}\cong{#2}}
\newcommand{\aux}[1]{\textsf{#1}}
\newcommand{\fields}[1]{\aux{fields}(#1)}
\newcommand{\method}[2]{{\aux{meth}(#1,#2)}}
\newcommand{\FV}[1]{\aux{FV}(#1)}
\newcommand{\HB}[1]{\aux{HB}(#1)}
\newcommand{\dom}[1]{\aux{dom}(#1)}
\newcommand{\class}[2]{\aux{class}(#1,#2)}
\newcommand{\extractDec}[2]{\aux{dec}(#1,#2)}
\newcommand{\extractAllDec}[1]{\EZ{\aux{store}}(#1)}
\newcommand{\deriv}{{\cal D}}
\newcommand{\Ctt}{{\tt C}}
\newcommand{\Dtt}{{\tt D}}
\newcommand{\ftt}{{\tt f}}
\newcommand{\xtt}{{\tt x}}
\newcommand{\ytt}{{\tt y}}
\newcommand{\ztt}{{\tt z}}
\newcommand{\zU}{{\tt z1}}
\newcommand{\zD}{{\tt z2}}
\newcommand{\mix}{{\tt mix}}
\newcommand{\clone}{{\tt clone}}
\newcommand{\cU}{{\tt c1}}
\newcommand{\cD}{{\tt c2}}
\newcommand{\cT}{{\tt c3}}
\newcommand{\outC}{{\tt outer}}
\newcommand{\inC}{{\tt inner}}
\newcommand{\restt}{{\tt r}}
\newcommand{\eU}{{\ett^\texttt{i}}}
\newcommand{\eUA}{{\ett^\texttt{ia}}}
\newcommand{\eDA}{{\ett^\texttt{oa}}}
\newcommand{\eD}{{\ett^\texttt{o}}}
\newcommand{\ett}{{\tt e}}
\newcommand{\TypeCheckGround}[3]{\vdash #1:#2}
\newcommand{\IsWellTyped}[1]{\vdash #1}
\newcommand{\decctx}[1]{{\cal D}_{#1}}
\newcommand{\decctxS}[1]{{\cal D}''_{#1}}
\newcommand{\DecctxS}[2]{{\cal D}''_{#1}[#2]}
\newcommand{\decctxP}[1]{{\cal D}'_{#1}}
\newcommand{\Decctx}[2]{{\cal D}_{#1}[#2]}
\newcommand{\DecctxP}[2]{{\cal D}'_{#1}[#2]}
\newcommand{\TypeEnv}[1]{\Gamma_{#1}}
\newcommand{\connected}[3]{#2\stackrel{#1}{\longrightarrow}#3}
\newcommand{\nogarbage}{garbage-free}
\newcommand{\remGarbage}{\EZ{\aux{gc}}}
\newenvironment{myitemize}
               {\begin{itemize}\vspace{-2pt}\topsep0pt\parskip0pt\partopsep0pt\itemsep0pt\leftmargin-100pt\itemsep-1pt\labelwidth0pt\labelsep3pt}
               {\vspace{-1pt}\end{itemize}}
\newenvironment{myitemizeA}
               {\begin{itemize}\vspace{-4pt}\topsep0pt\parskip0pt\partopsep0pt\itemsep0pt\leftmargin-100pt\itemsep-2pt\labelwidth0pt\labelsep3pt}
               {\vspace{-2pt}\end{itemize}}
\newenvironment{mydefinition}
               {%\vspace{-1pt}
               \begin{definition}\vspace{-1pt}
               }
               {\end{definition}}
\newenvironment{myproposition}
               {\vspace{-1pt}
               \begin{proposition}\vspace{-1pt}
               }
               {\end{proposition}}          
\newenvironment{myexample}
               {\vspace{-1pt}
               \begin{example}\vspace{-1pt}
               }
               {\end{example}}
\newcommand{\mux}{\mu\hspace{.015cm}\x}
\newcommand{\ax}{{\tt a}\hspace{.015cm}\x}
\newcommand{\induced}[1]{\EZ{\sharingRel(#1)}}
\newcommand{\erase}[1]{{#1}^-}
\newif\ifsubmit
\newcommand{\TRComm}[1]{}
\newcommand{\EZ}[1]{#1}
\newcommand{\EZComm}[1]{}
\newcommand{\PG}[1]{#1}
\newcommand{\PGComm}[1]{}
\newcommand{\MS}[1]{#1}
\newcommand{\MSComm}[1]{}
\newcommand{\TRComm}[1]{{\scriptsize \textcolor{red}{[Tim{:} #1]}}}
\newcommand{\EZ}[1]{\textcolor{blue}{#1}}
\newcommand{\EZComm}[1]{{\scriptsize \textcolor{blue}{[Elena{:} #1]}}}
\newcommand{\PG}[1]{\textcolor{magenta}{#1}}
\newcommand{\PGComm}[1]{{\scriptsize \textcolor{magenta}{[Paola{:} #1]}}}
\newcommand{\MS}[1]{\textcolor{green}{#1}}
\newcommand{\MSComm}[1]{{\scriptsize \textcolor{green}
{[Marco{:} #1]}}}
\begin{document}

\begin{frontmatter}

\title{Tracing sharing in an imperative pure calculus\\
(extended version)}

\author[unipo]{Paola Giannini}
\ead{giannini@di.unipmn.it}

\author[unipots]{Tim Richter}
\ead{tim@cs.uni-potsdam.de}

\author[univic]{Marco Servetto}   
\ead{marco.servetto@ecs.vuw.ac.nz}

\author[unige]{Elena Zucca}
\ead{elena.zucca@unige.it}

\address[unipo]{Universit\`{a} del Piemonte Orientale, Italy}
\address[unipots]{Universit\"at Potsdam, Germany}
\address[univic]{Victoria University of Wellington, New Zealand}
\address[unige]{Universit\`a di Genova, Italy}

\begin{abstract}
We introduce a type and effect system, for an imperative object calculus, which
infers \emph{sharing} possibly introduced by the evaluation of an expression,
represented as an equivalence relation among its free variables. This direct
representation of sharing effects at the syntactic level allows us to express
in a natural way, and to generalize, widely-used notions in literature, notably
\emph{uniqueness} and \emph{borrowing}. Moreover, the calculus is
\emph{pure} in the sense that reduction is defined on language terms only,
since they directly encode store. The advantage of this non-standard execution
model with respect to a behaviourally equivalent standard model using a global
auxiliary structure is that reachability relations among references are partly
encoded by scoping.   
\end{abstract}

\begin{keyword}
imperative calculi \sep sharing \sep type and effect systems
\MSC[2010] 68N15 \sep  68Q55
\end{keyword}

\end{frontmatter}

%\linenumbers

% !TEX root =ms.tex

\section{Introduction}\label{sect:intro}
In the imperative programming paradigm, \emph{sharing} is the situation when a
portion of the store can be accessed through more than one reference, say $\x$
and $\y$, so that a change to $\x$ affects $\y$ as well. Unwanted sharing
relations are common bugs: unless sharing is carefully maintained, changes
through a reference might propagate unexpectedly, objects may be observed in an
inconsistent state, and conflicting constraints on shared data may
inadvertently invalidate invariants. Preventing such errors is even more
important in increasingly ubiquitous multi-core and many-core architectures. 

For this reason, there is a huge amount of literature on type systems for
controlling sharing and interference, notably using type annotations to
restrict the usage of references, see \refToSection{related} for a survey.
In particular, it is very useful for a programmer to be able to rely on the 
following properties of a reference $\x$.
\begin{myitemize}
  \item \emph{Capsule} reference: $\x$ denotes an isolated portion of store, 
    that is, the subgraph reachable from $\x$ cannot be reached through other 
    references. This allows programmers (and static analysis) to identify state 
    that can be safely handled by a thread.  In this paper we will use the 
    name \emph{capsule} for this property, to avoid confusion with many 
    variants in the literature 
    \cite{ClarkeWrigstad03,Almeida97,ServettoEtAl13a,Hogg91,DietlEtAl07,GordonEtAl12}.
  \item \emph{Lent} reference \cite{ServettoZucca15,GianniniEtAl16}, also 
    called \emph{borrowed} \cite{Boyland01,NadenEtAl12}: the subgraph reachable
    from $\x$ \EZ{can be manipulated by a client, but no sharing can be introduced through $\x$}. Typically,
    borrowing can be employed to ensure that the capsule guarantee is not broken. 
\end{myitemize}

In this paper, we propose a type and effect system which provides, in our
opinion, a very powerful, yet natural, way to express sharing. Notably, the two
above mentioned notions are smoothly included and generalized.

The distinguishing features are the following:
\begin{enumerate}
  \item Rather than declaring type annotations, the type system \emph{infers} 
    sharing possibly introduced by the evaluation of an expression.
  \item Sharing is \emph{directly represented at the syntactic level}, as an 
    equivalence relation among the free variables of the expression.
  \item The calculus is \emph{pure} in the sense that reduction is defined 
    on language terms only, rather than requiring an auxiliary structure.
\end{enumerate}
We now describe these three features in more detail.

Given an expression $\e$, the type system computes a
\emph{sharing relation} $\sharingRel$ which is an equivalence relation on \EZ{a} set
containing the free variables of $\e$ and an additional, distinguished variable $\resV$ denoting the result of
$\e$. That two variables, say $\x$ and $\y$, are in the same equivalence class
in $\sharingRel$ means \EZ{that} the evaluation of $\e$ can possibly introduce sharing
between $\x$ and $\y$, that is, connect their reachable object graphs, so that
a modification of (a subobject of) $\x$ could affect $\y$ as well, or
conversely. For instance, evaluating the expression
$\Sequence{\FieldAssign{\x}{\f}{\y}}{\FieldAccess{\z}{\f}}$
introduces connections
\begin{myitemizeA}
\item between $\x$ and $\y$,
\item between  {$\resV$ (the result)} and $\z$.
\end{myitemizeA}
The \emph{capsule} notion becomes just a special case: an expression is a 
\emph{capsule} iff its result will be disjoint from any free variable
(formally, $\resV$ is a singleton in $\sharingRel$).  For instance, the
expression
$\Sequence{\FieldAssign{\x}{\f}{\y}}{\FieldAccess{\ConstrCall{\C}{\ConstrCall{\D}{}}}{\f}}$
is a capsule, whereas the previous expression is not.\footnote{{Note that our
notion is related to the whole reachable object graph. For instance, a
doubly-linked list whose elements can be arbitrarily aliased can be
externally unique~\cite{ClarkeWrigstad03} and properly follow an owners as
dominator strategy~\cite{ZibinEtAl10}, but is not a capsule.}}

The \emph{lent} notion also becomes a special case: a variable $\x$ is used as
lent in an expression  if the evaluation of the expression will neither connect
$\x$ to any other variable, nor to the result (formally, $\x$ is a singleton in
$\sharingRel$). {In other words, the evaluation of the expression does not
introduce sharing between $\x$ and other variables (including $\resV$).} For
instance $\x$ is lent in
$\Sequence{\FieldAssign{\x}{\f_1}{\x.\f_2}}{\FieldAccess{\z}{\f}}$.  {In our
type system, this notion is generalized from singletons to arbitrary sets of
variables: for instance, in the previous example
$\Sequence{\FieldAssign{\x}{\f}{\y}}{\FieldAccess{\z}{\f}}$,  the set
$\{\x,\y\}$ is an equivalence class in $\sharingRel$ since the evaluation of
the expression does not introduce sharing between this set and other
variables (including $\resV$). 

Altogether, this direct representation at the syntactic level allows us to
express sharing in a natural way. Moreover, execution is modeled by a
  \emph{pure} calculus, where store is encoded directly in language terms,
rather than by an auxiliary structure. Formally, this is achieved by the block
construct, introducing local variable declarations, which play the role of
store when evaluated.  This operational semantics\footnote{Which is, of
course, expected to be behaviorally equivalent to the conventional semantics
where store is a global flat auxiliary structure, as we plan to formally state
and prove in further work.} will be informally introduced in
\refToSection{language}, and formalized in \refToSection{calculus}. 

A preliminary presentation of the approach presented in this paper has been given in \cite{GianniniSZ17,GianniniSZ17a}.

The rest of the paper is organized as follows: in \refToSection{language} we
provide syntax and an informal execution model, in \refToSection{types} the
type system, and in \refToSection{examples} some examples. The operational
semantics of the calculus is presented in \refToSection{calculus}, and the main
results and proofs in \refToSection{results}.  In \refToSection{related} we
discuss related work, and in \refToSection{conclu} we draw some conclusion and
highlight future work. \ref{app:derivation} contains a (rather complex) type
derivation. The proofs omitted from the main paper are in 
\ref{app:proofs}.

% !TEX root =ms.tex

\section{Language}\label{sect:language}

The syntax of the language is given in \refToFigure{syntax}.  We assume sets
of \emph{variables} $\x, \y, \z$, \emph{class names} $\C, \D$, \emph{field
names} $\f$, and \emph{method names} $\m$.  We adopt the convention that a
metavariable which ends in \metavariable{s} is implicitly defined as a
(possibly empty) sequence, {for example}, $\decs$ is defined by
$\produzioneinline{\decs}{\epsilon\mid \dec\ \decs}$, where $\epsilon$ denotes
the empty sequence. 

\begin{figure}[t]
{%\small
\begin{grammatica}
\produzione{\e}{\x\mid\FieldAccess{\e}{\f}\mid\FieldAssign{\e}{\f}{\e'}\mid\ConstrCall{\C}{\es}\mid\Block{\decs}{\e}{\mid\MethCall{\e}{\m}{\es}}}{expression}\\
\produzione{\dec}{\Dec{\T}{\x}{\e}}{declaration}\\
\produzione{\T}{\Type{\mu}{\C}\mid\intType}{declaration type}\\
\produzione{\mu}{\epsilon\mid\capsule}{optional modifier}\\
\end{grammatica}
}
\caption{Syntax}\label{fig:syntax}
\end{figure}

The calculus is designed with an object-oriented flavour, inspired {by}
Featherweight Java \cite{IgarashiEtAl01}. This is only a presentation choice:
all the ideas and results of the paper could be easily rephrased in a
different imperative calculus, e.g., supporting data type constructors and
reference types. For the same reason, we omit features such as inheritance and
late binding, which are orthogonal to our focus.

An expression can be a variable (including the special variable $\this$
denoting the receiver in a method body), a field access, a field assignment, a
constructor invocation, a block consisting of a sequence of declarations and a
body, {or a method invocation. In a block, a} declaration specifies a type, a
variable and an initialization expression. We assume that in well-formed blocks
there are no multiple declarations for the same variable{, that is, $\decs$ can
  be seen as a map from variables to expressions. 

A declaration type is a class name with an optional modifier $\capsule$,
{which, if present, indicates that} the variable is \emph{affine}.  We also
include $\intType$ as an example of primitive type, but we do not formally
model related operators used in the examples, such as integer constants and
sum. An affine variable can occur at most once in its scope, and should be
initialized with a \emph{capsule}, that is, an isolated portion of store. In
this way, it can be used as a temporary reference, to ``move'' a capsule to
another location in the store, without introducing sharing.  In the examples,
we generally omit the brackets of the outermost block, and abbreviate
$\Block{\Dec{\T}{\x}{\e}}{\e'}$ by $\Sequence{\e}{\e'}$ when $\x$ does not
occur free in $\e'$.

{We turn our attention to the operational semantics now.}
\refToFigure{examplered} shows an example of a reduction sequence in the
calculus.

\begin{figure}[t]
\begin{lstlisting}[basicstyle=\ttfamily\scriptsize,backgroundcolor=\color{white}]
$\store{D z=new D(0); C x=new C(z,z);}$ $\ul{C y=x;}$ D w=new D(y.f1.f+1); x.f2=w; x $\longrightarrow$
$\store{D z=new D(0); C x=new C(z,z);}$ D w=new D($\ul{x.f1}$.f+1); x.f2=w; x $\longrightarrow$
$\store{D z=new D(0); C x=new C(z,z);}$ D w=new D($\ul{z.f}$+1); x.f2=w; x $\longrightarrow$
$\store{D z=new D(0); C x=new C(z,z);}$ D w=new D($\ul{0+1}$); x.f2=w; x $\longrightarrow$
$\store{D z=new D(0); C x=new C(z,z); D w=new D(1);}$ $\ul{x.f2=w}$; x $\longrightarrow$
$\store{D z=new D(0); C x=new C(z,w); D w=new D(1);}$ x
\end{lstlisting} 
\caption{Example of reduction}\label{fig:examplered}
\end{figure}

The main idea is to use variable declarations to directly represent the store.
That is, a declared {(non affine)} variable is not replaced by its value, as in
standard \texttt{let}, but the association is kept and used when necessary{, as
it happens, with different aims and technical problems, in cyclic lambda
calculi \cite{AriolaFelleisen97,MaraistEtAl98}.}

In the figure, we emphasize at each step the declarations which can be seen
as the store (in grey) and the redex which is reduced (in a box).

Assuming a program  (class table) where class \lstinline{C} has two fields
\lstinline{f1} and \lstinline{f2} of type \lstinline{D}, and class
\lstinline{D} has an integer field \lstinline{f}, in the initial term in
\refToFigure{examplered} the first two declarations can be seen as a store
which associates to \lstinline{z} an object of class \lstinline{D} whose field
contains \lstinline{0}{}, and to \lstinline{x}{} an object of class
\lstinline{C}{} whose two fields contains (a reference to) the previous object.
The first reduction step eliminates an alias, by replacing occurrences of
\lstinline{y} by \lstinline{x}. The next three reduction steps compute
\lstinline{x.f1.f+1}{}, by performing two field accesses and one sum.  The last
step performs a field assignment, modifying the current store.  Finally, in
the last line we have a term which can no longer be reduced, consisting of a
store and the expression \texttt{x} which denotes a reference in the store,
taken as entry point. In other words, the final result of the evaluation is an
object of class \lstinline{C} whose fields contain (references to) two
objects of class \lstinline{D}, whose fields contain \lstinline{0} and
\lstinline{1}, respectively. 

As usual, references in the store can be mutually recursive\footnote{However,
mutual recursion is not allowed between declarations which are {\emph{not
evaluated}}, e.g., {\lstinline{B x= new B(y.f); B y= new B(x.f); y}} is
ill-formed.}, as in the following example, where we assume a class
\lstinline{B} with a field of type \lstinline{B}.
\begin{lstlisting}
B x= new B(y); B y= new B(x); y 
\end{lstlisting}
Again, this is a term which can no longer be reduced,  consisting of a
store and the reference \texttt{y} as entry point. In other words, this term
can be seen as an object of class \lstinline{B} whose field contains (a
reference to) another object of class \lstinline{B}, whose field contains (a
reference to) the original object.

In the examples until now, store is flat, as it usually happens in models of
imperative languages. However, in our calculus,  we are also able to represent
a hierarchical store, as shown in the example below, where we assume a class
\lstinline{A}{} with two fields of type \lstinline{B} and \lstinline{D},
respectively.
\begin{lstlisting}
D z= new D(0);  
A w= {
  B x= new B(y);   
  B y= new B(x);
  A u= new A(x,z);
  u};  
w
\end{lstlisting}

Here, the store associates to \lstinline{w}{} a block introducing local
declarations, that is, in turn a store.  The advantage of this representation
is that it models in a simple and natural way constraints about sharing among
objects, notably:
\vspace{3pt}
\begin{myitemize}
  \item the fact that an object is not referenced from outside some enclosing 
    object is directly modeled by the block construct: for instance, the 
    object denoted by \lstinline{y}{} can only be reached through \lstinline{w}{}
  \item conversely, the fact that an object does not refer to the outside is 
    modeled by the fact that the corresponding block is closed (that is, has no 
    free variables): for instance, the object denoted by \lstinline{w}{} is not 
    closed, since it refers to the external object \lstinline{z}{}.
\end{myitemize}
In other words, our calculus smoothly integrates memory representation with
shadowing and $\alpha$-conversion.  However, there is a problem which needs to
be handled to keep this representation correct: reading (or, symmetrically,
updating) a field could cause scope extrusion. For instance, the term 
\begin{lstlisting}
C y= {D z= new D(0); C x= new C(z,z); x};  y.f1
\end{lstlisting}
under a naive reduction strategy would reduce to the ill-formed term
\begin{lstlisting}
C y= {D z= new D(0); C x= new C(z,z); x};  z
\end{lstlisting}
To avoid this problem, the above reduction step is forbidden. However,
reduction is not stuck, since we can transform the above term into an
equivalent term where the inner block has been flattened, and get the following
correct reduction sequence:
\begin{lstlisting}
C y= {D z= new D(0); C x= new C(z,z); x}  y.f1 $\congruence{}{}$
D z= new D(0); C x= new C(z,z); C y= x; y.f1 $\longrightarrow$
D z= new D(0); C x= new C(z,z); x.f1 $\longrightarrow$
D z= new D(0); C x= new C(z,z); z $\congruence{}{}$
D z= new D(0); z
\end{lstlisting}
Formally, in addition to the reduction relation which models actual
computation, our operational semantics is defined by a \emph{congruence}
relation $\congruence{}{}$, which captures structural equivalence, as in
$\pi$-calculus \cite{Milner99}.  Note also that in the final term the
declaration of \lstinline{x}{} can be removed (again by congruence), since it
is useless.

Moving declarations from a block to the directly enclosing block is not always
safe. For instance, in the following variant of the previous example 
\begin{lstlisting}
C$^\capsule$ y= {D z= new D(0); C x= new C(z,z); x};  y.f1
\end{lstlisting}
the affine variable is required to be initialized with a capsule, and this is
the case indeed, since the right-hand side of the declaration is a closed
block. However, by flattening the term:
\begin{lstlisting}
D z= new D(0); C x= new C(z,z); C$^\capsule$ y= x; y.f1
\end{lstlisting}
this property would be lost, and we would get an ill-typed term. {Indeed, these
two terms are \emph{not} considered equivalent in our operational model.
Technically, this is obtained by detecting, during typechecking, which local
variables will be connected to the result of the block, as \lstinline{z} in
the example, and preventing to move such declarations from a block which is the
initialization expression of an affine variable. } 

In this case, reduction proceeds by replacing the (unique) occurrence of the
affine variable by its initialization expression, as shown below. 
\begin{lstlisting}
C$^\capsule$ y= {D z= new D(0); C x= new C(z,z); x}  y.f1 $\longrightarrow$
{D z= new D(0); C x= new C(z,z); x}.f1 $\congruence{}{}$
D z= new D(0); C x= new C(z,z); x.f1 $\longrightarrow$
D z= new D(0); C x= new C(z,z); z $\congruence{}{}$
D z= new D(0); z
\end{lstlisting}

% !TEX root =ms.tex

\section{Type system}\label{sect:types}

In this section we introduce the {type and effect} system for the language.  We
use $\X,\Y$ to range over sets of variables.

A {\em sharing relation $\sharingRel$} on a set of variables $X$ is an
equivalence relation on $X$.\PGComm{Removed:, called {\em the domain of $\sharingRel$} and
dubbed $\dom{\sharingRel}$.}  As usual $\Closure{x}{\sharingRel}$ denotes the
{\em equivalence class of $\x$ in $\sharingRel$}.  We will call the
elements $\Pair{\x}{\y}$ of a sharing relation \emph{connections}, and say
that $\x$ and $\y$ are \emph{connected}. The intuitive meaning is that, if
$\x$ and $\y$ are connected, then their reachable graphs in the store are
possibly shared (that is, not disjoint), hence a modification of the
reachable graph of $\x$ could affect $\y$ as well, or conversely.

We use the following notations on sharing relations:
\begin{itemize}
  \item A sequence
    of subsets of $X$, say, $\X_1\ldots\X_n$, represents the smallest
    equivalence relation on $X$ containing the connections $\Pair{\x}{\y}$, for all $\x,\y$ belonging to the same $\X_i$. So, $\epsilon$
    represents the identity relation on any set of variables. \EZ{Note that this representation is deliberately ambiguous as to the
domain of the defined equivalence: any common superset of the $\X_i$
will do.}\EZComm{suggested by Tim}
  \item $\Sum{\sharingRel_1}{\sharingRel_2}$ is the 
    smallest equivalence relation containing $\sharingRel_1$ and $\sharingRel_2$. 
    It is easy to show that sum is commutative and associative. \PG{With $\Sum{\sharingRel}{\X}$ we denote the
    sum of $\sharingRel$ with the sharing relation containing the connections $\Pair{\x}{\y}$, for all $\x,\y\in\X$.}
  \item $\Remove{\sharingRel}{\X}$ is the sharing relation \PGComm{Removed: on 
    $\dom{\sharingRel}\setminus \X$} obtained by ``removing'' $\X$ from
    $\sharingRel$, that is, the smallest equivalence relation containing the
    connections $\Pair{\x}{\y}$, for all $\Pair{\x}{\y}\in\sharingRel$ such
    that $\x,\y\not\in\X$. $\Remove{\sharingRel}{\y}$ stands for
    $\Remove{\sharingRel}{\{\y\}}$. It is easy to see that 
    $\Remove{\sharingRel}{(\X\cup\Y)}=\Remove{(\Remove{\sharingRel}{\X})}{\Y}$.
\item \EZ{$\SubstEqRel{\sharingRel}{\y}{\x}$ is the sharing relation \PGComm{Removed: on $\dom{\sharingRel}\setminus \{\x\}$} obtained by ``replacing'' $\x$ by $\y$ in $\sharingRel$, that is, the smallest equivalence relation containing the  connections:\\
$\Pair{\z}{\z'}$, for all  $\Pair{\z}{\z'}\in\sharingRel$, $\z\neq\x, \z'\neq\x$\\
$\Pair{\y}{\z}$, for all $\Pair{\x}{\z}\in\sharingRel$.}
 \item {\em $\sharingRel_1$ has less (or equal) sharing effects
than $\sharingRel_2$}, dubbed $\Finer{\sharingRel_1}{\sharingRel_2}$, if, for all
$\x$, 
$\Closure{\x}{\sharingRel_1}\subseteq\Closure{\x}{\sharingRel_2}$.
\end{itemize}

The following proposition asserts some properties of sharing relations. 
\begin{proposition}\label{prop:lessSrRel}\
\begin{enumerate}
    \item \label{p1} Let $\x\neq\y$, $\Pair{\x}{\y}\in\sum\limits_{i=1}^{n}\sharingRel_i$ if and only if
    there are sequences $i_1\dots i_{k-1}$ ($1\leq i_h\leq n$ for all $h$) and $\z_1\dots\z_k$ ($k> 1$) such that $\x=\z_1$ and  $\y=\z_k$
    and $\Pair{z_j}{\z_{{j+1}}}\in\sharingRel_{i_{j}}$ and $i_j\neq i_{j+1}$ and $\z_j\neq \z_{j+1}$ for $1\leq j\leq (k-1)$.
  \item \label{p2} $\Finer{\sharingRel_1}{\sharingRel_2}$ implies 
    $\Finer{\Sum{\sharingRel}{\sharingRel_1}}{\Sum{\sharingRel}{\sharingRel_2}}$ for all $\sharingRel$.
  \item \label{p3} $\Finer{\sharingRel_1}{\sharingRel_2}$ implies $\Finer{\Remove{\sharingRel_1}{\X}}{\Remove{\sharingRel_2}{\X}}$ for all $\X$. %\PG{and $\Finer{\SubstEqRel{\sharingRel_1}{\y}{\x}}{\SubstEqRel{\sharingRel_2}{\y}{\x}}$ for all $\x$ and $\y$}.
  \item If \label{p4}$\Remove{\sharingRel_1}{\X}={\sharingRel_1}$, then
   $\Remove{(\Sum{\sharingRel_1}{\sharingRel_2})}{\X}=\Sum{\Remove{\sharingRel_1}{\X}}{\Remove{\sharingRel_2}{\X}}$.
% \item \PG{If $\y\in\Closure{\x}{\sharingRel_1}$ and $\y\in\Closure{\x}{\sharingRel_2}$, then $\SubstEqRel{\sharingRel_1}{\y}{\x}+\SubstEqRel{\sharingRel_2}{\y}{\x}=\SubstEqRel{(\sharingRel_1+\sharingRel_2)}{\y}{\x}$.}
 \item If \label{p5}$\y\in\Closure{\x}{\sharingRel}$, then $\sharingRel[\y/\x]=\Remove{\sharingRel}{\x}$.
\end{enumerate}
Since $\sharingRel+\epsilon=\sharingRel$ and $\Finer{\epsilon}{\sharingRel}$
for all $\sharingRel$, from 2. we have that
$\Finer{\sharingRel}{\Sum{\sharingRel}{\sharingRel'}}$ for all $\sharingRel$
and $\sharingRel'$.
\end{proposition}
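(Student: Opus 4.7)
The plan is to prove part 1 first as the key combinatorial characterization of the sum operation, then apply it to derive the monotonicity facts in parts 2 and 3. Parts 4 and 5 I would handle by unfolding the definitions and exploiting the hypotheses directly. Throughout I would silently use the easy fact that $\Finer{\sharingRel_1}{\sharingRel_2}$ is equivalent to set-inclusion $\sharingRel_1\subseteq\sharingRel_2$ of the pair sets, since $\Pair{\x}{\y}\in\sharingRel_1$ means $\y\in\Closure{\x}{\sharingRel_1}\subseteq\Closure{\x}{\sharingRel_2}$.

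For part 1, the right-to-left direction is immediate by symmetry and transitivity of the equivalence closure. For the left-to-right direction, I would first observe that $\sum_i \sharingRel_i$ is obtained as the reflexive/symmetric/transitive closure of $\bigcup_i \sharingRel_i$, so any non-reflexive $\Pair{\x}{\y}$ in the sum arises from some chain $\x=\z_1,\dots,\z_k=\y$ with each $\Pair{\z_j}{\z_{j+1}}$ in some $\sharingRel_{i_j}$. Then I would normalize this chain in two passes: whenever $\z_j=\z_{j+1}$, drop the step; whenever $i_j=i_{j+1}$, combine the two steps into one using transitivity of $\sharingRel_{i_j}$. Iterating gives a strictly shorter chain each time, so the process terminates with a chain satisfying the zigzag constraints.

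For part 2, given $\Pair{\x}{\y}\in\Sum{\sharingRel}{\sharingRel_1}$ with $\x\neq\y$, apply part 1 to get a zigzag chain using edges in $\sharingRel$ and $\sharingRel_1$. Replacing each $\sharingRel_1$-edge by the same pair viewed as an edge of $\sharingRel_2$ (legal by the set-inclusion form of $\Finer{\sharingRel_1}{\sharingRel_2}$) yields a chain in $\Sum{\sharingRel}{\sharingRel_2}$, hence the pair lies there. Part 3 follows by the same style of argument: a generator of $\Remove{\sharingRel_1}{\X}$ is a pair in $\sharingRel_1$ avoiding $\X$, hence a pair in $\sharingRel_2$ avoiding $\X$, hence in $\Remove{\sharingRel_2}{\X}$; taking equivalence closures preserves inclusion.

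Part 4 is where the combinatorial characterization pays off most clearly. The inclusion $\supseteq$ is straightforward since $\Remove{\sharingRel_i}{\X}\subseteq\sharingRel_i\subseteq\Sum{\sharingRel_1}{\sharingRel_2}$, and the right-hand side contains no pair touching $\X$, so it is contained in $\Remove{(\Sum{\sharingRel_1}{\sharingRel_2})}{\X}$. For $\subseteq$, take $\Pair{\x}{\y}$ in the left-hand side with $\x,\y\notin\X$ and apply part 1 to get a zigzag in $\Sum{\sharingRel_1}{\sharingRel_2}$. The key observation, which I expect to be the main obstacle, is that no intermediate vertex $\z_j$ can lie in $\X$: since $\Remove{\sharingRel_1}{\X}=\sharingRel_1$ forbids $\sharingRel_1$ from containing any pair touching $\X$, both edges incident to such a $\z_j$ would have to come from $\sharingRel_2$, violating the alternation $i_{j-1}\neq i_j$. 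Hence every edge of the zigzag already avoids $\X$ and thus lies in $\Remove{\sharingRel_{i_j}}{\X}$, placing the pair in the right-hand side. Finally, for part 5, I would observe that the generators of $\sharingRel[\y/\x]$ are exactly the generators of $\Remove{\sharingRel}{\x}$ together with pairs $\Pair{\y}{\z}$ for $\Pair{\x}{\z}\in\sharingRel$ with $\z\neq\x$; but such a $\z$ satisfies $\y\sim_{\sharingRel}\x\sim_{\sharingRel}\z$ by the hypothesis $\y\in\Closure{\x}{\sharingRel}$, so $\Pair{\y}{\z}\in\sharingRel$ is itself already a generator of $\Remove{\sharingRel}{\x}$, giving equality of the two equivalence closures.
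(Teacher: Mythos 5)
Your proposal is correct and follows essentially the same route as the paper: part 1 via the transitive-closure characterization (you just spell out the chain-normalization that the paper leaves implicit), parts 2--3 by pairwise inclusion, part 4's hard direction by the same alternation argument showing intermediate vertices avoid $\X$, and part 5 by transitivity through the class of $\x$ (organized at the level of generators rather than the paper's case analysis, but the same idea). No gaps worth noting.
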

\begin{proof}\
\begin{enumerate}
\item From the fact that $\sum\limits_{i=1}^{n}\sharingRel_i$ is the transitive closure of 
$\bigcup_{1\leq i\leq n}\{\Pair{\x}{\y}\ |\ \Pair{\x}{\y}\in\sharingRel_i\}$.
 \item From 1. and  the fact that for all $\z$ and $\z'$, if $\Pair{\z}{\z'}\in{\sharingRel_1}$ then $\Pair{\z}{\z'}\in{\sharingRel_2}$. 
 \item Let $\Pair{\z}{\z'}\in\Remove{\sharingRel_1}{\X}$ with $\z\neq\z'$. Then  $\Pair{\z}{\z'}\in{\sharingRel_1}$ and $\z,\z'\not\in\X$. From
 $\Finer{\sharingRel_1}{\sharingRel_2}$,   then $\Pair{\z}{\z'}\in{\sharingRel_2}$ and so also $\Pair{\z}{\z'}\in\Remove{\sharingRel_2}{\X}$. 
 \\
\PG{Let $\Pair{\z}{\z'}$ be such that  $\Pair{\z}{\z'}\in{\SubstEqRel{\sharingRel_1}{\y}{\x}}$ and $\z\neq\z'$.
Then $\z\neq\x$ and $\z'\neq\x$. If $\Pair{\z}{\z'}\in{{\sharingRel_1}}$, then 
 $\Pair{\z}{\z'}\in{{\sharingRel_2}}$ and so also $\Pair{\z}{\z'}\in{\SubstEqRel{\sharingRel_2}{\y}{\x}}$. 
 If $\Pair{\z}{\z'}\not\in{{\sharingRel_1}}$, then there are pairs $\Pair{\z}{\x}$ and $\Pair{\z'}{\y}$ 
 such that $\Pair{\z}{\x}\in{{\sharingRel_1}}$ and $\Pair{\y}{\z'}\in{{\sharingRel_1}}$. Therefore
 $\Pair{\z}{\x}\in{{\sharingRel_2}}$ and $\Pair{\y}{\z'}\in{{\sharingRel_2}}$ and so 
 $\Pair{\z}{\z'}\in{\SubstEqRel{\sharingRel_2}{\y}{\x}}$.}
 \item 
From  2. and 3. we have that $\Finer{\Remove{\sharingRel_1}{\X}}{\Remove{(\Sum{\sharingRel_1}{\sharingRel_2})}{\X}}$ and $\Finer{\Remove{\sharingRel_2}{\X}}{\Remove{(\Sum{\sharingRel_1}{\sharingRel_2})}{\X}}$. By definition of $+$ we
derive 
$\Finer{\Sum{\Remove{\sharingRel_1}{\X}}{\Remove{\sharingRel_2}{\X}}}{\Remove{(\Sum{\sharingRel_1}{\sharingRel_1})}{\X}}$.\\
We now prove that $\Finer{\Remove{(\Sum{\sharingRel_1}{\sharingRel_2})}{\X}}{\Sum{\Remove{\sharingRel_1}{\X}}{\Remove{\sharingRel_2}{\X}}}$.
Let $\Pair{\x}{\y}\in\Remove{(\Sum{\sharingRel}{\sharingRel'})}{\X}$ with $\x\neq\y$. Then $\Pair{\x}{\y}\in\Sum{\sharingRel}{\sharingRel'}$
and $\x,\y\not\in\X$.
By 1., there are sequences $i_1\dots i_{k-1}$ and $\z_1\dots\z_k$ ($k> 1$) such that $\x=\z_1$ and  $\y=\z_k$
and $\Pair{z_j}{\z_{{j+1}}}\in\sharingRel_{i_{j}}$ and $i_j\neq i_{j+1}$ and $\z_j\neq \z_{j+1}$ for $1\leq j\leq (k-1)$.
The fact $i_j\neq i_{j+1}$ implies that the sequence $i_1\dots i_{k-1}$ alternates between $1$ and $2$. So for 
any $j$, $2\leq j\leq (k-1)$,  $\Pair{z_{j-1}}{\z_{{j}}}\in\sharingRel_{1}$ or 
$\Pair{z_{j}}{\z_{{j+1}}}\in\sharingRel_{1}$. Since $\sharingRel_{1}=\Remove{\sharingRel_{1}}{\X}$, in either cases 
$\z_j \not\in\X$. So no element of $\z_1\dots\z_k$ is in $\X$, thus for any $j$, $1\leq j\leq (k-1)$,
$\Pair{z_{j}}{\z_{{j+1}}}\in\sharingRel_{i_j}$ implies $\Pair{z_{j}}{\z_{{j+1}}}\in\Remove{\sharingRel_{i_j}}{\X}$.
By 1. we have that  $\Pair{\x}{\y}\in\Sum{\Remove{\sharingRel_1}{\X}}{\Remove{\sharingRel_2}{\X}}$.
\item Let $\Pair{\z}{\z'}\in{\Remove{\sharingRel}{\x}}$ with $\z\neq\z'$, if $\z\neq\x$ and $\z'\neq\x$, then 
$\Pair{\z}{\z'}\in\sharingRel[\y/\x]$. So $\Remove{\sharingRel}{\x}\subseteq\sharingRel[\y/\x]$. \\
To show $\sharingRel[\y/\x]\subseteq\Remove{\sharingRel}{\x}$, first observe that there cannot be $\Pair{\z}{\z'}\in\sharingRel[\y/\x]$ such that $\z\neq\z'$ and
either $\z=\x$ or $\z'=\x$.\\ 
Let $\Pair{\z}{\z'}\in\sharingRel[\y/\x]$ with $\z\neq\z'$. If $\z\neq\y$ and $\z'\neq\y$, then, by definition of $\sharingRel[\y/\x]$
we get $\Pair{\z}{\z'}\in\Remove{\sharingRel}{\x}$. If $\z=\y$ there are
2 cases: either $\Pair{\y}{\z'}\in\sharingRel$, or $\Pair{\x}{\z'}\in\sharingRel$. In the first case $\Pair{\y}{\z'}\in\Remove\sharingRel\x$.
In the second, from $\y\in\Closure{\x}{\sharingRel}$ we get that $\Pair{\x}{\z'}\in\sharingRel$ implies 
$\Pair{\y}{\z'}\in\sharingRel$ and so also $\Pair{\y}{\z'}\in\Remove\sharingRel\x$. Similar if  $\z'=\y$. Therefore $\sharingRel[\y/\x]\subseteq\Remove{\sharingRel}{\x}$.
\end{enumerate}
\end{proof}

The typing judgment has shape 
\begin{center}
$\TypeCheckAnnotate{\Gamma}{\e}{\T}{\sharingRel}{\e'}$
\end{center}
where $\Gamma$ is a \emph{type environment}, that is, an assignment of types
to variables, written $\TypeDec{\x_1}{\T_1},\ldots,\TypeDec{\x_1}{\T_n}$, \EZ{$\T$ is a type\footnote{\EZ{Note that types of shape $\Type{\capsule}{\C}$ only occur as declaration types and in $\Gamma$, whereas they are never assigned to expressions. However, we use the same metavariable for simplicity.}}}, {$\sharingRel$ is a sharing relation, and $\e'$ is an \emph{annotated expression}.
\PG{The sharing relation $\sharingRel$ is defined on the variables in $\Gamma$ plus a distinguished variable $\resV$  for ``result''.} 
The intuitive
meaning is that $\sharingRel$ represents the connections among the free
variables of $\e$ possibly introduced by the evaluation of the expression, and
the variables in $\Closure{\resV}{\sharingRel}$ are the ones that {will be}
possibly connected to the result of the expression.  
 
We write $\IsCapsule{\sharingRel}$ if
$\Closure{\resV}{\sharingRel}=\{\resV\}$. If $\TypeCheckAnnotate{\Gamma}{\e}{\T}{\sharingRel}{\e'}$ with $\IsCapsule{\sharingRel}$, then the expression $\e$
denotes a \emph{capsule}, that is, reduces to an isolated portion of store\EZ{, as will be formally shown in \refToSection{results} (\refToTheorem{capsule})}. An
affine variable will never be connected to another, nor to $\resV$, since it is
initialized with a capsule and used only once. Analogously, a variable of a
primitive type will never be connected to another.

Moreover, during typechecking expressions are annotated. The syntax of {\em
annotated expressions} is given by:
\begin{center}
$
\begin{array}{lcl}
\e& ::= & \x\mid\FieldAccess{\e}{\f}\mid{\MethCall{{\e}}{\m}{{\e_1}, \ldots, {\e_n}}}\mid\FieldAssign{\e}{\f}{\e'}\mid\ConstrCall{\C}{\es}\mid\BlockLab{\decs}{\e}{\X}
\end{array}
$
\end{center}
\PG{where $\X\subseteq\dom{\decs}$.} We use the same metavariable of source expressions for simplicity.  As we
can see, the only difference is that blocks are annotated by {a} set $\X$ of variables. 
In an annotated block obtained as output of typechecking, $\X$ will be the local variables declared in the block
(possibly) connected with the result of the body, see rule \rn{t-block}.  Such
annotations, as we will see in the next section, are used to define
the congruence relation among terms. \EZ{Notably, we can move local store from a block to
the directly enclosing block, or conversely, as it happens with rules for
\emph{scope extension} in the $\pi$-calculus \cite{Milner99}. 
However, this is not allowed if such block initializes an affine variable declaration, and we would move outside
variables} possibly connected to the result of the block. Indeed, this would make the term ill-typed, as shown in the last
example of \refToSection{language}.

The class table is abstractly modeled by the following functions:
\begin{itemize}
  \item $\fields{\C}$ gives, for each declared class $\C$, the sequence 
    $\Field{\T_1}{\f_1}\ldots\Field{\T_n}{\f_n}$ of its fields 
    declarations, with $\T_i$ either class name or primitive type.\footnote{That 
    is, the $\capsule$ modifier, denoting a temporary reference, makes no sense for fields.}
  \item $\method{\C}{\m}$ gives, for each method $\m$ declared in class $\C$, the tuple \\
    $\Method{\ReturnTypeNew{\T}{\sharingRel}}{\mu}{\Param{\T_1}{\x_1}\ldots\Param{\T_n}{\x_n}}{\e}$ 
    consisting of its return type paired with the resulting sharing relation, optional 
    $\capsule$ modifier for $\this$, parameters, and body.
\end{itemize} 
We assume a well-typed class table, that is, method bodies are expected to be
well-typed with respect to method types. Formally, if \\
$\method{\C}{\m}=\Method{\ReturnTypeNew{\T}{\sharingRel}}{\mu}{\Param{\T_1}{\x_1}\ldots\Param{\T_n}{\x_n}}{\e}$,
then it should be 
\begin{itemize}
  \item $\TypeCheckAnnotate{\Gamma}{\e}{\T}{\sharingRel}{\e'}$, with
  \item  $\Gamma=\TypeDec{\this}{\Type{\mu}{\C},\TypeDec{\x_1}{\T_1},\ldots,\TypeDec{\x_n}{\T_n}}$.
\end{itemize} 

{Note that the $\sharingRel$ effects in the return type of a method can be
inferred by typechecking the body for a non-recursive method.  Recursion
could be handled by a global fixed-point inference to find effects across
methods. Alternatively, and also to support interfaces, (some) effect
annotations in method return types could be supplied by the programmer,
likely in a simpler form, e.g., using the \emph{capsule} modifier. In this
case, typechecking the body should check conformance to its declared
interface. Still, the fixed-point inference scheme would be useful in porting
over code-bases, and might help to identify how effective the type system is in
practice. We leave this matter to further work.

The typing rules are given in \refToFigure{typing}.

\begin{figure}[t]
\begin{small}
\begin{math}
\begin{array}{l}
\NamedRule{t-var}{}{\TypeCheckAnnotate{\Gamma}{\x}{\C}{\{\x,\resV\}}{\x}}{
\Gamma(\x)=\C
}\Space
\NamedRule{t-affine-var}{}{\TypeCheckAnnotate{\Gamma}{\x}{\C}{\epsilon}{\x}}{
\Gamma(\x)=\T\\
\T=\Type{\capsule}{\C}\mid\intType
}
\\[5ex]
\NamedRule{t-field-access}{\TypeCheckAnnotate{\Gamma}{\e}{\C}{\sharingRel}{\e'}}{\TypeCheckAnnotate{\Gamma}{\FieldAccess{\e}{\f_i}}{\T_i}{\sharingRel}{\FieldAccess{\e'}{\f_i}}}{
\fields{\C}=\Field{\T_1}{\f_1}\ldots\Field{\T_n}{\f_n}\\
i\in 1..n}
\\[4ex]
\NamedRule{t-field-assign}{
  \TypeCheckAnnotate{\Gamma}{\e_1}{\C}{\sharingRel_1}{\e'_1}
  \BigSpace
  \TypeCheckAnnotate{\Gamma}{\e_2}{\T_i}{\sharingRel_2}{\e'_2}
 }{
  \TypeCheckAnnotate{\Gamma}{\FieldAssign{\e_1}{\f_i}{\e_2}
  }{\MS{\T_i}}{\Sum{\sharingRel_1}{\sharingRel_2}}{\FieldAssign{\e'_1}{\f_i}{\e'_2}}
  }
  {\fields{\C}=\Field{\T_1}{\f_1}\ldots\Field{\T_n}{\f_n}\\
  i\in 1..n
    }
\\[4ex]
\NamedRule{t-new}{ \TypeCheckAnnotate{\Gamma}{\e_i}{\T_i}{\sharingRel_i}{\e'_i}
 \BigSpace
1{\leq}i{\leq}n}
{
\TypeCheckAnnotate{\Gamma}{\ConstrCall{\C}{\e_1,\ldots,\e_n}}
{\C}
{\sum\limits_{i=1}^{n}\sharingRel_i}
{\ConstrCall{\C}{\e'_1,\ldots,\e'_n}}
}
{
\fields{\C}{=}\Field{\T_1}{\f_1}\ldots\Field{\T_n}{\f_n}
}
\\[4ex]
\NamedRule{t-block}{
\begin{array}{l}
\TypeCheckAnnotate{\SubstFun{\Gamma}{\Gamma'}}{\e_i}{\T_i}{\sharingRel_i}{\e'_i}\Space 1{\leq}i{\leq}n\\
\TypeCheckAnnotate{\SubstFun{\Gamma}{\Gamma'}}{\e}{\T}{\sharingRel'}{\e'}
\end{array}
}
{
\begin{array}{l}
\TypeCheckAnnotate{\Gamma}{\Block{\DecP{\T_1}{\x_1}{\e_1}\ldots\DecP{\T_n}{\x_n}{\e_n}}{\e}}{\T}
{{\Remove{\sharingRel}{\dom{\Gamma'}}}}{}\\
\quad\quad\BlockLab{\DecP{\T_1}{\x_1}{\e'_1}\ldots\DecP{\T_n}{\x_n}{\e'_n}}{\e'}
{{\Closure{\resV}{\sharingRel}}\cap\dom{\Gamma'}}
\end{array}
}
{
\begin{array}{l}
\Gamma'=\TypeDec{\x_1}{\T_1},\ldots,\TypeDec{\x_n}{\T_n}\\
\forall {1{\leq}i{\leq}n}\ \  \T_i{=}\Type{\capsule}{\C_i}\Longrightarrow\\
\ \ \ {\IsCapsule{\sharingRel_i}}\wedge \x_i\ \mbox{affine}\\
\EZ{\sharingRel'_i=\SubstEqRel{\sharingRel_i}{\x_i}{\resV}}\\
%\sharingRel'_i=\Remove{(\sharingRel_i+\{\x_i,\resV\})}{\resV}\\
\EZ{\sharingRel=\Sum{\sum\limits_{i=1}^{n}\sharingRel'_i}{\sharingRel'}}
\end{array}
}
\\[9ex]
\NamedRule{{t-invk}}{
\begin{array}{l}
\TypeCheckAnnotate{\Gamma}{\e_0}{\C}{\sharingRel_0}{\e'_0}\\
\TypeCheckAnnotate{\Gamma}{\e_i}{\T_i}{\sharingRel_i}{\e'_i}\BigSpace
0{\leq}i{\leq}n
\end{array}
}
{\begin{array}{l}
\TypeCheckAnnotate{\Gamma}{\MethCall{\e_0}{\m}{\e_1,\ldots,\e_n}}{\T}{\sharingRel}{}\\
\quad\quad\quad\quad\quad\quad{\MethCall{{\e'_0}}{\m}{{\e'_1},\ldots,{\e'_n}}}
\end{array}}
{\begin{array}{l}
\method{\C}{\m}{=}\Method{\ReturnTypeNew{\T}{\sharingRel'}}{\mu}{\Param{\T_1}{\x_1}\ldots\Param{\T_n}{\x_n}}{\e}\\
\mu=\capsule\Longrightarrow\IsCapsule{\sharingRel_0}\\
\forall {1{\leq}i{\leq}n}\ \T_i{=}\Type{\capsule}{\C_i}\Longrightarrow{\IsCapsule{\sharingRel_i}}\\
\EZ{\sharingRel'_0=\SubstEqRel{\sharingRel_0}{\this}{\resV}}\BigSpace\EZ{\sharingRel'_i=\SubstEqRel{\sharingRel_i}{\x_i}{\resV}}\\
%\sharingRel'_0=\Remove{(\sharingRel_0+\{\this,\resV\})}{\resV}\\
%\sharingRel'_i=\Remove{(\sharingRel_i+\{\x_i,\resV\})}{\resV}\quad 1\leq i\leq n\\
\sharingRel=\Remove{(\Sum{\sum\limits_{i={0}}^{n}\sharingRel'_i}{\sharingRel'})}{\{\this,\x_1,\ldots,\x_n\}}
\end{array}}
\\[11ex]
\end{array}
\end{math}
\end{small}
\caption{Typing rules}\label{fig:typing}
\end{figure}

In rule \rn{t-var}, the evaluation of a variable (if neither affine nor of
a primitive type)  connects the result of the expression with the variable
itself.  In rule {\rn{t-affine-var}}, the evaluation of an affine variable does
not introduce any connection, so the resulting sharing relation is the identity
relation. Indeed, affine variables are temporary references and will be
substituted with capsules. The same happens for variables of primitive
types.\\
In rule \rn{t-field-access}, the connections introduced by a field access are
those introduced by the evaluation of the receiver.\\
In rule \rn{t-field-assign}, the connections introduced by a
field assignment are those introduced by the evaluation of the two expressions
($\sharingRel_1$ and $\sharingRel_2$). Since both $\sharingRel_1$ and
$\sharingRel_2$ contain the variable $\resV$, the equivalence class of this
variable in the resulting sharing relation is, as expected, the (transitive
closure of the) union of the two equivalence classes. For instance, given the
assignment $\FieldAssign{\e}{\f}{\e'}$, if the evaluation of $\e$ connects $\y$
with $\z$ and $\x$ with its result, and the evaluation of $\e'$ connects $\y'$
with $\z'$ and $\x'$ with its result, then the evaluation of the field
assignment connects $\y$ with $\z$, $\y'$ with $\z'$,  and {both $\x$ and $\x'$
with the result}. \\
In rule \rn{t-new}, the connections  introduced by a constructor
invocation are those introduced by the evaluation of the arguments. As for
field assignment, the equivalence class of $\resV$ in the resulting sharing
relation is, as expected, the (transitive closure of the) union of the
equivalence classes of $\resV$ in the sharing relations of the arguments of
the constructor. \\
In rule \rn{t-block}, the initialization expressions and the body of the block
are typechecked in the current type {environment}, enriched by the association
to local variables of their declaration types.  We denote by
$\SubstFun{\Gamma}{\Gamma'}$ the type environment which to a variable $\x$ assigns 
  $\Gamma'(\x)$ if this is defined, and $\Gamma(\x)$ otherwise. If a local variable is
affine, then its initialization expression is required to denote a capsule. Moreover, the variable can
occur at most once in its scope, as abbreviated by the side condition ``$\x_i$
affine''.\footnote{{In our case the affinity requirement can be simply
expressed as syntactic well-formedness condition, rather than by context rules,
as in \EZ{affine} type systems.}} The connections 
introduced by a block are obtained modifying those introduced by the evaluation
of the initialization expressions {($\sharingRel_i, 1{\leq}i{\leq}n$)} plus
those introduced by the evaluation of the body {$\sharingRel'$}. More
precisely\EZ{, for each declared variable, the connections of the result of the initialization expression are transformed in connections to the variable itself. Finally, we remove from the resulting sharing relation the local variables.}
%\begin{itemize}
%  \item For each declared variable, the result of its
%    initialization expression is connected the variable itself ($\{\x_i,\resV\}$). 
%  \item No sharing should be introduced among the results of the evaluation of 
%    the initialization expressions, and the result of the block. Formally, we 
%    remove from the corresponding sharing relations the variable 
%    $\resV$ ($\Remove{\sharingRel_i}{\resV}$).
%  \item Finally, we remove from the resulting sharing relation the local 
%    variables.
%\end{itemize}
The block is annotated with the subset of local variables which are in the
sharing relation $\sharingRel$ with the result of the block.\\
In rule \rn{t-invk}, the typing of $\MethCall{\e_0}{\m}{\e_1,\ldots,\e_n}$ is
similar to the typing of the block
%\begin{center}
$\Block{\Dec{\EZ{\Type{\mu}{\C}}}{\this}{\e_0}\,\Dec{\T_1}{\x_1}{\e_1}\ldots\Dec{\T_n}{\x_n}{\e_n}}{\e}$.
%\end{center}
%\EZComm{I would remove the following sentence: I think it is useful for us, but confusing for a reader}
%However, while in a block local variable declarations can refer to each other,
%the receiver $\e_0$ and the arguments $\e_i$ ($1{\leq}i{\leq}n$) do not refer
%to $\this$ and the formal parameters, hence the sharing effects among them are
%only those caused by the method body $\e$. (We assume that local variables
%$\x_i$ ($1{\leq}i{\leq}n$) and $\this$ are not in the domain of $\Gamma$.) 
For
instance, assume that method $\m$ has parameters $\x$ and $\y$, and the
evaluation of its body possibly connects $\x$ with $\this$, and $\y$ with the
result, i.e., the sharing relation associated to the method is
$\sharingRel'=\{\x,\this\}\{\y,\resV\}$. Then, the evaluation of the method
call $\MethCall{\z}{\m}{\x',\y'}$,  possibly connects $\x'$ with $\z$, and
$\y'$ with the result of the expression, i.e., has sharing effects
$\{\x',\z\}\{\y',\resV\}$.

Finally, note that primitive types are used in the standard way. For
instance, in the premise of rule \rn{t-new} the types of constructor
arguments could be primitive types, whereas in rule \rn{t-meth-call} the type
of receiver could not. 

\PG{The following proposition formalizes some properties of the \EZ{typing judgment}. Notably, 
if two different variables are 
in sharing relation, then they must have a reference type and cannot be affine. This is
true also for variables in sharing relation with the result of an expression.}
So affine
variables are always singletons in the sharing relation. In the following proposition we omit the annotations of  terms, 
which are irrelevant.

\PGComm{If we have time could be simplified!}
\begin{proposition}\label{prop:invTyping1}
Let $\TypeCheck{\Gamma}{\e}{\D}{\sharingRel}$. If $\Pair{\x}{\y}\in\sharingRel$ and $\x\neq\y$, then 
\begin{itemize}
\item if $\x\neq\resV$ and $\y\neq\resV$, then $\Gamma(\x)=\PG{\C}$ and $\Gamma(\y)=\PG{\C'}$ (for some $\C$ and $\C'$) and $\EZ{\x,\y\in}\FV{\e}$.
\item if $\y=\resV$ or $\x=\resV$, then $\Gamma(\x)={\C}$ or $\Gamma(\y)={\C}$ (for some $\C$) and $\x\in\FV{\e}$ or $\y\in\FV{\e}$.
\end{itemize}
\end{proposition}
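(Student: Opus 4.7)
The plan is to prove both clauses simultaneously as a single invariant, by induction on the derivation of $\TypeCheck{\Gamma}{\e}{\D}{\sharingRel}$: every non-trivial pair $\Pair{\x}{\y}\in\sharingRel$ has both endpoints either equal to $\resV$ or assigned a bare class type by $\Gamma$ and free in $\e$. The base case \rn{t-var} yields $\{\x,\resV\}$, whose unique non-trivial pair is $\Pair{\x}{\resV}$, and the invariant is satisfied by the side condition $\Gamma(\x)=\C$ and by $\x\in\FV{\x}$; the base case \rn{t-affine-var} produces the identity relation, so the implication holds vacuously.

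For the compositional rules \rn{t-field-access}, \rn{t-field-assign} and \rn{t-new}, the workhorse is \refToProp{lessSrRel}(1), which presents any non-trivial pair in a sum $\sum_i\sharingRel_i$ as a chain $\z_1,\ldots,\z_k$ whose consecutive links $\Pair{\z_j}{\z_{j+1}}$ lie in pairwise distinct summands. Each link is a non-trivial pair in one premise's sharing relation, so the IH pins down its endpoints; stitching the chain together at $\x=\z_1$ and $\y=\z_k$ gives the conclusion, using the fact that $\FV$ of each subexpression is included in $\FV$ of the whole expression. Rule \rn{t-field-access} is the degenerate case where both $\sharingRel$ and $\FV$ are unchanged.

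The substantive cases are \rn{t-block} and \rn{t-invk}. In the block rule the output relation is $\Remove{\sharingRel}{\dom{\Gamma'}}$ with $\sharingRel=\Sum{\sum_i\SubstEqRel{\sharingRel_i}{\x_i}{\resV}}{\sharingRel'}$; a surviving non-trivial pair $\Pair{\x}{\y}$ has $\x,\y\notin\dom{\Gamma'}$ and still belongs to $\sharingRel$, so the chain characterisation applies. Each chain link is a pair either in $\sharingRel'$ or in some $\SubstEqRel{\sharingRel_i}{\x_i}{\resV}$; by the definition of substitution, the latter is obtained from a pair of $\sharingRel_i$ by renaming $\resV$ to $\x_i$ at zero, one, or both endpoints. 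Applying the IH to each premise (whose environment is $\SubstFun{\Gamma}{\Gamma'}$) shows that every intermediate node in the chain is either a local $\x_i$ or a variable with bare class type in $\SubstFun{\Gamma}{\Gamma'}$, free in the corresponding subexpression. Because $\x$ and $\y$ are not in $\dom{\Gamma'}$, their typings in $\SubstFun{\Gamma}{\Gamma'}$ agree with those in $\Gamma$, and the link adjacent to each of them forces it to be free in the relevant $\e_i$ or in $\e$, hence free in the whole block. Rule \rn{t-invk} is analogous, with $\this$ playing the role of an additional parameter initialised by $\e_0$.

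The main obstacle I foresee is the bookkeeping in \rn{t-block} and \rn{t-invk}: I need to verify carefully that the substitution $\SubstEqRel{\sharingRel_i}{\x_i}{\resV}$ transports the IH correctly, namely that the IH's conclusion about a pair's non-$\resV$ endpoint survives its possible renaming into the local $\x_i$, and that the non-locality of $\x$ and $\y$ really does force the adjacent links to witness their free occurrence in some subexpression. Once this is checked, the chain characterisation does the rest and no new tools beyond \refToProp{lessSrRel} are required.
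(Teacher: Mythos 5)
Your proposal is correct and follows essentially the same route as the paper's proof: induction on the typing derivation, with the chain decomposition of \refToProp{lessSrRel}(1) used to transfer the inductive hypothesis across sums, and the same analysis of how $\SubstEqRel{\sharingRel_i}{\x_i}{\resV}$ relates to $\sharingRel_i$ in the \rn{t-block}/\rn{t-invk} cases before removing the local variables. The bookkeeping you flag as the remaining obstacle is exactly what the paper's proof carries out, so no new idea is missing.
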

\begin{proof}
The proof is by induction on the type derivation $\TypeCheck{\Gamma}{\e}{\EZ{\D}}{\sharingRel}$. 
Consider the last \EZ{typing} rule used in the type derivation.\\
\underline{Rule \rn{T-Var}}. In this case $\e=\x$, $\Gamma(\x)=\D$, and $\FV{\e}=\{\x\}$. The only non trivial equivalence 
class is $\{\x,\resV\}$. Therefore the result holds. \\
\underline{Rule \rn{T-Affine-Var}}. In this case $\e=\x$ and $\sharingRel$ is the identity. Therefore there is no $\Pair{\x}{\y}\in\sharingRel$ such that $\x\neq\y$ and the result  holds trivially. \\
\underline{Rule \rn{T-Field-Access}}. In this case $\e=\FieldAccess{\e_1}{\f}$ and the result derives by induction hypothesis on $\e_1$.\\
\underline{Rule \rn{T-Field-Assign}}. In this case $\e=\FieldAssign{\e_1}{\f_i}{\e_2}$ and $\sharingRel=\sharingRel_1+\sharingRel_2$
and $\TypeCheck{\Gamma}{\e_i}{\C}{\sharingRel_i}$ for $i=1,2$. By induction hypotheses, we have that 
if $\Pair{\z}{\z'}\in\sharingRel_i$ and $\z\neq\z'$
 \begin{enumerate} [(1)]
      \item if $\z\neq\resV$ and $\z'\neq\resV$, then $\Gamma(\z)=\C$ and $\Gamma(\z')=\C'$ (for some $\C$ and $\C'$) and $\{\z,\z'\}\subseteq\FV{\e_h}$
  for $1{\leq}h{\leq}2$
      \item if either $\z=\resV$ or $\z'=\resV$, then $\Gamma(\z')={\C}$ or $\Gamma(\z)={\C}$ (for some $\C$) and  $\z'\in\FV{\e_h}$ 
      or $\z\in\FV{\e_i}$
       for $1{\leq}h{\leq}2$
\end{enumerate}
%By definition of free variables $\FV{\e}=\FV{\e_1}\cup\FV{\e_2}$.
If $\Pair{\x}{\y}\in\sharingRel$ and $\x\neq\y$, then, by \refToProp{lessSrRel}.\ref{p1}, there are sequences $i_1\dots i_{k-1}$ and $\z_1\dots\z_k$ ($k> 1$) such that $\x=\z_1$ and  $\y=\z_k$
and $\Pair{z_j}{\z_{{j+1}}}\in\sharingRel_{i_{j}}$ and $i_j\neq i_{j+1}$ and $\z_j\neq \z_{j+1}$ for $1\leq j\leq (k-1)$.
The fact $i_j\neq i_{j+1}$ implies that the sequence $i_1\dots i_{k-1}$ alternates between $1$ and $2$. 
So for 
any $j$, $1\leq j\leq (k-1)$,  $\Pair{z_j}{\z_{{j+1}}}\in\sharingRel_{1}$ or 
$1\leq j\leq (k-1)$,  $\Pair{z_j}{\z_{{j+1}}}\in\sharingRel_{2}$. 
In both cases, by inductive hypotheses (1) and (2) on $\sharingRel_1$ or  $\sharingRel_2$
we have that for all $i$, $1\leq i\leq (k-1)$
\begin{enumerate} [(a)]\addtocounter{enumi}{2}
  \item if $\z_i\neq\resV$ and $\z_{i+1}\neq\resV$, then $\Gamma(\z_i)=\C$ and $\Gamma(\z_{i+1})=\C'$ (for some $\C$ and $\C'$) and $\{\z_i,\z_{i+1}\}\subseteq\FV{\e_h}$ ($h=1$ or $h=2$)
  \item if $\z_{i+1}=\resV$ or $\z_{i}=\resV$, then $\Gamma(\z_i)=\C$ or $\Gamma(\z_{i+1})=\C$ (for some $\C$) and $\z_i\in\FV{e_h}$
  or $\z_{i+1}\in\FV{e_h}$ ($h=1$ or $h=2$)
\end{enumerate} 
By transitivity of equality we have that, for all $i$, $1\leq i\leq k$,
if $\z_i\neq\resV$, then $\Gamma(\z_i)=\C$ and if there is $j$, $1\leq j\leq k$,
such that $\z_j=\resV$ also $\C=\D$. Morever $\{\z_i\ |\ \z_i\neq\resV\ \ 1\leq i\leq k\}\subseteq\FV{\e_1}\cup\FV{\e_2}=\FV{\e}$. 
Therefore the result holds.\\
\PG{\underline{Rule \rn{T-Block}}. In this case $\e=\Block{\DecP{\T_1}{\x_1}{\e_1}\ldots\DecP{\T_n}{\x_n}{\e_n}}{\e_0}$. \\
Let
$\Gamma'=\SubstFun{\Gamma}{\TypeDec{\x_1}{\T_1},\ldots,\TypeDec{\x_n}{\T_n}}$ we have that
 \begin{enumerate} [(a)]
      \item $\sharingRel=\Remove{({\sum\limits_{i=0}^{n}\sharingRel'_i})}{\X}$   where $\X=\dom{\Gamma'}$  
       \item $\sharingRel'_i=\SubstEqRel{\sharingRel_i}{\x_i}{\resV}$ ($1{\leq}i{\leq}n$)
      \item $\TypeCheck{\Gamma'}{\e_i}{\T_i}{\sharingRel_i}$ ($1\leq i\leq n$) 
      \item $\TypeCheck{\Gamma'}{\e_0}{\T}{\sharingRel'_0}$             
      \item if $\T_i{=}\Type{\capsule}{\C_i}$, then  $\Closure{\resV}{\sharingRel_i}=\{\resV\}$ ($1{\leq}i{\leq}n$) 
\end{enumerate}
By induction hypotheses on (c), we have that for all $i$, $1\leq i\leq n$,
if $\Pair{\z}{\z'}\in\sharingRel_i$ and $\z\neq\z'$ 
 \begin{enumerate} [(1)]%\addtocounter{enumi}{2}
       \item if $\z\neq\resV$ and $\z'\neq\resV$, then $\Gamma'(\z)=\C$ and $\Gamma'(\z')=\C'$ (for some $\C$ and $\C'$) and $\{\z,\z'\}\subseteq\FV{\e_i}$
      \item if either $\z=\resV$ or $\z'=\resV$, then $\Gamma'(\z')={\C}$ or $\Gamma'(\z)={\C}$ (for some $\C$) and  $\z'\in\FV{\e_i}$ 
      or $\z\in\FV{\e_i}$
\end{enumerate}
By induction hypotheses on (d), we have that 
if $\Pair{\z}{\z'}\in\sharingRel'_0$ and $\z\neq\z'$ 
 \begin{enumerate} [(1)]\addtocounter{enumi}{2}
       \item if $\z\neq\resV$ and $\z'\neq\resV$, then $\Gamma'(\z)=\C$ and $\Gamma'(\z')=\C'$ (for some $\C$ and $\C'$) and $\{\z,\z'\}\subseteq\FV{\e_0}$
      \item if either $\z=\resV$ or $\z'=\resV$, then $\Gamma'(\z')={\C}$ or $\Gamma'(\z)={\C}$ (for some $\C$) and  $\z'\in\FV{\e_0}$ 
      or $\z\in\FV{\e_0}$
\end{enumerate}
Observe that if for all $i$, $1\leq i\leq n$ if $\z\neq\z'$,  if $\z\neq\z'$ and $\Pair{\z}{\z'}\in\sharingRel'_i$, then $\z,\z'\neq\resV$ and
either $\Pair{\z}{\z'}\in\sharingRel_i$ or $\Pair{\z}{\x_i}\in\sharingRel_i$ and $\Pair{\z'}{\resV}\in\sharingRel_i$.
Therefore by (1) and (2) we have that
\begin{enumerate} [(1)]\addtocounter{enumi}{4}
\item if $\z\neq\z'$ and $\Pair{\z}{\z'}\in\sharingRel'_i$, then $\z,\z'\neq\resV$ and $\Gamma'(\z)=\C$ and
$\Gamma'(\z)=\C'$ (for some $\C$ and $\C'$) and $\{\z,\z'\}\subseteq\FV{\e_i}$
\end{enumerate}
 Let $\x\neq\y$ and $\Pair{\x}{\y}\in\sharingRel$, then ${\Pair{\x}{\y}\in\sum\limits_{i=0}^{n}\sharingRel'_i}$ and
 $\x,\y\not\in\X$.
By \refToProp{lessSrRel}.\ref{p1}, 
there are sequences $i_1\dots i_{k-1}$ \PG{($0\leq i_h\leq n$ for all $h$)} and $\z_1\dots\z_k$ ($k> 1$) such that $\x=\z_1$ and  $\y=\z_k$
 \begin{enumerate} [(A)]\addtocounter{enumi}{1}
\item $\Pair{z_j}{\z_{{j+1}}}\in\sharingRel'_{i_{j}}$ and $i_j\neq i_{j+1}$ and $\z_j\neq \z_{j+1}$ for $1\leq j\leq (k-1)$.
 \end{enumerate}
 By (5) and (3) (the induction hypothesis on $\sharingRel'_{0}$) for all $i$, $1\leq i\leq k-1$ 
 \begin{enumerate} [(A)]%\addtocounter{enumi}
  \item if $\z_i\neq\resV$ and $\z_{i+1}\neq\resV$, then $\Gamma(\z_i)=\C$ and $\Gamma(\z_{i+1})=\C'$ (for some $\C$ and $\C'$) and $\{\z_i,\z_{i+1}\}\subseteq\FV{\e_{i_j}}$   
  \end{enumerate} 
By (5) and (4) (the induction hypothesis on $\sharingRel'_{0}$) for all $i$, $1\leq i\leq k-1$  
\begin{enumerate} [(A)]\addtocounter{enumi}{1}
\item if $\z_{i+1}=\resV$ or $\z_{i}=\resV$, then $\Gamma(\z_i)=\C$ or $\Gamma(\z_{i+1})=\C$ (for some $\C$) and $\z_i\in\FV{e_{i_j}}$
  or $\z_{i+1}\in\FV{e_{i_j}}$. 
\end{enumerate} 
Finally, by transitivity of equality we have that, for all $i$, $1\leq i\leq k$,
if $\z_i\neq\resV$, then $\Gamma(\z_i)=\C$ and if there is $j$, $1\leq j\leq k$,
such that $\z_j=\resV$ also $\C=\D$. Morever $\{\z_i\ |\ \z_i\neq\resV\ \ 1\leq i\leq k\}\subseteq\bigcup_{1\leq i\leq n}(\FV{\e_i}\setminus\X)=\FV{\e}$. 
Therefore, if $\Pair{\x}{\y}\in\sharingRel$ and $\x\neq\y$ we get
\begin{itemize}
\item if $\x\neq\resV$ and $\y\neq\resV$, then $\Gamma(\x)={\C}$ and $\Gamma(\y)=\PG{\C'}$ (for some $\C$ and $\C'$) and $\x,\y\in\FV{\e}$,
\item if $\y=\resV$ or $\x=\resV$, then $\Gamma(\x)={\C}$ or $\Gamma(\y)={\C}$ (for some $\C$) and $\x\in\FV{\e}$ or $\y\in\FV{\e}$.
\end{itemize}
.\\
}
The proofs for \underline{rules \rn{T-Invk} and \rn{T-New}} are similar.
\end{proof}

% !TEX root =main.tex

\section{Examples}\label{sect:examples}

In this section we illustrate the expressiveness of the type system by
programming examples, and we show a type derivation.
\begin{myexample}\label{ex:One}
Assume we have a class \lstinline{D} with a field \lstinline{f} of type \lstinline{D}, and
a class \lstinline{C} with two fields of type {\lstinline{D}}.  Consider the
following closed expression $\ett$:
\begin{lstlisting}
D y= new D(y);
D x= new D(x);
C$^\capsule$ z= {D z2= new D(z2); D z1= (y.f= x); new C(z2,z2)};
z
\end{lstlisting}

The inner block (right-hand side of the declaration of \lstinline{z}{}) refers
to the external variables \lstinline{x} and \lstinline{y}, that is, they occur
free in the block. In particular, the execution of the block has the sharing
effect of connecting \lstinline{x} and \lstinline{y}. However, these variables
will \emph{not} be connected to the final result of the block, since the result
of the assignment will be only connected to a local variable which is not used
to build the final result, as more clearly shown by using the sequence
abbreviation: \lstinline@{D z2= new D(z2); y.f= x; new C(z2,z2)}@.  Indeed, as
will be shown in the next section, the block reduces to 
\lstinline@{D z2= new D(z2); new C(z2,z2)}@ which is a {closed block}.  In 
existing type systems {supporting the capsule notion} this example is either 
ill-typed \cite{GordonEtAl12}, or can be typed by means of a rather tricky 
\emph{swap} typing rule \cite{ServettoZucca15,GianniniEtAl16} which, roughly 
speaking, temporarily changes, in a subterm, the set of variables which can 
be freely used.
\end{myexample}
\vspace{-10pt}                                                                                                                                                                                                                                                                   
\begin{myexample}
As a counterexample, consider the following {ill-typed} term
\begin{lstlisting}
D y= new D(y);
D x= new D(x);
C$^\capsule$ z= {D z1= (y.f= x); new C(z1,z1)};
z
\end{lstlisting}
Here the inner block is not a capsule, since the local variable \lstinline{z1}
is initialized as an \emph{alias} of \lstinline{x}, hence connected to both
\lstinline{x} and \lstinline{y}{}.  Indeed, the block reduces to 
\lstinline@new C(x,x)@ which is not {closed}.
\end{myexample}
\begin{figure*}[ht]
%\framebox{
{\small
\begin{center}
\begin{math}
\begin{array}{c}
\deriv_1:\BigSpace{\prooftree
{\prooftree
\begin{array}{l}
\TypeCheck{\SubstFun{\Gamma_1}{\Gamma_2}}{\xtt}{\Dtt}{\{\xtt,\resV\}}\ \\
\TypeCheck{\SubstFun{\Gamma_1}{\Gamma_2}}{\ytt}{\Dtt}{\{\ytt,\resV\}}\ %\rn{t-v}
\end{array}
\justifies
\TypeCheck{\SubstFun{\Gamma_1}{\Gamma_2}}{\FieldAssign{\ytt}{\ftt}{\xtt}}{\Dtt}{\{\xtt,\ytt,\resV\}}
%\using
%\rn{t-f-a}
\endprooftree}
\justifies
\TypeCheck{\SubstFun{\Gamma_1}{\Gamma_2}}{{\ConstrCall{\Dtt}{\FieldAssign{\ytt}{\ftt}{\xtt}}}}{\Dtt}{\{\xtt,\ytt,\resV\}}
%\using
%\rn{t-nw}
\endprooftree}
%\\ \\
%\framebox{Type derivation $\deriv_1$.}
%%\\
%%\hrulefill
\\[10ex]
\deriv_2:\Space{\prooftree
\prooftree
\TypeCheck{\SubstFun{\Gamma_1}{\Gamma_2}}{\zD}{\Dtt}{\{\zD,\resV\}}\ %\rn{t-v}
\justifies
\TypeCheck{\SubstFun{\Gamma_1}{\Gamma_2}}{{\ConstrCall{\Dtt}{\zD}}}{\Dtt}{\{\zD,\resV\}}
%\using
%\rn{t-nw}
\endprooftree
\Space
\deriv_1
\Space
\prooftree
\TypeCheck{\SubstFun{\Gamma_1}{\Gamma_2}}{\zD}{\Dtt}{\{\zD,\resV\}}\ %\rn{t-v}
\justifies
\TypeCheck{\SubstFun{\Gamma_1}{\Gamma_2}}{{\ConstrCall{\Ctt}{\zD,\zD}}}{\Ctt}{\{\zD,\resV\}}
%\using
%\rn{t-nw}
\endprooftree
\justifies
\TypeCheck{{\Gamma_1}}{\Block{\Dec{\Dtt}{\zD}{\ConstrCall{\Dtt}{\zD}}\,\Dec{\Dtt}{\zU}{\ConstrCall{\Dtt}{\FieldAssign{\ytt}{\ftt}{\xtt}}}}{\ConstrCall{\Ctt}{\zD,\zD}}}{\Ctt}{\{\xtt,\ytt\}}
\endprooftree}
%\\ \\
%\framebox{Type derivation $\deriv_2$. The block is annotated by $\{\zD\}$}
%%\\
%%\hrulefill
\\[10ex]
\deriv:\Space\prooftree
%\begin{array}{c}
\prooftree
\TypeCheck{\Gamma_1}{\ytt}{\Dtt}{\{\ytt,\resV\}}\ %\rn{t-v}
\justifies
\TypeCheck{\Gamma_1}{\ConstrCall{\Dtt}{\ytt}}{\Dtt}{\{\ytt,\resV\}}
%\using
%\rn{t-nw}
\endprooftree
\Space
\prooftree
\TypeCheck{{\Gamma_1}}{\xtt}{\Dtt}{\{\xtt,\resV\}}\ %\rn{t-v}
\justifies
\TypeCheck{{\Gamma_1}}{{\ConstrCall{\Dtt}{\xtt}}}{\Dtt}{\{\xtt,\resV\}}
%\using
%\rn{t-nw}
\endprooftree
\BigSpace
\deriv_2
\BigSpace
\TypeCheck{\Gamma_1}{\ztt}{\Ctt}{\epsilon}\ %\rn{t-v}
%\end{array}
\justifies
\TypeCheck{}{\Block{\Dec{{\Dtt}}{\ytt}{\ConstrCall{\Dtt}{\ytt}}\,\Dec{{\Dtt}}{\xtt}{\ConstrCall{\Dtt}{\xtt}}\,\Dec{\Type{\capsule}{\Ctt}}{\ztt}{\eU}}{\ztt}}{\Ctt}{\epsilon}
\endprooftree\\ \\
%%\\
%\HSep
%\\[10ex]
\end{array}
\end{math}
\end{center}
}
\hrulefill
{\small
\begin{itemize}
\item $\deriv_2$ yields $\eUA=\BlockLab{\Dec{\Dtt}{\zD}{\ConstrCall{\Dtt}{\zD}}\,\Dec{\Dtt}{\zU}{\ConstrCall{\Dtt}{\FieldAssign{\ytt}{\ftt}{\xtt}}}}{\ConstrCall{\Ctt}{\zD,\zD}}{\{\zD\}}$
\item $\deriv$ yields %\begin{center}
${\ett'=}\BlockLab{\Dec{{\Dtt}}{\ytt}{\ConstrCall{\Dtt}{\ytt}}\,\Dec{{\Dtt}}{\xtt}{\ConstrCall{\Dtt}{\xtt}}\,\Dec{\Type{\capsule}{\Ctt}}{\ztt}{\eUA}}{\ztt}{\emptyset}$
%\end{center}
\end{itemize}
%\framebox{The block is annotated by $\emptyset$}
}
%}
\caption{Type derivation for \refToExample{One}}\label{fig:TypingOne}
\end{figure*}

\noindent
\emph{Type derivation for \refToExample{One}.} Let {$\Gamma_1=\TypeDec{\ytt}{\Dtt},\TypeDec{\xtt}{\Dtt},\TypeDec{\ztt}{\Type{\capsule}{\Ctt}}$, and $\Gamma_2=\TypeDec{\zD}{\Dtt},\TypeDec{\zU}{\Dtt}$}.
In \refToFigure{TypingOne} we give the type derivation that shows that
expression $\ett$ {of Example~\ref{ex:One}} is well-typed.  To save space we
omit the annotated expression produced by the derivation, and show the
annotations for the blocks at the bottom of the figure.

Consider the type derivation $\deriv_2$ {for the expression $\eU$ which
initializes $\ztt$ (inner block)}. The effect of the declaration of $\zD$ is
the sharing relation $\{\zD,\resV\}$ and the effect of the declaration of $\zU$
is the sharing relation $\{\zU,\xtt,\ytt,\resV\}$. Before joining these sharing
relations, we remove $\resV$ from their domains, since the results of the two
expressions are not connected with each other.  So the resulting sharing
relation is represented by $\{\zU,\xtt,\ytt\}$ ($\zD$ is only connected with
itself). The effect of the evaluation of the body is the sharing relation
represented by $\{\zD,\resV\}$. Therefore, before removing the local variables
$\zU$ and $\zD$ from the sharing relations 
%resulting from the evaluation of the initialization expressions and their
%association with the declared variables 
we have the sharing relation $\{\zU,\xtt,\ytt\}\,\{\zD,\resV\}$.  The block is
annotated with $\{\zD\}$ (the local variable in the equivalence class of
$\resV$).  Since, after removing the local variables,
$\Closure{\resV}{}=\{\resV\}$, $\eU$ denotes a capsule, and may be used to
initialize an affine variable.\\
{$\deriv$ is the type} derivation for the {whole} expression $\ett$. Note that,
since $\ztt$ is an affine variable, the sharing relation of $\ztt$, which is
the body of the block, is the identity. In particular,
$\Closure{\resV}{}=\{\resV\}$, so the annotation of the block $\ett$ is
$\emptyset$. 

\begin{myexample}
We provide now a more realistic programming example, assuming a syntax enriched
by usual programming constructs. See the Conclusion for a discussion on how
to extend the type system to such constructs.

The class \Q@CustomerReader@ below models reading information about customers
out of a text file formatted as shown in the example:
\begin{lstlisting}
Bob
1 500 2 1300
Mark
42 8 99 100
\end{lstlisting}
In {odd} lines we have customer names, in {even}  lines we have a shop history:
a sequence of product codes.  The method \Q@CustomerReader.read@ takes a
\Q@Scanner@, assumed to be a class similar to the one in Java, for reading a
file and extracting different kinds of data.
\vspace{-5pt}
\begin{lstlisting}
class CustomerReader {
  static Customer read(Scanner s)/*$\sharingRel = \epsilon$*/{ 
    Customer c=new Customer(s.nextLine())
    while(s.hasNextNum()){
      c.addShopHistory(s.nextNum())
    }
    return c
  }
}
class Scanner {
  String nextLine()/*$\sharingRel = \epsilon$*/{...}
  boolean hasNextNum()/*$\sharingRel = \epsilon$*/{...}
  int nextNum()/*$\sharingRel = \epsilon$*/{...}
}
\end{lstlisting}
Here and in the following, we insert after method headers, as comments, their
sharing effects.  In a real language, a library should declare sharing effects
of methods by some concrete syntax, as part of the type information available
to clients. In this example, \lstinline{CustomerReader.read}{} uses some
methods of class \lstinline{Scanner}{}. Such methods have no sharing effects,
as represented by the annotation $\sharingRel=\epsilon$.  Note that for the
last two methods this is necessarily the case since they have no explicit
parameters and a primitive return type. For the first method, the only possible
effect could have been to mix $\this$ with the result.

A \Q@Customer@ object is read from the file, and then its shop history is
added.  Since methods invoked on the scanner introduce no sharing, we can
infer that the same holds for method \Q@CustomerReader.read@. In other words,
we can statically ensure that the data of the scanner are not mixed with the
result. 
%In previous work \cite{ServettoZucca15,GianniniEtAl16} the same guarantee was
%obtained by declaring \lstinline{lent}{} (borrowed) the \lstinline{Scanner
%s}{} parameter.  {In our type system, the fact that a parameter is (used as)
%lent, that is, it will be neither connected to another parameter, nor to the
%result, is \emph{inferred} instead. Moreover, lent parameters (singletons) are
%generalized to arbitrary sets of parameters, as will be shown in
%\refToExample{Four}.}

\label{open-capsule}
The following method \Q@update@ illustrates how we can ``open'' capsules,
modify their values and then recover the original capsule guarantee. The method
takes a customer, which is required to be a capsule by the fact that the
corresponding parameter is affine, and a scanner as before.
\vspace{-5pt}
\begin{lstlisting}
class CustomerReader {...//as before
  static Customer update(Customer$^\capsule$ old,Scanner s)/*$\sharingRel = \epsilon$*/{
    Customer c=old//we open the capsule `old'
    while(s.hasNextNum()){
      c.addShopHistory(s.nextNum())
    }
    return c
  }
}
\end{lstlisting}
A method which has no sharing effects can use the pattern illustrated above:
one (or many) affine parameters are opened  (that is, assigned to local
variables) and, in the end, the result is guaranteed to be a capsule again.
This mechanism is not possible in~\cite{Almeida97,ClarkeWrigstad03,DietlEtAl07}
and relies on destructive reads in~\cite{GordonEtAl12}.

A less restrictive version of method \lstinline{update} could take a non affine
\lstinline{Customer old}{} parameter, that is, not require that the old
customer is a capsule.  In this case, the sharing effects would be $\sharingRel
=\{{\small \resV,\texttt{old} }\}$.  Hence, in a call
\lstinline{Customer.update(c1,s)}{}, the connections of \lstinline{c1}{} would
be propagated to the result of the call. In other words, the method type is
``polymorphic'' with respect to sharing effects.  Notably, the method will
return a capsule if invoked on a parameter which is a capsule.
\end{myexample}
\vspace{-10pt}                                                                                                                                                                                                                                                                   
\begin{myexample}\label{ex:Four}
The following method takes two teams \lstinline{t1}, \lstinline{t2}. Both teams
want to add a reserve player from their respective lists \lstinline{p1} and
\lstinline{p2}, assumed to be sorted with best players first.  However, to keep
the game fair, the two reserve players can only be added if they have the same
skill level.
\vspace{-5pt}
\begin{lstlisting}
static void addPlayer(Team t1, Team t2, Players p1, Players p2)
/*$\sharingRel = \{\texttt{t1},\texttt{p1}\}, \{\texttt{t2},\texttt{p2}\}$*/{
  while(true){//could use recursion instead
    if(p1.isEmpty()||p2.isEmpty()) {/*error*/}
    if(p1.top().skill==p2.top().skill){
      t1.add(p1.top());
      t2.add(p2.top());
      return;
      }
    else{
      removeMoreSkilled(p1,p2);
      }
  }
\end{lstlisting}
\vspace{-5pt}
The sharing effects express the fact that each team is only mixed with its list
of reserve players. 
\end{myexample}
\begin{myexample}\label{ex:Five}
Finally, we provide a more involved example which illustrates the expressive
power of our approach.  Assume we have a class $\Ctt$ as follows:
\vspace{-5pt}
\begin{lstlisting}
class C {
   C f;
   C clone()/*$\sharingRel = \epsilon$*/{...}
   C mix(C x)/*$\sharingRel = \{\xtt,\this,\resV\}$*/{...}
}
\end{lstlisting}
\vspace{-5pt}

The method $\clone$ is expected to return a deep copy of the receiver. Indeed,
this method has no parameters, apart from the implicit non-affine\footnote{{We
assume $\this$ to be non-affine unless explicitly indicated, e.g., by inserting
$\capsule$ as first element of the list of parameters.} } parameter $\this$,
and returns an object of class $\Ctt$ which is \emph{not} connected to the
receiver, as specified by the fact that the sharing relation is the identity,
represented by $\epsilon$, where $\resV$ is not connected to $\this$. Note that
a shallow clone method would be typed
\lstinline{C clone()/*$\sharingRel = \{\this,\resV\}$*/}{}.

The method $\mix$ is expected to return a ``mix'' of the receiver with the
argument. Indeed, this method has, besides $\this$, a parameter $\xtt$ of class
$\Ctt$, both non affine, returns an object of class $\Ctt$ and its effects are
connecting $\xtt$ with $\this$, and both with the result.  
 
Consider now the following closed expression $\ett$:
\vspace{-5pt}
\begin{lstlisting}
C c1= new C(c1);
C outer= {
  C c2= new C(c2);
  C$^\capsule$ inner= {
    C c3= new C(c3);
    C r= c2.mix(c1).clone()
    r.mix(c3)};
  inner.mix(c2)};
outer
\end{lstlisting}
\vspace{-5pt}
The key line in this example is \lstinline{C r=c2.mix(c1).clone()}{}.\\
Thanks to the fact that \lstinline{clone}{} returns a capsule, we know that
\lstinline{r}{} will not be connected to the external variables
\lstinline{c1}{} and \lstinline{c2}{}, hence also the result of the block will
not be connected to \lstinline{c1}{} or \lstinline{c2}{}. However, the sharing
between \lstinline{c2}{} and \lstinline{c1}{} introduced by the
\lstinline{mix}{} call is traced, and prevents the \emph{outer} block {from
being} a capsule. The reader can check that, by replacing the declaration of
\lstinline{r}{} with \lstinline{C r=c2.mix(c2).clone()}{}, also the outer block
is a capsule, hence variable {\lstinline{outer}} could be declared affine.
{This example is particularly challenging to test the capability of a type
system to detect the capsule property.} For instance,  the type system in
\cite{GordonEtAl12} does not discriminate these two cases, those in
\cite{ServettoZucca15,GianniniEtAl16} require rather tricky and non
syntax-directed rules.
The type derivation for \refToExample{Five} can be found in the Appendix.
\end{myexample}

% !TEX root =ms.tex

\section{The calculus}\label{sect:calculus}

The calculus has a simplified syntax, defined in \refToFigure{calculus},  where
we assume that, except from right-hand sides of declarations and bodies of
blocks, subterms of a compound expression are only values. This simplification
can be easily obtained by a (type-driven) translation of the syntax of
\refToFigure{syntax} generating for each subterm {which is not a value} a local
declaration of the appropriate type. Moreover we omit primitive types.
\EZ{Finally, the syntax describes runtime terms, where blocks are annotated as described in the previous section.}

\begin{figure}[t]
\begin{grammatica}
\produzione{\e}{\x\,{\mid}\,\FieldAccess{\val}{\f}\,{\mid}\,\MethCall{\val}{\m}{\vs}\,{\mid}\,\FieldAssign{\val}{\f}{\val}\,{\mid}\,\ConstrCall{\C}{\vs}}{expression}\\
\seguitoproduzione{\,{\mid}\,\EZ{\BlockLab{\decs}{\e}{\X}}}{}\\
\produzione{\dec}{\Dec{\T}{\x}{\e}}{declaration}\\ \\
\produzione{\T}{\Type{\mu}{\C}}{declaration type}\\
\produzione{\val}{{\x\mid\BlockLab{\dvs}{{\val}}{\X}\mid\ConstrCall{\C}{\vs}}}{value}\\
\produzione{\dv}{\Dec{\C}{\x}{\ConstrCall{\C}{\vs}}}{evaluated declaration}\\
\end{grammatica}
\caption{Syntax of calculus, values, and evaluated declarations}\label{fig:calculus}
\end{figure}

A {\em value} is the result of the reduction of an expression, and is either a
variable (a reference to an object), or a block where the declarations are
evaluated (hence, correspond to a local store) and the body is in turn a
value\EZ{, or a constructor call where argument are evaluated.}

A sequence $\dvs$ of \emph{evaluated declarations} plays the role of the store
in conventional models of imperative languages, that is, each $\dv$ can be seen
as an association of \EZ{a right-value to a reference.}

As anticipated in \refToSection{language}, mutual recursion among evaluated
declarations is allowed, whereas we do not allow references to variables on the
left-hand side of forward unevaluated declarations. E.g, 
\lstinline{C x= new C(y); C y= new C(x);  x}{} is allowed, whereas 
\lstinline{C x= new C(y); C y= x.f;   x}{} is not. 

That is, our calculus supports \emph{recursive object initialization},
whereas,  e.g., in Java, we cannot directly create two mutually referring
objects as in the allowed example above, but we need to first initialize their
fields to \texttt{null}. However, to make recursive object initialization safe,
we should prevent access to an object which is not fully initialized yet, as in
the non-allowed example. Here we take a simplifying assumption, just requiring
initialization expressions to be values. More permissive assumptions can be
expressed by a sophisticated type system as in \cite{ServettoEtAl13}. 

Semantics is defined by a \emph{congruence} relation, which captures structural
equivalence, and a \emph{reduction} relation, which models actual computation,
similarly to what happens, e.g., in $\pi$-calculus \cite{Milner99}.

The congruence relation, denoted by $\congruence{}{}$, is defined as the
smallest congruence satisfying the axioms in \refToFigure{congruence}.  We
write $\FV{\decs}$ and $\FV{\e}$ for the free variables of a sequence of
declarations and of an expression, respectively, and $\Subst{\X}{\y}{\x}$,
$\Subst{\decs}{\y}{\x}$, and $\Subst{\e}{\y}{\x}$ for the capture-avoiding
variable substitution on a set of variables, a sequence of declarations, and an
expression, respectively, all defined in the standard way. 

\begin{figure}[!htbp]
{
\begin{math}
\begin{array}{l}
{\NamedRuleOL{alpha}{{\congruence{\BlockLab{\decs\ \Dec{\T}{\x}{\e}\ \decs'}{\e'}{\X}}{\BlockLab{\Subst{\decs}{\y}{\x}\ \Dec{\T}{\y}{\Subst{\e}{\y}{\x}}\ \Subst{\decs'}{\y}{\x}}{\Subst{\e'}{\y}{\x}}{{\Subst{\X}{\y}{\x}}}}}}{}}
\\[3ex]
\NamedRuleOL{reorder}{\congruence{
\BlockLab{
\decs\ \Dec{\C}{\x}{{\ConstrCall{\C}{\EZ{\vs}}}}\ \decs'
}{\e}{\X}}{
\BlockLab{
\Dec{\C}{\x}{{\ConstrCall{\C}{\EZ{\vs}}}}\ \decs\ \decs'\ }{\e}{\X}}}{}
\\[3ex]
\NamedRuleOL{new}{\congruence{
\ConstrCall{\C}{\EZ{vs}}
}{
\BlockLab{\Dec{\C}{\x}{\ConstrCall{\C}{\EZ{\vs}}}}{\x}{{\{\x\}}}}
}{}
\\[3ex]
%\NamedRuleOL{garbage}
%{
%\congruence{
%\BlockLab{\dvs\ \decs}{\e}{\X}}{\BlockLab{\decs}{\e}{{\X}\setminus{\dom{\dvs}}}}}
%{
%\begin{array}{l}
%(\FV{\decs}\cup\FV{\e})\cap\dom{\dvs}=\emptyset
%\end{array}}
%\\[3ex]
\NamedRuleOL{{block-elim}}{\congruence{{\BlockLab{}{\e}{\emptyset}}}{\e}}{}
\\[2ex]
%\NamedRuleOL{{{dec}}}
%{
%{\begin{array}{l}
%\BlockLab{\decs\ \Dec{\Type{\mu}{\C}}{\x}{\BlockLab{\decs_1\ \decs_2}{\e}{\X}}\ \decs'}{\e'}{{\Y\setminus\dom{\dvs_1}}}\cong \\
%\BigSpace{\BlockLab{\decs\ \decs_1\ \Dec{\Type{\mu}{\C}}{\x}{\BlockLab{\decs_2}{\e}{\X\setminus{\dom{\dvs_1}}}}\ \decs'}{\e'}{{\Y}}}
%\end{array}}
%}
%{
%\!\!\!\!\!\!\!\!\!\!\begin{array}{l}
%\FV{\decs_1}\cap\dom{\decs_2}=\emptyset\\
%\FV{\decs\,\decs'\,{\e'}}{\cap}\dom{\decs_1}{=}\emptyset\\ 
%\mu=\capsule{\implies}\dom{\decs_1}{\cap}\X{=}\emptyset
%\end{array}}
%\\[4ex]
\NamedRuleOL{{{dec}}}
{
\PG{
\begin{array}{l}
\BlockLab{\dvs\ \Dec{\Type{\mu}{\C}}{\x}{\BlockLab{\dvs_1\ \decs_2}{\e}{\X}}\ \decs'}{\e'}{\Y}\cong \\
\BigSpace\BlockLab{\dvs\ \dvs_1\ \Dec{\Type{\mu}{\C}}{\x}{\BlockLab{\decs_2}{\e}{\X'}}\ \decs'}{\e'}{{\Y'}}
\end{array}
}
}
{
\!\!\!\!\!\!\!\!\!\!\begin{array}{l}
\FV{\dvs_1}\cap\dom{\decs_2}=\emptyset\\
\FV{\dvs\,\decs'\,{\e'}}{\cap}\dom{\dvs_1}{=}\emptyset\\ 
\mu=\capsule{\implies}\dom{\dvs_1}{\cap}\X{=}\emptyset
\end{array}}
\\[4ex]
\NamedRuleOL{{body}}
{\begin{array}{l}
{\BlockLab{\EZ{\dvs}}{\BlockLab{\EZ{\dvs_1}\ \decs_2}{\e}{{\X}}}{{\Y}}}\cong\\
\BigSpace{\BlockLab{\EZ{\dvs}\ \EZ{\dvs_1}}{
\BlockLab{\decs_2}{\e}{{\X'}}}{{\Y'}}}
\end{array}
}
{
\begin{array}{l}
\FV{\EZ{\dvs_1}}\cap\dom{\decs_2}=\emptyset\\
\FV{\EZ{\dvs}}\cap\dom{\EZ{\dvs_1}}=\emptyset
\end{array}}
\\[4ex]
\NamedRuleOL{val-ctx}{
\begin{array}{l}
{\ValCtx{\BlockLab{{\dvs_1\,\dvs_2}}{\val}{\X}}}\cong\\
\BigSpace{\BlockLab{{\dvs_1}}{\ValCtx{\BlockLab{{\dvs_2}}{\val}{\X'}}}{\Y}}
\end{array}}
{
\begin{array}{l}
\FV{{\dvs_1}}\cap\dom{{\dvs_2}}=\emptyset\\
\FV{\valCtx}\cap\dom{{\dvs_1}}=\emptyset
\end{array}}
\end{array}
\end{math}
}
\caption{Congruence rules}
\label{fig:congruence}
\end{figure}

Rule \rn{alpha} is the usual $\alpha$-conversion. The condition
$\x,\y\not\in\dom{\decs\,\decs'}$ is implicit by well-formedness of blocks. \\
Rule \rn{reorder} states that we can move evaluated declarations in an
arbitrary order.  Note that, instead, $\decs$ and $\decs'$ cannot be swapped,
because this could change the order of side effects. \\
In rule \rn{new}, a constructor invocation can be seen as an elementary block
where a new object is allocated.\\
Rule \rn{{block-elim}} states  that a block with no
declarations is equivalent to its body.\\
With the remaining rules we can move a sequence of declarations from a block to
the directly enclosing block, or conversely, as it happens with rules for
\emph{scope extension} in the $\pi$-calculus \cite{Milner99}.\\ 
In rules \rn{dec} and \rn{body}, the inner block is the body, or the right-hand
side of a declaration, respectively, of the enclosing block.} The first two
side conditions ensure that moving the declarations outside the block does  cause
neither scope extrusion nor capture of free variables. More precisely: the first
prevents moving outside declarations which depend on local
variables of the inner block. The second prevents capturing free
variables of the enclosing block.  Note that the second condition can be
obtained by $\alpha$-conversion of the inner block, but the first cannot.
Finally, the third side condition of rule \rn{{dec}} prevents, in case the
block {initializes} an affine variable, to move outside declarations of
variables that {will be possibly} connected to the result of the block.
Indeed, in this case we would get an ill-typed term. In case of a non-affine
declaration, instead, this is not a problem.\\ 
%\PG{For technical reason, rule
%\rn{dec} is used only to move evaluated declarations.}\\
Rule \rn{val-ctx} handles the cases when the inner block is a  subterm of a
field access, method invocation, field assignment or constructor invocation.
Note that in this case the inner block is necessarily a (block) value. To
express all such cases in a compact way, we define {\em value contexts}
$\valCtx$ in the following way:
\begin{center}
$\valCtx::=\emptyctx\mid\FieldAccess{\valCtx}{\f}\mid\FieldAssign{\valCtx}{\f}{\val}\mid\FieldAssign{\val}{\f}{\valCtx}\mid\ConstrCall{\C}{\vs,\valCtx,\vs'}$
\end{center}
For instance, if $\valCtx=\ConstrCall{\C}{\vs,\emptyctx,\vs'}$, we get
{\small \begin{center}
%$\begin{array}{l}
%{\ConstrCall{\C}{\vs,\BlockLab{{\dvs_1\,\dvs_2}}{\val}{\X},\vs'}}\cong\\
%\BigSpace{\BlockLab{\dvs_1}{{\ConstrCall{\C}{\vs,\BlockLab{\dvs_2}{\val}{\X\setminus\dom{\dvs_1}},\vs'}}}{\X\cap\dom{\dvs_1}}}
%\end{array}$
${\ConstrCall{\C}{\vs,\BlockLab{{\dvs_1\,\dvs_2}}{\val}{\X},\vs'}\cong 
\BlockLab{\dvs_1}{{\ConstrCall{\C}{\vs,\BlockLab{\dvs_2}{\val}{\X'},\vs'}}}{\Y}}$
\end{center}}
\vspace{-5pt}
As for rules \rn{dec} and \rn{body}, the first side condition prevents moving
outside a declaration in $\dvs_1$ which depends on local variables of the inner
block, and the second side condition prevents capturing free variables of
$\valCtx$\EZ{, defined in the standard way.}

The following definition introduces a simplified syntactical form for values
and evaluated declarations. \EZ{In this canonical form, a sequence of evaluated declarations (recall that its order is immaterial) can be seen as a \emph{store} which associates to references \emph{object states} of shape $\ConstrCall{\C}{\xs}$, where fields contain in turn references, and a value is a
variable (a reference to an object) possibly enclosed in a local store. }

\begin{mydefinition}\label{def:canonicalVal}\
\begin{enumerate}
  \item A sequence of evaluated declarations $\dvs$ is \emph{in canonical form} 
    if, for all $\dv$ in $\dvs$, $\dv=\Dec{\C}{\x}{\ConstrCall{\C}{\xs}}$ for 
    some $\C$, $\x$ and $\xs$.
  \item A value $\val$ is \emph{in canonical form} if either $\val=\x$ for some 
    $\x$, or $\val=\BlockLab{\dvs}{\x}{\X}$ for some $\X$, $\dvs$, and $\x$, 
    with $\dvs\neq\epsilon$ in canonical form.
%  \PG{  \item  A value $\val$ in canonical form is \emph{\nogarbage} if
%     $\Reduct{\dvs}{\x}=\dvs$.}
\end{enumerate}
\end{mydefinition}
The following proposition shows that congruence allows us to assume that values
are in canonical form.
\begin{myproposition}\label{prop:value}
If $\val$ is a value, then there exists $\val'$ such that $\congruence{\val}{\val'}$ and $\val'$ is in 
canonical form.
\end{myproposition}
\begin{proof}
By structural induction on values, and for blocks by induction on the number of 
declarations that are not in canonical form. The full proof is in 
\ref{app:proofs}.
\end{proof}

\EZ{From now on}, unless otherwise stated, we assume that values and evaluated
declarations are in canonical form.
\EZ{Moreover, we also need to characterize values which are garbage-free, in the sense that they do not contain useless store. 
To this end, we first inductively define $\connected{\decs}{\X}{\x}$, meaning that $\x$ is (transitively)
used by $\X$ through $\decs=\Dec{\T_1}{\x_1}{\e_1}\ldots\Dec{\T_n}{\x_n}{\e_n}$, by:
\begin{quote}
$\connected{\decs}{\X}{\x}$ if $\x\in\X$\\
$\connected{\decs}{\X}{\x}$ if $\x\in\FV{\e_i}$, for some $i\in 1..n$, and $\connected{\decs}{\X}{\x_i}$.
\end{quote}
Then, we write $\Reduct{\decs}{\X}$ for the subsequence of $\decs$ (transitively) used by $\X$, defined by: for all $i\in 1..n$,
$\Dec{\T_i}{\x_i}{\e_i}\in\Reduct{\decs}{\X}$ if
$\connected{\decs}{\X}{\x_i}$.\\
Finally, we define }
$\remGarbage(\BlockLab{\dvs}{\x}{\X})=\BlockLab{\Reduct{\dvs}{\x}}{\x}{{\X}\cap{\dom{\Reduct{\dvs}{\x}}}}$, and we say that a value $\val$ is \emph{garbage-free}} if either $\val=\x$ or $\val=\remGarbage(\val)$.}\\
%Values $\BlockLab{\dvs}{\x}{\X}$ such that $\FV{\remGarbage(\BlockLab{\dvs}{\x}{\X})}=\emptyset$ are capsules. 

{\em Evaluation contexts}, defined below, express standard left-to-right 
evaluation.
\begin{center}
$\ctx::=\emptyctx\mid\BlockLab{\dvs\ \Dec{\T}{\x}{\ctx}\ \decs}{\e}{\X}\mid \BlockLab{\dvs}{\ctx}{\X}$
\end{center}
In the evaluation context $\BlockLab{\dvs\ \Dec{\T}{\x}{\ctx}\ \decs}{\e}{\X}$
we assume that no declaration in $\decs$ is evaluated. \PG{This can always be
achieved by the congruence rule \rn{reorder}.}

We introduce now some notations which will be used in reduction rules.  We
write $\dvs(\x)$ for {\em the declaration of $\x$ in $\dvs$}, if any (recall
that in well-formed blocks there are no multiple declarations for the same
variable).  We write $\HB{\ctx}$ for the \emph{hole binders} of $\ctx$, that
is, the variables declared in blocks enclosing the context hole, defined by: \\
\indent$\bullet$  if $\ctx=\Block{\dvs\ \Dec{\T}{\x}{\ctxP}\ \decs}{\e}$, 
then $\HB{\ctx}=\dom{\dvs}\cup\{\x\}\cup\HB{\ctxP}\cup\dom{\decs}$\\
\indent $\bullet$ if $\ctx=\Block{\dvs}{\ctxP}$, 
then $\HB{\ctx}=\dom{\dvs}\cup\HB{\ctxP}$\\
We write $\decCtx{\ctx}{\x}$ and $\extractDec{\ctx}{\x}$ for the {\em
sub-context {declaring} $\x$} and the {\em evaluated declaration of $x$} extracted from
$\ctx$, defined as follows:\\
\indent$\bullet$ let $\ctx=\Block{\dvs\ \Dec{\T}{\y}{\ctxP}\ \decs}{\e}$\\
\indent\indent- if {$\dvs(\x)=\dv$ and $\x\not\in\HB{\ctxP}$}, then 
$\decCtx{\ctx}{\x}=\Block{\dvs\ \Dec{\T}{\y}{\emptyctx}\ \decs}{\e}$ and\\
\indent\indent\ \ $\extractDec{\ctx}{\x}={\dv}$\\
\indent\indent-  else 
  $\decCtx{\ctx}{\x}=\Block{\dvs\ \Dec{\T}{\y}{\decCtx{\ctxP}{\x}}\ \decs}{\e}$
  and $\extractDec{\ctx}{\x}=\extractDec{\ctxP}{\x}$\\
% \end{myitemize}
\indent$\bullet$  let $\ctx=\Block{\dvs}{\ctxP}$\\
\indent\indent-  if {$\dvs(\x)=\dv$ and $\x\not\in\HB{\ctxP}$}, then
$\decCtx{\ctx}{\x}=\Block{\dvs}{\emptyctx}$ and
  $\extractDec{\ctx}{\x}={\dv}$\\
\indent\indent-  else $\decCtx{\ctx}{\x}=\Block{\dvs}{\decCtx{\ctxP}{\x}}$, and
 %\item 
 $\extractDec{\ctx}{\x}=\extractDec{\ctxP}{\x}$.\\
Note that $\decCtx{\ctx}{\x}$ and $\extractDec{\ctx}{\x}$ are not defined if \EZ{there is no evaluated declaration for
$\x$} in some block enclosing the context hole.
 
Reduction rules are given in \refToFigure{reduction}. 
\begin{figure}[!htbp]
\begin{small}
\begin{math}
\begin{array}{l}
%{\NamedRule{congr}{\reduce{\e'_1}{\e'_2}
%}{
%  \reduce{{\e_1}}{\e_2
%}{  \begin{array}{l}
%\congruence{\e_1}{\e'_1}%\\
%\congruence{\e_2}{\e'_2}
%\end{array}
%}}}
\PG{\NamedRule{congr}{\reduce{\e'}{\e''}
}{
  \reduce{{\MS{\e}}}{\e''}
}{  \begin{array}{l}
\congruence{\e}{\e'}%\\
%\congruence{\e_2}{\e'_2}
\end{array}
}}
\\[4ex]
{\NamedRuleOL{field-access}{\reduce{\Ctx{\FieldAccess{\x}{\f_i}}}{\Ctx{\x_i}}}{
\begin{array}{l}
\extractDec{\ctx}{\x}=\Dec{\C}{\x}{\ConstrCall{\C}{\x_1,\ldots,\x_n}}\\
\fields{\C}=\Field{\C_1}{\f_1}\ldots\Field{\C_n}{\f_n}\\
i\in 1..n\\
{\ctx=\DecCtx{\ctx}{\x}{\ctxP}\ \wedge\ {\x_i\not\in\HB{\ctxP}}}
\end{array}
}}
\\[4ex]
{\NamedRuleOL{invk}{
\begin{array}{l}
{\Ctx{\MethCall{\val}{\m}{\val_1,..,\val_n}}}\longrightarrow\\
{\Ctx{\Block{\Dec{\Type{\mu}{\C}}{\this}{\val}\, \Dec{\T_1}{\x_1}{\val_1}..\Dec{\T_n}{\x_n}{\val_n}}{\e}}}
\end{array}
}{
\begin{footnotesize}
\begin{array}{l}
\!\!\class{\ctx}{\val}=\C\\
\!\!\method{\C}{\m}{=}\FourTuple{\D}{\mu}{\Param{\T_1}{\x_1}..\Param{\T_n}{\x_n}}{\e}
\end{array}
\end{footnotesize}
}}
\\[4ex]
{\NamedRuleOL{field-assign}{
\reduce{\Ctx{\FieldAssign{\x}{\f_i}{\y}}}{{\UpdateCtx{\y}{\x}{i}{\y}}}
}{
\begin{array}{l}
\extractDec{\ctx}{\x}=\Dec{\C}{\x}{\ConstrCall{\C}{\x_1,\ldots,\x_n}}\\
\fields{\C}=\Field{\C_1}{\f_1}\ldots\Field{\C_n}{\f_n}\\
i\in 1..n\\
{\ctx=\DecCtx{\ctx}{\x}{\ctxP}\ \wedge\ {\y\not\in\HB{\ctxP}}}\\
%\val=\ConstrCall{\C}{\x_1,\ldots,\x_{i-1},\y,\xs}
\end{array}}}
\\[6ex]
{\NamedRuleOL{alias-elim}{\reduce{\Ctx{\BlockLab{\dvs\ \Dec{\C}{\x}{\y}\ \decs }{\e}{\X}}}{\Ctx{{\BlockLab{\dvs\ \Subst{\decs}{\y}{\x}}{\Subst{\e}{\y}{\x}}{\X\setminus\{\x\}}}}}}{}}
\\[4ex]
{\NamedRuleOL{affine-elim}{\reduce{\Ctx{\BlockLab{\dvs\ \Dec{\Type{\capsule}{\C}}{\x}{\val}\ \decs }{\e}{X}}}{\Ctx{{\BlockLab{\dvs\ \SubstVal\decs{\val'}{\x}}{\SubstVal\e{\val'}{\x}}{\PG{\X}}}}}}{\!\!\!\!\!\!\PG{\val'=\remGarbage(\val)}}}
%\\[4ex]
%\PG{\NamedRuleOL{garbage-elim}
%{
%\reduce{
%\BlockLab{\dvs\ \dvs'}{\val}{\X}}{\BlockLab{\dvs'}{\val}{{\X}\setminus{\dom{\dvs}}}}}
%{
%\!\!\!\!\!\!(\FV{\dvs'}\cup\FV{\val})\cap\dom{\dvs}=\emptyset
%}
%}
\end{array}
\end{math}
\end{small}
\caption{{Reduction rules}}
\label{fig:reduction}
\end{figure}

Rule \rn{congr}  can be used  to reduce a term which
otherwise would be stuck, as it happens for the $\alpha$-rule in lambda calculus.

In rule \rn{field-access}, given a field access of shape
$\FieldAccess{\x}{\f}$, the first enclosing declaration for $\x$ is found
(through the auxiliary function \textsf{dec}).  The fields of the class $\C$ of
$\x$ are retrieved from the class table.  If $\f$ is actually the name of a
field of $\C$, say, the $i$-th, then the field access is reduced to the
reference $\x_i$ stored in this field.  In the last side condition,
$\decCtx{\ctx}{\x}$ is the (necessarily defined) sub-context containing the
first enclosing declaration for $\x$, and the condition $\x_i\not\in\HB{\ctxP}$
ensures that there are no declarations for $\x_i$ in inner blocks (otherwise
$\x_i$ would be erroneously bound).  This can always be obtained by rule
\rn{alpha} of \refToFigure{congruence}.

For instance, assuming a class table where class \lstinline{A} has an
\lstinline{int} field, and class \lstinline{B} has an \lstinline{A} field \lstinline{f},
without this side condition, the term (without annotations):
\begin{lstlisting}
A a= new A(0); B b= new B(a); {A a= new A(1); b.f}
\end{lstlisting}
{would reduce to}
\begin{lstlisting}
A a= new A(0); B b= new B(a); {A a= new A(1); a}
\end{lstlisting}
{whereas this reduction is forbidden, and by rule \rn{alpha} the term is 
instead reduced to}
\begin{lstlisting}
A a= new A(0); B b= new B(a); {A a1= new A(1); a}
\end{lstlisting}
For this example:
%\begin{quote}
$\decCtx{\ctx}{\texttt{b}}=\Block{\texttt{A a = new A(0); B b = new B(a)}}{\ctxP}$ and
$\ctxP=\Block{\texttt{A a = new A(1)}}{\emptyctx}$.\\
%\end{quote}
In rule \rn{{invk}}, the class $\C$ of the receiver $\val$ is found through the
auxiliary function \textit{class} defined by
\begin{quote}
$\class{\ctx}{\x}=\C$ if $\extractDec{\ctx}{\x}=\Dec{\Type{}{\C}}{\x}{\_}$\\
$\class{\ctx}{\BlockLab{\dvs}{\x}{\X}}=\C$ if $\dvs(\x)=\Dec{\Type{}{\C}}{\x}{\_}$
\end{quote}
and method $\m$ of $\C$, if any, is retrieved from the class table. The call is
reduced to a block where declarations of the appropriate type for $\this$ and the
parameters are initialized with the receiver and the arguments, respectively, and the body is the method body. 

In rule \rn{field-assign}, given a field assignment of shape
$\FieldAssign{\x}{\f}{\y}$,  the first enclosing declaration for $\x$ is found
(through the auxiliary function \textsf{dec}).  If $\f$ is actually the name of
a field of $\C$, say, the $i$-th, then this first enclosing declaration is
updated, by replacing the $i$-th constructor argument by $\y$, obtaining
$\Dec{\C}{\x}{\ConstrCall{\C}{\x_1,\x_{i-1},\y,\x_{i+1},\ldots,\x_n}}$, as
expressed by the notation $\updateCtx{\y}{\x}{i}$ (the obvious formal
definition of which is omitted).  As for rule \rn{field-access} we have the side
condition that  $\y\not\in\HB{\ctxP}$. This side condition, requiring that there
are no inner declarations for the reference $\y$, prevents scope extrusion,
since if $\y\in\HB{\ctxP}$, $\updateCtx{\valPrime}{\x}{i}$ would take $\y$
outside the scope of its definition.  {The congruence {rules \rn{dec} and
\rn{body}}  of \refToFigure{congruence} can be used to {\em correctly} move the
declaration of $\y$ outside its declaration block, as previously described.}
For example, without the side condition, the term (without annotations)
\begin{lstlisting}
A a= new A(0); B b= new B(a); {A a1= new A(1); b.f=a1}
\end{lstlisting}
would reduce to 
\begin{lstlisting}
A a= new A(0); B b= new B(a1); {A a1= new A(1); a1}
\end{lstlisting}
The previous term is congruent to
\begin{lstlisting}
A a= new A(0); B b= new B(a); A a1= new A(1); b.f=a1
\end{lstlisting}
by applying rule \rn{body}, and then \rn{{block-elim}}. This term reduces 
correctly to
\begin{lstlisting}
A a= new A(0); B b= new B(a1); A a1= new A(1); a1
\end{lstlisting}
The last two rules eliminate evaluated declarations from a block.\\
In rule \rn{alias-elim}, a (non-affine) variable $\x$ which is
initialized as an alias of another reference $\y$ is eliminated by replacing
all its occurrences. In rule \rn{affine-elim}, an affine variable is eliminated
by replacing its unique occurrence with the value associated to its
declaration \PG{from which we remove garbage. } \EZComm{explain why?}  
%
%\PG{Rule \rn{garbage-elim} remove a useless
%sequence of evaluated declarations from a block value so reducing the
%value to \nogarbage.} 

We conclude this section by briefly discussing how the reduction relation \EZ{could} actually be computed. Indeed, the definition in \refToFigure{reduction} is not fully algorithmic, since rule \rn{congr} can always be applied in a non-deterministic way, analogously to what happens, e.g., with $\alpha$-conversion in lambda calculus or structural rules in $\pi$-calculus. 
\EZ{However, a}gain as is usually done for analogous cases, congruence is expected to be applied only when needed (since otherwise reduction would be stuck). All our congruence rules except for  \rn{alpha} and \rn{reorder} are meant to be applied 
from left to right (as a reduction). This is witnessed  by the fact that values in canonical form do not match the left-hand side
of any of the previously mentioned congruence rules and \refToProp{value} ensures that any value may be reduced in this form.

% !TEX root =ms.tex

\section{Results}\label{sect:results}

In this section we present the main formal results on our calculus. First, we
show a canonical form theorem describing constraints on free variables of
well-typed \EZ{garbage-free} values. Then we prove subject reduction, stating that reduction
preserves the type, and may reduce the sharing effects. In addition, reduction
preserves an invariant on the store that allows us to prove that lent and
capsule references have the expected behaviour. 

Finally, we prove progress, i.e., that well-typed expressions do not get
``stuck''. 
%Note that, since our operational model is a pure calculus, in the
%statements and proofs we do not need invariants on auxiliary structures such as
%memory. 

First of all we extend the typing judgment to annotated expressions, and to
(annotated) sequences of declarations, as follows. 

\begin{definition}\label{def:typeBlock}\ 
\begin{itemize}
\item \EZ{Given an annotated expression $\e$, $\erase{\e}$ is the expression
obtained by erasing the annotations from $\e$, and} $\TypeCheck{\Gamma}{\e}{\C}{\sharingRel}$ if 
    $\TypeCheckAnnotate{\Gamma}{\erase{\e}}{\C}{\sharingRel}{\e}$.
  \item \PG{Given the annotated expression $\e_1$ and $\e_2$, we say that {\em $\e_1$ and $\e_2$ are equal up to annotations}, dubbed $\e_1\variant\e_2$ if $\erase{\e_1}=\erase{\e_2}$.}
      \item Given
  $\decs=\Dec{\Type{\mu_1}{\C_1}}{\x_1}{\e_1}\ldots\Dec{\Type{\mu_n}{\C_n}}{\x_n}{\e_n}$, 
    an (annotated) sequence of declarations, $\TypeCheckDecs{\Gamma}{\decs}{\sharingRel}$ if
    \begin{itemize}
      \item $\TypeCheck{\Gamma}{\e_i}{\C_i}{\sharingRel_i}$, for some 
        $\sharingRel_i$  ($1\leq i\leq n$),
      \item \PG{$\sharingRel=\sum\limits_{i=1}^{n}(\SubstEqRel{\sharingRel_i}{\x_i}{\resV})$}, and 
      \item if $\mu_i{=}\capsule$ then $\IsCapsule{\sharingRel_i}$ ($1\leq i\leq n$).
    \end{itemize}
\end{itemize}
\end{definition}
Note that in the $\sharingRel$ derived for a sequence of declarations the equivalence class 
of $\resV$ is a singleton, according to the fact that a sequence of declarations has no ``result''.
\PG{As we can see from the typing rules of \refToFigure{typing} if 
the non-annotated expression $\e$ is typable, then there is a unique annotated $\e'$ such that
$\TypeCheck{\Gamma}{\e'}{\C}{\sharingRel}$ for some $\C$ and $\sharingRel$.}

\paragraph{Canonical form theorem}
In this subsection \EZComm{not really a subsection} we state a theorem describing constraints on free variables
of well-typed \EZ{\nogarbage\ values\footnote{Recall that values are assumed to be in canonical
form.}. Notably, such free variables are  either affine or connected to its result.} Therefore, a \nogarbage\ capsule
value can contain only affine free variables. 

In the following we use the underscore $\_$ for a type, when the specific type
is irrelevant. Moreover, we we will say that \emph{$\x$ is affine/non-affine}
in $\Gamma$ if $\Gamma(\y)=\Type{\capsule}{\C}$ or $\Gamma(\y)=\Type{}{\C}$,
respectively. 
\begin{theorem}\label{theo:freevars}
\PG{If $\TypeCheck{\Gamma}{\val}{\C}{\sharingRel}$ where $\val$ is \nogarbage\ and $\y\in\FV{\val}$, then:}
\begin{enumerate}
  \item if {$\y$ is non-affine in $\Gamma$}, then $\Pair{\y}{\resV}\in\sharingRel$
  \item if $\IsCapsule{\sharingRel}$, then {$\y$ is affine in $\Gamma$}.
\end{enumerate}
\end{theorem}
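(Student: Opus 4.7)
The plan is to argue by cases on the canonical form of $\val$. For $\val = \y$, apply $\rn{t-var}$ or $\rn{t-affine-var}$: in the non-affine case $\sharingRel = \{\y,\resV\}$, giving part 1 immediately; in the affine case $\sharingRel = \epsilon$, so $\IsCapsule{\sharingRel}$ holds and $\y$ is indeed affine, giving part 2.

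For $\val = \BlockLab{\dvs}{\x}{\X}$ with $\dvs \neq \epsilon$ canonical (so each $\dv_i = \Dec{\C_i}{\x_i}{\ConstrCall{\C_i}{\xs_i}}$), the first observation is that garbage-freeness together with $\dvs \neq \epsilon$ forces $\x \in \dom{\dvs}$: if $\x \notin \dom{\dvs}$, the set reachable from $\{\x\}$ through the dependency graph is just $\{\x\}$, so $\Reduct{\dvs}{\x} = \epsilon \neq \dvs$. Applying $\rn{t-block}$ we get $\sharingRel = \Remove{S}{\dom{\dvs}}$ where $S = \sum_i \SubstEqRel{\sharingRel_i}{\x_i}{\resV} + \sharingRel'$, and by $\rn{t-new}$ together with $\rn{t-var}/\rn{t-affine-var}$, $\sharingRel_i$ is exactly the class $\{\xs_i[j_1],\ldots,\xs_i[j_k],\resV\}$ ranging over the non-affine entries of $\xs_i$.

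The key technical lemma (easy but essential) is: if $\x_a \in \dom{\dvs}$ is affine in $\SubstFun{\Gamma}{\Gamma'}$, then every $\x_b$ reachable from $\x_a$ in the dependency graph is affine too. Reason: an affine $\x_a$ requires $\IsCapsule{\sharingRel_a}$, which given the $\rn{t-new}$ shape of $\sharingRel_a$ means $\xs_a$ contains no non-affine variable, so the successor $\x_{a'} \in \xs_a \cap \dom{\dvs}$ is affine, and iterate. For part 1, take $\y \in \FV{\val}$ non-affine: since $\y \notin \dom{\dvs}$ we have $\y \neq \x$, so $\y \in \xs_{\x_i}$ for some $i$. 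Such $\x_i$ must be non-affine (else its init could not be a capsule, since the non-affine $\y$ in $\xs_i$ puts $\y$ in the class of $\resV$ in $\sharingRel_i$). The body variable $\x$ must also be non-affine: otherwise the key lemma would force every $\x_j \in \dom{\dvs}$ reachable from $\x$ to be affine, and by garbage-freeness this is all of $\dom{\dvs}$, contradicting that $\x_i$ is non-affine. Hence $\sharingRel' = \{\x,\resV\}$, and garbage-freeness yields a dependency chain $\x = \z_0 \to \z_1 \to \cdots \to \z_m = \x_i$ in $\dom{\dvs}$; by the key lemma applied backwards from the non-affine $\x_i$, every $\z_h$ is non-affine. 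Each step then gives $\z_h \sim_S \z_{h+1}$ (non-affine $\z_{h+1} \in \xs_{\z_h}$ puts $\z_{h+1}$ in the class of $\x_{\z_h} = \z_h$ in $\SubstEqRel{\sharingRel_{\z_h}}{\z_h}{\resV}$). Combining, $\y \sim_S \x_i \sim_S \x \sim_S \resV$, so $\Pair{\y}{\resV} \in S$; since $\y,\resV \notin \dom{\dvs}$, this pair survives the removal and lies in $\sharingRel$.

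Part 2 is immediate as the contrapositive of part 1: $\IsCapsule{\sharingRel}$ means no free variable is connected to $\resV$ in $\sharingRel$, so by part 1 none can be non-affine. The main obstacle will be the key lemma and its two applications: once to rule out an affine body, once to turn a dependency chain (obtained from garbage-freeness) into a sharing chain in $S$, since only non-affine occurrences contribute edges.
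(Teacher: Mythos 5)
Your proof is correct and follows essentially the same route as the paper's: case analysis on the canonical forms, inversion of \rn{t-block}/\rn{t-new}/\rn{t-var}, the observation that garbage-freeness forces $\x\in\dom{\dvs}$ and yields a dependency chain turning every free non-affine $\y$ into a connection $\Pair{\y}{\resV}$ that survives the removal of $\dom{\dvs}$, with part 2 as the contrapositive. The only real difference is that your ``key lemma'' about affine locals is vacuous: in canonical form every evaluated declaration has shape $\Dec{\C}{\x_i}{\ConstrCall{\C}{\xs}}$ with no $\capsule$ modifier, so every variable in $\dom{\dvs}$ (including the body $\x$) is non-affine by construction, which is exactly the shortcut the paper's proof uses to conclude at once that the pre-removal relation has a single non-trivial class containing $\resV$ and all the $\x_i$ together with the non-affine entries of the $\xs_i$.
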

\begin{proof}
\PG{1. By cases on the shape of canonical values. 
 \begin{itemize}
    \item If \underline{$\val=\x$}, then the only free variable in $\val$ 
      is $\x$ itself. Since $\x$ is non-affine in $\Gamma$, then the judgment  $\TypeCheck{\Gamma}{\x}{\C}{\sharingRel}$ 
     is derived by rule \rn{t-var}, hence $\Pair{\x}{\resV}\in\sharingRel$. 
    \item If \underline{$\val=\BlockLab{\dvs}{\x}{\X}$}, since $\dvs$ is in 
      canonical form and \nogarbage\ we have 
      $\dvs=\DecP{\C_1}{\x_1}{\ConstrCall{\C_1}{\xs_1}}\ldots\DecP{\C_n}{\x_n}{\ConstrCall{\C_n}{\xs_n}}$
      and $\connected{\dvs}{\x}{\x_i}$ for all $i\in 1..n$.  The judgment 
      $\TypeCheck{\Gamma}{\val}{\_}{\sharingRel}$ is derived 
      by rule \rn{t-block} with premises derived by \rn{t-new} for each declaration (which in turn 
      have premises that are derived with rules \rn{t-var} or \rn{t-affine-var})
      and rule \rn{t-var} to derive a type for the body ($\x=\x_i$ for some $1{\leq}i{\leq}n$). 
      Hence, letting $\Gamma'=\TypeDec{\x_1}{\Type{}{\C_1}},\ldots,\TypeDec{\x_n}{\Type{}{\C_n}}$, we have
\begin{itemize}
\item $\TypeCheck{\SubstFun{\Gamma}{\Gamma'}}{\ConstrCall{\C_i}{\MS{\xs_i}}}{\C_i}{\{\xs'_i,\resV\}}$ $(1\leq i\leq n)$ where $\xs'_i$ are the non affine variables in $\xs_i$
\item $\TypeCheck{\SubstFun{\Gamma}{\Gamma'}}{\x}{\C}{\{\x,\resV\}}$
\item $\sharingRel_i=\{\xs'_i,\x_i\}$
\item $\sharingRel=\Remove{\sharingRel'}{\dom{\Gamma'}}$, with $\sharingRel'=\Sum{\sum\limits_{i=1}^{n}\sharingRel_i}{\{\x,\resV\}}$
\end{itemize}
From \refToProp{lessSrRel}.\ref{p1} and $\connected{\dvs}{\x}{\x_i}$ for all $i\in 1..n$, we have 
that $\sharingRel'$ has a unique equivalence class
$\bigcup_{1\leq i\leq n}\{\xs'_i,\x_i\}\cup\{\resV\}$.
If $\y\in\FV{e}$, then $\y\in\xs_j$ for some $j\in\{1,\dots,n\}$ and $\y\not\in\{\x_1,\dots,\x_n\}$. From
the fact that $\y$ is not affine we have that $\y\in\xs'_j$. Therefore $\Pair{\y}{\resV}\in\sharingRel$.
 \end{itemize}
 }
2. If $\y$ is free in $\val$ and $\y$ is non-affine in $\Gamma$, then by the previous point we would have
$\Pair{\y}{\resV}\in\sharingRel$, contradicting  $\IsCapsule{\sharingRel}$. Hence, 
 $\Gamma(\y)=\Type{\capsule}{\C}$.
\end{proof}

\EZComm{moved here and changed the proof}
\EZ{The following lemma is a corollary of the canonical form theorem.}

\begin{lemma}\label{lemma:sharingCapsule}
If $\TypeCheck{\Gamma}{\val}{\C}{\sharingRel}$ where $\val$ is \nogarbage, and
$\IsCapsule{\sharingRel}$, then $\sharingRel=\epsilon$.
\end{lemma}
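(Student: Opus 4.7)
The plan is to derive this as a direct corollary of \refToTheorem{freevars} combined with \refToProp{invTyping1}. The idea is that $\IsCapsule{\sharingRel}$ already forces $\Closure{\resV}{\sharingRel}=\{\resV\}$, so the only way $\sharingRel$ could fail to be $\epsilon$ is to contain a non-trivial pair $\Pair{\x}{\y}$ with $\x\neq\y$ and $\x,\y\neq\resV$. I will show that this case cannot occur for a garbage-free canonical value under the capsule hypothesis.

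First I would apply \refToTheorem{freevars}.2: since $\IsCapsule{\sharingRel}$ and $\val$ is \nogarbage, every $\y\in\FV{\val}$ is affine in $\Gamma$. Next, suppose by contradiction that there exists $\Pair{\x}{\y}\in\sharingRel$ with $\x\neq\y$. I split on whether $\resV$ occurs in the pair: if $\x=\resV$ or $\y=\resV$, then $\Closure{\resV}{\sharingRel}$ strictly contains $\resV$, contradicting $\IsCapsule{\sharingRel}$. Otherwise, both $\x$ and $\y$ are different from $\resV$, and \refToProp{invTyping1} yields $\Gamma(\x)=\C$ and $\Gamma(\y)=\C'$ (both non-affine) together with $\x,\y\in\FV{\val}$, contradicting the affinity we have just established.

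Hence $\sharingRel$ has only trivial equivalence classes, i.e., $\sharingRel=\epsilon$. The argument is essentially a bookkeeping step and should fit in a few lines; I do not anticipate any real obstacle, since all the heavy lifting (the structural analysis of canonical values and the invariants on free variables versus connections) has already been carried out in \refToTheorem{freevars} and \refToProp{invTyping1}. The only mild subtlety is remembering to handle the $\resV$-cases separately, so as to use the correct clause of \refToProp{invTyping1}.
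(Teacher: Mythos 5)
Your proposal is correct and follows essentially the same route as the paper: assume a non-trivial pair in $\sharingRel$, use \refToProp{invTyping1} to exhibit a non-affine free variable of $\val$, and contradict \refToTheorem{freevars}.2 under the capsule hypothesis. Your explicit case split on whether $\resV$ occurs in the pair is a minor elaboration (the paper absorbs that case into \refToProp{invTyping1} as well), not a different argument.
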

\begin{proof}
\EZ{Assume $\sharingRel\neq\epsilon$. Then, from \refToProp{invTyping1}, there would be a free variable in $\val$ non-affine in $\Gamma$, but this is impossible by \refToTheorem{freevars}.2. }
\end{proof}

\paragraph{Subject reduction} To show subject reduction, we need some preliminary lemmas. 

The following lemma states that typing essentially depends only on the free
variables of the expression. We denote by $\Remove{\Gamma}{\x}$ the type environment obtained by removing the
type association for $\x$ from $\Gamma$, if any.
\begin{lemma}{\rm (Weakening)}\label{lemma:weakening}
Let $\TypeCheck{\Gamma}{\e}{\C}{\sharingRel}$. If $\x\not\in\FV{\e}$, then 
\begin{enumerate}
\item $\TypeCheck{\SubstFun{\Gamma}{\x{:}\T}}{\e}{\C}{\Sum{\sharingRel}{\{\x\}}}$ for all $\T$, and
\item $\TypeCheck{\Remove{\Gamma}{\x}}{\e}{\C}{\Remove{\sharingRel}{\x}}$.
\end{enumerate}
\end{lemma}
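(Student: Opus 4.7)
The plan is to prove both parts simultaneously by induction on the typing derivation $\TypeCheck{\Gamma}{\e}{\C}{\sharingRel}$, doing a case analysis on the last rule used. The intuition is that since $\x\not\in\FV{\e}$, none of the premises of any rule mentions $\x$ in an essential way, so the derivation can be replayed in $\SubstFun{\Gamma}{\x{:}\T}$ (resp.\ $\Remove{\Gamma}{\x}$) and the sharing relation only acquires (resp.\ loses) $\x$ as an isolated singleton.

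For the base cases \rn{t-var} and \rn{t-affine-var}, the variable $\y$ being typed satisfies $\y\neq\x$ since $\x\not\in\FV{\e}$, so $\Gamma(\y)=(\SubstFun{\Gamma}{\x{:}\T})(\y)=(\Remove{\Gamma}{\x})(\y)$, and the resulting $\sharingRel$ is either $\{\y,\resV\}$ or $\epsilon$, neither of which mentions $\x$; hence $\Sum{\sharingRel}{\{\x\}}$ and $\Remove{\sharingRel}{\x}$ are immediate. For the rules \rn{t-field-access}, \rn{t-field-assign}, \rn{t-new}, and \rn{t-invk}, I will apply the induction hypothesis to each premise (noting that $\x\not\in\FV{\e_i}$ for every subexpression $\e_i$ since free variables are inherited), then reassemble using \refToProp{lessSrRel}, in particular the distributivity of $+$ and the computation of $\Remove{(\Sum{\sharingRel_1}{\sharingRel_2})}{\x}$ from \refToProp{lessSrRel}.\ref{p4}, applied with $\X=\{\x\}$, observing that by induction $\Remove{\sharingRel_i}{\x}=\sharingRel_i$ (since $\x$ doesn't appear there).

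The interesting case is \rn{t-block}, where the derivation introduces local variables $\Gamma'=\TypeDec{\x_1}{\T_1},\ldots,\TypeDec{\x_n}{\T_n}$. By $\alpha$-conversion (rule \rn{alpha} of the congruence) we may freely assume $\x\neq\x_i$ for all $i$, so $\SubstFun{(\SubstFun{\Gamma}{\x{:}\T})}{\Gamma'}=\SubstFun{(\SubstFun{\Gamma}{\Gamma'})}{\x{:}\T}$ and similarly for removal. The induction hypothesis applied to each $\e_i$ and to the body $\e$ yields premises of the same shape, with the sharing relations $\sharingRel_i$ and $\sharingRel'$ either augmented by $\{\x\}$ or with $\x$ removed. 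The capsule side condition $\IsCapsule{\sharingRel_i}$ is preserved in both directions because adding or removing a singleton $\{\x\}$ (with $\x\not\in\Closure{\resV}{\sharingRel_i}$, which holds since $\x\not\in\FV{\e_i}$ by \refToProp{invTyping1}) does not move $\resV$'s equivalence class. Finally, combining the premises, the outer operation $\Remove{(\cdot)}{\dom{\Gamma'}}$ commutes with adding or removing $\{\x\}$, again using \refToProp{lessSrRel}.\ref{p4} and the fact that $\x\not\in\dom{\Gamma'}$.

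The main obstacle I anticipate is purely bookkeeping: verifying that the various sharing-relation operations commute in the precise way required by rule \rn{t-block} and rule \rn{t-invk} (which has an analogous structure with its own outer $\Remove{(\cdot)}{\{\this,\x_1,\ldots,\x_n\}}$). This reduces to elementary facts about $+$, $\setminus$, and $[\y/\x]$ already collected in \refToProp{lessSrRel}, together with the observation that $\x$ appears as a singleton in every inductively obtained relation, so one can freely push $\{\x\}$ through sums and removals. No substantive new idea is needed beyond the induction and careful use of $\alpha$-conversion to avoid name clashes with bound variables.
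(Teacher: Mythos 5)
Your proposal is correct and follows exactly the route the paper takes: the paper's own proof of \refToLemma{weakening} is simply ``by induction on derivations'', and your case analysis (with $\alpha$-conversion for the bound variables of blocks, \refToProp{invTyping1} to see that $\x$ stays a singleton, and \refToProp{lessSrRel} to push $\{\x\}$ through sums and removals) is just a detailed unfolding of that same induction.
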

\begin{proof}
By induction on derivations.  
\end{proof}
%Even though the sharing relation $\sharingRel$ does not change, its domain 
%is extended when adding a variable $\x$ to the context (the 
%equivalence class of $\x$ contains only $\x$), and restricted when removing a variable.
%
The following lemma states the dependency between the type and sharing relation derived for
a block \EZ{and} the ones derived for its declarations and body. 
\begin{lemma}{\rm (Inversion for blocks)}\label{lemma:invBlock} 
If $\TypeCheck{\Gamma}{\BlockLab{\decs}{\e}{\X}}{\C}{\sharingRel}$, then 
\begin{itemize}
  \item $\TypeCheckDecs{\Gamma}{\decs}{\sharingRel_{\decs}}$ for some $\sharingRel_{\decs}$ 
  \item $\TypeCheck{\Gamma}{{\e}}{\C}{\sharingRel_{\e}}$ for some $\sharingRel_{\e}$ such that
  \item $\sharingRel=\Remove{(\Sum{\sharingRel_{\decs}}{\sharingRel_{\e}})}{\dom{\decs}}$ and $\X=\Closure{\resV}{(\Sum{\sharingRel_{\decs}}{\sharingRel_{\e}})}\cap\dom{\decs}$.
\end{itemize}
\end{lemma}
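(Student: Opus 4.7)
The approach is a straightforward inversion of the typing rule \rn{t-block}, followed by repackaging the premises into the form demanded by the statement. The only typing rule whose conclusion matches a block is \rn{t-block}, so from the assumed derivation of $\TypeCheck{\Gamma}{\BlockLab{\decs}{\e}{\X}}{\C}{\sharingRel}$ we immediately read off the ingredients of its premise: writing $\decs = \DecP{\T_1}{\x_1}{\e_1}\ldots\DecP{\T_n}{\x_n}{\e_n}$ and $\Gamma' = \TypeDec{\x_1}{\T_1},\ldots,\TypeDec{\x_n}{\T_n}$, we obtain derivations $\TypeCheck{\SubstFun{\Gamma}{\Gamma'}}{\e_i}{\C_i}{\sharingRel_i}$ for $1\leq i\leq n$, the derivation $\TypeCheck{\SubstFun{\Gamma}{\Gamma'}}{\e}{\C}{\sharingRel_\e}$ for the body, the capsule side conditions ($\IsCapsule{\sharingRel_i}$ whenever $\T_i=\Type{\capsule}{\C_i}$), and the identities
\myformula{$\sharingRel_{\mathrm{tot}} = \Sum{\sum\limits_{i=1}^{n}\SubstEqRel{\sharingRel_i}{\x_i}{\resV}}{\sharingRel_\e}, \qquad \sharingRel = \Remove{\sharingRel_{\mathrm{tot}}}{\dom{\Gamma'}}, \qquad \X = \Closure{\resV}{\sharingRel_{\mathrm{tot}}}\cap\dom{\Gamma'}.$}

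Now I simply define $\sharingRel_{\decs} = \sum_{i=1}^n \SubstEqRel{\sharingRel_i}{\x_i}{\resV}$ and keep $\sharingRel_\e$ as given. The derivations of $\TypeCheck{\SubstFun{\Gamma}{\Gamma'}}{\e_i}{\C_i}{\sharingRel_i}$ together with the capsule side conditions are exactly the three conditions demanded by \refToDef{typeBlock} for a sequence of declarations, yielding $\TypeCheckDecs{\SubstFun{\Gamma}{\Gamma'}}{\decs}{\sharingRel_{\decs}}$; combined with the body derivation this establishes the first two clauses. The remaining two equalities, $\sharingRel = \Remove{(\Sum{\sharingRel_{\decs}}{\sharingRel_\e})}{\dom{\decs}}$ and $\X = \Closure{\resV}{(\Sum{\sharingRel_{\decs}}{\sharingRel_\e})}\cap\dom{\decs}$, are then just rewritings of the \rn{t-block} identities, using $\dom{\Gamma'} = \dom{\decs}$ and $\sharingRel_{\mathrm{tot}} = \Sum{\sharingRel_{\decs}}{\sharingRel_\e}$.

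There is essentially no obstacle: the entire content of the lemma is that the side-conditions of \rn{t-block} can be rebundled as a declaration judgment plus a body judgment. The only point that requires a word of care is that the typing of the $\e_i$ and of $\e$ occurs in the extended environment $\SubstFun{\Gamma}{\Gamma'}$ rather than in $\Gamma$ itself, consistent with the reading of $\Gamma$ in \refToDef{typeBlock} as including the locally declared variables (which is needed anyway to support mutually recursive declarations). No induction is required, since a single rule applies.
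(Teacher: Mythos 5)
Your proposal is correct and matches the paper's own (one-line) proof, which likewise just inverts rule \rn{t-block} and repackages its premises via \refToDef{typeBlock}; your explicit choice $\sharingRel_{\decs}=\sum_{i}\SubstEqRel{\sharingRel_i}{\x_i}{\resV}$ is exactly the intended reading. Your remark that the declarations and body are typed in $\SubstFun{\Gamma}{\Gamma'}$ rather than in $\Gamma$ as literally written is also consistent with how the paper applies the lemma (always in the environment extended with $\Gamma_{\decs}$), so no further argument is needed.
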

\begin{proof}
By rule \rn{T-Block} and definition of the type judgement for declarations. 
\end{proof}

The following lemma asserts that congruent expressions have the same type and
sharing effects. Regarding annotations, which are uniquely determined by the
type derivation, if one of the two expressions is well-typed, hence its annotations are those derived
from its typing, then the annotations of the other are also uniquely determined.
\begin{lemma}{\rm (Congruence preserves types)}\label{lemma:congruence}
Let $\e_1$ and $\e_2$ be annotated expressions. 
If $\TypeCheck{\Gamma}{\e_1}{\C}{\sharingRel}$
and $\congruence{\e_1}{\e_2}$, then $\TypeCheck{\Gamma}{\e'_2}{\C}{\sharingRel}$ for some $\e'_2$ such that $\e_2'\variant\e_2$.\end{lemma}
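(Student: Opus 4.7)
The plan is to proceed by induction on the derivation of $\congruence{\e_1}{\e_2}$, handling first the axioms of Figure~\ref{fig:congruence} and then the closure rules making $\congruence{}{}$ into a congruence.

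For the closure rules (reflexivity, symmetry, transitivity, and congruence closure under the various syntactic constructors), everything follows routinely from the induction hypothesis. Reflexivity is immediate by taking $\e_2' = \e_1$. For transitivity, the IH applied twice gives typeable witnesses $\e_2'$ and $\e_3'$; since annotations in a well-typed term are determined by the derivation, the composition works. For symmetry, we observe that \refToLemma{congruence} applied to $\e_2 \congruence{}{} \e_1$ would give a typeable $\e_1'$ with $\e_1' \variant \e_1$, and since $\e_1$ is already typeable with the very annotations produced by its (unique) derivation, it serves as the witness directly. Congruence closure under each expression-former is handled by rebuilding the type derivation with the new subderivation provided by the IH.

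For the axioms I proceed case by case. Rule \rn{alpha} follows from a standard substitution/renaming lemma for the typing judgment, noting that variable names are irrelevant to the sharing relation up to the corresponding renaming. Rule \rn{reorder} is immediate because \refToDef{typeBlock} uses a sum $\sum_{i=1}^n(\SubstEqRel{\sharingRel_i}{\x_i}{\resV})$ which is commutative and associative (by the remark after \refToProp{lessSrRel}), and rule \rn{t-new} evaluates $\Dec{\C}{\x}{\ConstrCall{\C}{\vs}}$ before $\decs$ can depend on $\x$. Rule \rn{new} is a direct two-line derivation. Rule \rn{block-elim} follows since, with $n=0$, the block rule \rn{t-block} yields exactly the sharing relation and type of the body. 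The remaining axioms are the substantive cases.

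For rules \rn{dec}, \rn{body}, and \rn{val-ctx}, the strategy is to apply \refToLemma{invBlock} (inversion for blocks) to the outer block on the left-hand side, then to the declaration or body containing the inner block, and to assemble a typing derivation for the right-hand side. Let me sketch \rn{dec}: inversion on the outer block gives $\sharingRel_{\dvs}$, $\sharingRel_{\x}$ for $\Dec{\Type{\mu}{\C}}{\x}{\cdot}$, $\sharingRel_{\decs'}$, and $\sharingRel_{\e'}$; inversion on the inner block inside the declaration of $\x$ gives $\sharingRel_{\dvs_1}$, $\sharingRel_{\decs_2}$, and $\sharingRel_{\e}$. For the right-hand side I reuse the \emph{same} premise derivations, which are admissible because the side condition $\FV{\dvs\,\decs'\,\e'}\cap\dom{\dvs_1}=\emptyset$ together with weakening (\refToLemma{weakening}) lets me add $\dom{\dvs_1}$ to the environment without affecting the types of $\dvs$, $\decs'$, $\e'$; and the first side condition $\FV{\dvs_1}\cap\dom{\decs_2}=\emptyset$ prevents scope issues for $\dvs_1$. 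The final sharing relation on both sides reduces to $\Remove{(\Sum{\sharingRel_{\dvs}}{\Sum{\sharingRel_{\dvs_1}}{\Sum{\sharingRel_{\decs_2}}{\Sum{\sharingRel_{\decs'}}{\sharingRel_{\e'}}}}})}{\dom{\dvs\,\dvs_1\,\x\,\decs'}}$ after noting that the intermediate removal of $\dom{\dvs_1}$ on the left commutes with the final removal, since by the second side condition the variables in $\dom{\dvs_1}$ do not appear free in $\dvs$, $\decs'$, or $\e'$. I appeal to \refToProp{lessSrRel}.\ref{p4} to manipulate these removals. The third side condition $\mu = \capsule \Rightarrow \dom{\dvs_1} \cap \X = \emptyset$ ensures that whenever the declaration for $\x$ is affine, the $\IsCapsule{\cdot}$ premise of rule \rn{t-block} still holds on the right: no variable connected to the result of the inner block is moved out. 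I choose the new annotation $\X' = \X \setminus \dom{\dvs_1}$ on the inner block and recompute $\Y'$ according to rule \rn{t-block} to produce the witness $\e_2'$.

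The hard part will be the bookkeeping in \rn{dec}: tracking how the annotations $\X$ and $\Y$ transform into $\X'$ and $\Y'$ under the relocation of $\dvs_1$, and checking that the affinity side condition on $\mu$ exactly corresponds to what is needed to preserve the capsule premise of rule \rn{t-block} on the right. Rules \rn{body} and \rn{val-ctx} are structurally similar but simpler (no affine constraint to discharge, and for \rn{val-ctx} the relocated value is in evaluation position).
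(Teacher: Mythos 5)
Your plan establishes only half of what the lemma requires, and the omitted half is where the paper does its real work. Since $\congruence{}{}$ is the smallest \emph{congruence} generated by the axioms of \refToFigure{congruence}, it is symmetric, so every axiom must preserve typability in \emph{both} directions; the paper's proof accordingly shows an ``if and only if'' for each axiom it treats, and factors the key content into \refToLemma{extrusion}, whose ``if'' half is precisely the inward direction. Your sketch of \rn{dec} covers only the outward move of $\dvs_1$, and your explicit symmetry case is circular: when the last rule is symmetry with premise $\congruence{\e_2}{\e_1}$, the induction hypothesis says that typability of $\e_2$ yields a typeable variant of $\e_1$, which is the converse of the goal; observing that $\e_1$ ``serves as the witness'' produces no typing for any variant of $\e_2$. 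The repair is to strengthen the induction to a biconditional (typability of either side yields a typeable variant of the other, with the same type and sharing relation), and then to check each axiom in both directions. The inward direction of \rn{dec} is not symmetric bookkeeping: when $\dvs_1$ is moved back inside a block initializing an affine variable, you must re-establish the premise $\IsCapsule{\cdot}$ of rule \rn{t-block} for the enlarged block. This needs \refToProp{invTyping1} (the sharing relation of $\dvs_1$ only involves its free variables and $\dom{\dvs_1}$, hence, by the side condition $\FV{\dvs_1}\cap\dom{\decs_2}=\emptyset$, cannot create new connections from the result through $\dom{\decs_2}$ to the outside), combined with the capsule hypothesis on the smaller block; this argument, the ``if'' half of \refToLemma{extrusion}, is entirely absent from your plan.

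Two smaller points. Your outward \rn{dec} case is essentially the paper's argument (inversion for blocks, weakening, \refToProp{lessSrRel}.\ref{p4}, the capsule side condition, annotation $\X\setminus\dom{\dvs_1}$), but the step ``the removals commute because $\dom{\dvs_1}$ is not free in $\dvs,\decs',\e'$'' silently uses that sharing relations of typed terms only relate free variables, i.e.\ \refToProp{invTyping1}, which you should invoke explicitly. Also, in your transitivity case the middle term is recovered only up to annotations ($\variant$), while the side conditions of the congruence rules mention annotations (e.g.\ $\X$ in \rn{dec}); to chain the two induction hypotheses you need the further observation that congruence steps are insensitive to re-annotation (or you must work with erased terms and re-annotate at the end), which should be stated rather than assumed.
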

\begin{proof}
The proof is in~\ref{app:proofs}.  
\end{proof}
\PG{In the following when
we have $\TypeCheck{\Gamma}{\e_1}{\C}{\sharingRel}$ and $\congruence{\e_1}{\e_2}$ we assume  
$\TypeCheck{\Gamma}{\e_2}{\C}{\sharingRel}$, that is, we picked the term with the right annotations.}

\bigskip
The {\em type environment extracted from $\decs$}, denoted $\TypeEnv{\decs}$, 
is defined by: 
\begin{center}
$\Gamma_{\decs}=\TypeDec{\x_1}{\T_1},\ldots,\TypeDec{\x_n}{\T_n}$ if $\decs=\Dec{\T_1}{\x_1}{\e_1}\ldots\Dec{\T_n}{\x_n}{\e_n}$.
\end{center}
Given an evaluation context $\ctx$, the {\em type 
environment extracted from $\ctx$}, denoted $\Gamma_{\ctx}$, is defined by:
\begin{itemize}
  \item $\Gamma_{\emptyctx}$ is the empty type environment,
  \item $\Gamma_{\Block{\dvs\,\Dec{\T}{\x}{\ctx}\ \decs}{\e}}=\SubstFun{(\Gamma_{\dvs\, \decs})[\TypeDec{\x}{\T}]}{\Gamma_{\ctx}}$ and
  \item $\Gamma_{{\Block{\dvs}{\ctx}}}=\SubstFun{\Gamma_{\dvs}}{\Gamma_{\ctx}}$.
\end{itemize}
The following lemma asserts that subexpressions of typable expressions are
themselves typable, and may be replaced with expressions that have the same
type and the same or possibly less sharing effects. \PG{Annotations may change
by effect of the reduced sharing effects since the equivalence class of
$\resV$ in the reduced sharing relations may contain less variables.}

\begin{lemma}{\rm (Context)}\label{lemma:context}
Let $\TypeCheck{\Gamma}{\Ctx{\e}}{\C}{\sharingRel}$, then
\begin{enumerate}
  \item $\TypeCheck{\Gamma[\Gamma_{\ctx}]}{{\e}}{\D}{\sharingRel_1}$ for some 
    $\D$ and $\sharingRel_1$,
  \item if $\TypeCheck{\Gamma[\Gamma_{\ctx}]}{{\e'}}{\D}{\sharingRel_2}$ where  
    $\Finer{\sharingRel_2}{\sharingRel_1}$ (${\sharingRel_2}={\sharingRel_1}$), 
    then $\TypeCheck{\Gamma}{\CtxP{\e'}}{\C}{\sharingRel'}$ for some $\ctxP$ such that
    $\CtxP{\e'}\variant\Ctx{\e'}$ and
    $\Finer{\sharingRel'}{\sharingRel}$ (${\sharingRel'}={\sharingRel}$). 
\end{enumerate}
\end{lemma}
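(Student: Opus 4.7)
The plan is to proceed by induction on the structure of the evaluation context $\ctx$, proving both items simultaneously.

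For the base case $\ctx = \emptyctx$ we have $\Ctx{\e} = \e$ and $\Gamma[\Gamma_\ctx] = \Gamma$, so part~1 holds with $\D = \C$ and $\sharingRel_1 = \sharingRel$, and part~2 holds with $\ctxP = \emptyctx$ and $\sharingRel' = \sharingRel_2$.

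In the inductive step where the hole sits in a declaration, $\ctx = \BlockLab{\dvs\,\Dec{\T}{\x}{\ctxS}\,\decs}{\e_b}{\X}$ for some sub-context $\ctxS$. Applying Inversion for blocks (\refToLemma{invBlock}) to the hypothesis yields, under $\Gamma'' = \SubstFun{\Gamma}{\Gamma'}$ where $\Gamma' = \TypeEnv{\dvs\,\Dec{\T}{\x}{\_}\,\decs}$ collects all local declarations, a typing $\TypeCheck{\Gamma''}{\CtxS{\e}}{\T_x}{\sharingRel_x}$ for the initialisation of $\x$, together with typings for the remaining declarations and for the body. The inductive hypothesis on $\ctxS$ for part~1 gives $\TypeCheck{\Gamma''[\Gamma_{\ctxS}]}{\e}{\D}{\sharingRel_1}$; unfolding the definition of $\Gamma_\ctx$ (using $\Gamma_\ctx = \Gamma'[\Gamma_{\ctxS}]$ and associativity of type-environment override) shows $\Gamma''[\Gamma_{\ctxS}] = \Gamma[\Gamma_\ctx]$, establishing part~1. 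For part~2, the inductive hypothesis also supplies a new sub-context whose filling with $\e'$ is $\variant$-equivalent to $\CtxS{\e'}$ and has type $\T_x$ with sharing $\sharingRel'_x$ satisfying $\Finer{\sharingRel'_x}{\sharingRel_x}$. Placing this new sub-context in the declaration slot of $\x$ and reapplying rule \rn{t-block} produces the outer $\ctxP$, possibly with a strictly smaller block annotation. The case $\ctx = \BlockLab{\dvs}{\ctxS}{\X}$ is completely analogous, applying the inductive hypothesis in the body position.

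To check the sharing inclusion in part~2, we combine monotonicity of $+$ and $\setminus$ (\refToProp{lessSrRel}.\ref{p2} and \ref{p3}) with the analogous monotonicity of $\SubstEqRel{\cdot}{\x_i}{\resV}$ that appears in rule \rn{t-block}; the argument for substitution is already contained in the proof of \refToProp{lessSrRel}.\ref{p3}. The capsule side condition, if $\T = \Type{\capsule}{\C}$, is preserved because $\Closure{\resV}{\sharingRel'_x} \subseteq \Closure{\resV}{\sharingRel_x} = \{\resV\}$. When $\sharingRel_2 = \sharingRel_1$, the reconstruction yields exactly $\sharingRel$ and the same annotation, matching the equality version of the conclusion. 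The main obstacle lies in this annotation bookkeeping: shrinking the effects of a subterm can strictly shrink the enclosing $\Closure{\resV}{\cdot}$, hence the block annotation $\X$, so the rebuilt context is in general only $\variant$-equivalent, not syntactically identical, to $\Ctx{\e'}$. This is exactly why the statement phrases the conclusion as $\CtxP{\e'} \variant \Ctx{\e'}$ rather than syntactic equality.
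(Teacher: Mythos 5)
Your proposal is correct and follows essentially the same route as the paper's proof: induction on evaluation contexts, inversion for blocks, monotonicity of $+$, $\setminus$ and the $\resV$-substitution (\refToProp{lessSrRel}), preservation of the capsule side condition via $\Closure{\resV}{\sharingRel'_x}\subseteq\Closure{\resV}{\sharingRel_x}$, and rebuilding with rule \rn{t-block}, which is exactly why only $\variant$-equivalence of the annotated contexts is claimed. The only cosmetic difference is that the paper proves item 1 separately and then uses it in the induction for item 2, whereas you run both items in one induction; this changes nothing of substance.
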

\begin{proof}
The proof is in~\ref{app:proofs}.  
\end{proof}

The following lemma is used to prove that the elimination rules, namely \rn{alias-elim} and \rn{\EZ{affine-elim}},
do not  introduce sharing. In particular, for \rn{alias-elim} a non-affine variable $\x$ is
substituted with a non-affine variable $\y$ which is  in the equivalence class of $\x$ in the sharing relation
$\sharingRel$, so that there is no newly produced connection. For \rn{\EZ{affine-elim}}, an affine
variable is substituted with a capsule value, so also in this case there is no newly produced connection.
\begin{lemma} {\rm (Substitution)}\label{lemma:substitution}
  \begin{enumerate}
    \item %Let  $\TypeCheck{\Gamma,\x{:}\D}{\e}{\C}{\sharingRel}$. 
      If $\TypeCheck{\Gamma,\x{:}\D,\y{:}\D}{\e}{\C}{\sharingRel}$, then 
      $\TypeCheck{\EZ{\Gamma}}{\Subst{\e}{\y}{\x}}{\C}{\Remove{\sharingRel}{\x}}$. 
    \item Let $\TypeCheck{\Gamma,\x{:}\Type{\capsule}{\D}}{\e}{\C}{\sharingRel}$. If
      $\TypeCheck{\Gamma}{\val}{\D}{\epsilon}$, then 
      $\TypeCheck{\EZ{\Gamma}}{\SubstVal{\e}{\val}{\x}}{\C}{\sharingRel}$. 
  \end{enumerate}
\end{lemma}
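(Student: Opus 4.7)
The plan is to prove both parts by structural induction on the typing derivation of $\e$, leaning on the compositional properties of sharing relations collected in \refToProp{lessSrRel}.

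For part~1, I would first dispatch the base cases. When $\e = \x$ is typed by rule \rn{t-var}, we have $\sharingRel = \{\x,\resV\}$ and $\Subst{\e}{\y}{\x} = \y$, retyped by \rn{t-var} with sharing $\{\y,\resV\}$. In the context where the lemma is applied, namely the \rn{alias-elim} reduction, $\x$ is declared as an alias of $\y$ so that $\y \in \Closure{\x}{\sharingRel}$ in the ambient derivation; I would appeal to \refToProp{lessSrRel}.\ref{p5} to identify $\sharingRel[\y/\x]$ with $\Remove{\sharingRel}{\x}$. For any other variable, and for an affine variable, the substitution is a no-op and \refToLemma{weakening} closes the goal. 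The inductive steps for \rn{t-field-access}, \rn{t-field-assign}, \rn{t-new}, \rn{t-invk}, and \rn{t-block} follow a uniform scheme: apply the induction hypothesis to each typed subexpression and reassemble the results using the sum operation and, for blocks, the removal of locally bound variables. The key technical calculation is the commutation of the substitution with $+$ and $\setminus$, handled by parts~\ref{p2}, \ref{p3}, and especially \ref{p4} of \refToProp{lessSrRel}.

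For part~2, affinity guarantees that $\x$ occurs at most once in $\e$. The only nontrivial base case is rule \rn{t-affine-var}: there $\sharingRel = \epsilon$, and $\SubstVal{\x}{\val}{\x} = \val$ is already typed with $\epsilon$ by assumption, so \refToLemma{weakening} is used only to align type environments. When $\x$ does not occur in the subexpression being typed, the substitution is the identity and weakening again suffices. For each compound rule, at most one premise can contain $\x$; I would apply the induction hypothesis to that premise, leave the others unchanged, and observe that the aggregate sharing relation computed by the rule is preserved exactly, since the hypothesis yields the same $\sharingRel_i$ for the substituted subexpression.

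The main obstacle I expect is the block case in part~2. There I must push the substitution through all initialization expressions and the body, using $\alpha$-conversion (rule \rn{alpha}) to avoid capture of free variables of $\val$ by locally bound names. If $\x$ occurs in the initializer $\e_i$ of an affine local variable, the induction hypothesis delivers the substituted expression with the same sharing $\sharingRel_i$, so the capsule side condition $\IsCapsule{\sharingRel_i}$ required by \rn{t-block} is automatically preserved. Finally, the annotation $\X$ of the block, which records local variables connected to $\resV$, is unchanged because the aggregate sharing relation is preserved, so the annotated term produced by \rn{t-block} on the rewritten block is well-formed without further adjustment.
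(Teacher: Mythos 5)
Your skeleton — induction on the typing derivation, with \refToProp{lessSrRel}.\ref{p5} doing the work for part 1 and routine recombination for the compound rules — is the same as the paper's (its proof is literally ``by induction on the type derivation, using \refToProp{lessSrRel}.\ref{p5} for point 1''), and your treatment of part 2 is essentially right. The genuine problem is your base case for part 1. For $\e=\x$ you correctly compute that $\Subst{\x}{\y}{\x}=\y$ is typed with $\{\y,\resV\}$ while the claimed relation is $\Remove{\{\x,\resV\}}{\x}=\epsilon$, and to bridge the two you appeal to ``the context where the lemma is applied'' and the ``ambient derivation'' of rule \rn{alias-elim} to obtain $\y\in\Closure{\x}{\sharingRel}$ before invoking \refToProp{lessSrRel}.\ref{p5}. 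That is not a legitimate proof step: the lemma is a standalone statement and must be proved from its own hypotheses, which say nothing about $\x$ and $\y$ being connected in $\sharingRel$; you cannot import the alias declaration from the call site. Moreover, the appeal would not even help there: in the \rn{alias-elim} case of subject reduction the lemma is applied separately to the typings of the trailing declarations $\decs$ and of the body, and in those judgments $\y$ need not lie in the equivalence class of $\x$ (e.g.\ a declaration typed with sharing $\{\x,\z\}$, where substitution yields $\{\y,\z\}$, not $\{\z\}$).

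The self-contained way to run your induction is to prove the statement with $\SubstEqRel{\sharingRel}{\y}{\x}$ in place of $\Remove{\sharingRel}{\x}$: the variable case is then immediate, and the compound cases go through because renaming $\x$ to $\y$ commutes with $+$ and with removal of block-local variables, whereas carrying $\Remove{\sharingRel}{\x}$ through the induction would itself break down — removal does not distribute over $+$ when $\x$ occurs in several premises (this is exactly the side condition of \refToProp{lessSrRel}.\ref{p4}), so for instance \rn{t-field-assign} with $\x$ in both subexpressions would not reassemble. Only afterwards should \refToProp{lessSrRel}.\ref{p5} be used to identify $\SubstEqRel{\sharingRel}{\y}{\x}$ with $\Remove{\sharingRel}{\x}$, under the explicit hypothesis $\Pair{\x}{\y}\in\sharingRel$ — a hypothesis that must either be added to the lemma or discharged by the caller (where the eliminated declaration $\Dec{\C}{\x}{\y}$ does contribute the pair $\Pair{\x}{\y}$ to the block's relation), not assumed inside the lemma's proof. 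So you identified the right key fact, but the connection hypothesis has to be made explicit and placed outside the induction rather than smuggled in from the ambient reduction.
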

\begin{proof}
By induction on type derivation. For point 1. we use \refToProp{lessSrRel}.\ref{p5}.
\end{proof}
The previous lemma can be easily extended to the type checking judgement for declarations
$\TypeCheckDecs{\EZ{\Gamma}}{\decs}{\sharingRel}$.

The following lemma asserts that the sharing relation of a subexpression is finer than the
sharing relation of the expression that contains it.
\begin{lemma}\label{lemma:monotoneSharing}
Let $\TypeCheck{\Gamma}{\Ctx{e}}{\C}{\sharingRel}$ and $\TypeCheck{\Gamma}{\e}{\D}{\sharingRel'}$. 
If $\Pair{\x}{\y}\in\sharingRel'$ with $\x,\y\not\in\HB{\ctx}$ and $\x,\y\neq\resV$, 
then $\Pair{\x}{\y}\in\sharingRel$.
\end{lemma}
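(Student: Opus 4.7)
The plan is a straightforward induction on the structure of the evaluation context $\ctx$, using Lemma~\ref{lemma:invBlock} (inversion for blocks) at each inductive step to relate the sharing relation of a block to those of its declarations and body. Throughout, the hypotheses $\x,\y\notin\HB{\ctx}$ and $\x,\y\neq\resV$ are exactly what makes the pair $\Pair{\x}{\y}$ survive the operations ($\Remove{\cdot}{\dom{\decs}}$ and $\SubstEqRel{\cdot}{\x_i}{\resV}$) that occur when composing sharing relations in rule \rn{t-block}.

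In the base case $\ctx=\emptyctx$ we have $\Ctx{\e}=\e$ and, by determinism of the typing up to annotations, $\sharingRel=\sharingRel'$, so the claim is immediate. For the inductive step with $\ctx=\BlockLab{\dvs\ \Dec{\T}{\z}{\ctx'}\ \decs}{\e_0}{\X}$, I would apply \refToLemma{invBlock} to decompose $\sharingRel$ as $\Remove{(\Sum{\sharingRel_{\decs'}}{\sharingRel_{\e_0}})}{\dom{\decs'}}$, where $\decs'=\dvs\ \Dec{\T}{\z}{\Ctx{\e}}\ \decs$, and then use the definition of $\TypeCheckDecs{\cdot}{\decs'}{\sharingRel_{\decs'}}$ together with the induction hypothesis applied to $\ctx'$ (whose hole binders are a subset of $\HB{\ctx}$) to conclude that $\Pair{\x}{\y}$ appears in the sharing relation typing $\CtxP{\e}$. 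Since $\x,\y\neq\resV$ and $\x,\y\neq\z$ (as $\z\in\HB{\ctx}$), the substitution $\SubstEqRel{\cdot}{\z}{\resV}$ that goes into building $\sharingRel_{\decs'}$ preserves $\Pair{\x}{\y}$; and since $\x,\y\notin\dom{\decs'}\subseteq\HB{\ctx}$, the subsequent removal does so as well. The case $\ctx=\BlockLab{\dvs}{\ctx'}{\X}$ is analogous but simpler: the body's sharing relation is used directly, and the $\Remove{\cdot}{\dom{\dvs}}$ does not affect $\Pair{\x}{\y}$ because $\dom{\dvs}\subseteq\HB{\ctx}$.

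The only point that requires care is the bookkeeping showing that $\Pair{\x}{\y}$ is unaffected by the substitution $\SubstEqRel{\cdot}{\z}{\resV}$ used in \rn{t-block} when we treat the declaration of $\z$ whose initializer is the context $\ctx'[\e]$. This follows because the assumption $\x,\y\notin\HB{\ctx}$ rules out $\x=\z$ and $\y=\z$, while $\x,\y\neq\resV$ is part of the hypothesis, so the substitution acts as the identity on the pair. With this observation, the inclusions $\HB{\ctx'}\subseteq\HB{\ctx}$ and $\dom{\dvs},\dom{\decs}\subseteq\HB{\ctx}$ make everything go through mechanically, and no separate reasoning about $\sharingRel_{\e_0}$ or the declarations in $\dvs,\decs$ is needed: it is enough that the pair produced by the IH on $\ctx'$ survives being added (by $+$) and then cut down (by $\Remove{\cdot}{\dom{\decs'}}$).
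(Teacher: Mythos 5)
Your proof follows essentially the same route as the paper's: induction on the structure of the evaluation context, inversion for blocks (\refToLemma{invBlock}) at each step, the induction hypothesis applied to the inner context, and the observation that $\Pair{\x}{\y}$ survives both the substitution $\SubstEqRel{\cdot}{\z}{\resV}$ and the removal of the block's local variables because $\x,\y\neq\resV$ and $\x,\y\not\in\HB{\ctx}$. The only blemish is the notational slip writing $\Ctx{\e}$ where the initializer of $\z$ should be $\ctxP[\e]$, which your later sentences make clear is what you intend.
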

\begin{proof}
The proof is in~\ref{app:proofs}.  
\end{proof}
The following lemma {states} a technical property needed to prove that sharing
relations are preserved when we reduce a field access redex $\FieldAccess{\x}{\f}$ \EZ{to the reference $\y$ stored in the field. Recall that a set of variables $\X$ stands for the sharing relation where $\X$ is an equivalence class and the others are singletons.}
\begin{lemma}\label{lemma:fieldAcc}
If $\TypeCheck{\Gamma}{\Ctx{e_1}}{\C}{\sharingRel_1}$, $\TypeCheck{\Gamma}{\Ctx{e_2}}{\C}{\sharingRel_2}$,
$\TypeCheck{\Gamma}{\e_1}{\D}{\{\x,\resV\}}$ and $\TypeCheck{\Gamma}{\e_2}{\D}{\{\y,\resV\}}$ with 
$\{\x,\y\}\cap\HB{\ctx}=\emptyset$. Then $\sharingRel_1+\{\x,\y\}=\sharingRel_2+\{\x,\y\}$.
\end{lemma}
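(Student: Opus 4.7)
The plan is to proceed by induction on the structure of the evaluation context $\ctx$, using Lemma~\ref{lemma:invBlock} (Inversion for blocks) to unfold the sharing relations $\sharingRel_1$ and $\sharingRel_2$ in the inductive step.

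For the base case $\ctx=\emptyctx$, the term $\Ctx{\e_i}$ is just $\e_i$, so uniqueness of typing gives $\sharingRel_1=\{\x,\resV\}$ and $\sharingRel_2=\{\y,\resV\}$. Then both $\sharingRel_1+\{\x,\y\}$ and $\sharingRel_2+\{\x,\y\}$ coincide with the equivalence relation whose only non-singleton class is $\{\x,\y,\resV\}$.

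For the inductive step, $\ctx$ is either of the form $\BlockLab{\dvs\,\Dec{\T}{\z}{\ctxP}\,\decs}{\e}{\X}$ or $\BlockLab{\dvs}{\ctxP}{\X}$. In either case, applying inversion on the typing of $\Ctx{\e_i}$ yields a sharing relation $\sharingRel'_i$ for $\CtxP{\e_i}$ such that
$$\sharingRel_i = \Remove{(\sharingRel_{\text{com}} + \widehat{\sharingRel}'_i)}{\dom{\Gamma'}},$$
where $\Gamma'$ collects the local variables bound by the block, $\sharingRel_{\text{com}}$ is the common contribution from all the other declarations and the body (independent of $i$), and $\widehat{\sharingRel}'_i$ equals $\sharingRel'_i$ in the body case, or $\SubstEqRel{\sharingRel'_i}{\z}{\resV}$ in the declaration case. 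Since $\HB{\ctxP}\subseteq\HB{\ctx}$, the side condition $\{\x,\y\}\cap\HB{\ctxP}=\emptyset$ is inherited, so the induction hypothesis applied to $\ctxP$ gives $\sharingRel'_1+\{\x,\y\}=\sharingRel'_2+\{\x,\y\}$.

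To transport this equality through the construction of $\sharingRel_i$, I would use two commutation facts, both driven by the assumption that $\{\x,\y\}$ is disjoint from $\HB{\ctx}$ (hence from $\{\z\}\cup\dom{\Gamma'}$). First, because $\z\notin\{\x,\y\}$ and $\resV\notin\{\x,\y\}$, the substitution $[\z/\resV]$ commutes with adding $\{\x,\y\}$, so $\SubstEqRel{(\sharingRel'_i+\{\x,\y\})}{\z}{\resV}=\SubstEqRel{\sharingRel'_i}{\z}{\resV}+\{\x,\y\}$. Second, since $\{\x,\y\}\cap\dom{\Gamma'}=\emptyset$, the relation $\{\x,\y\}$ is unaffected by removing $\dom{\Gamma'}$, so by Proposition~\ref{prop:lessSrRel}.\ref{p4} the removal commutes with the addition of $\{\x,\y\}$. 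Chaining these commutations and appealing to the induction hypothesis, $\sharingRel_1+\{\x,\y\}$ and $\sharingRel_2+\{\x,\y\}$ reduce to the same expression.

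The main obstacle is purely bookkeeping: keeping track of exactly how $\sharingRel'_i$ contributes to $\sharingRel_i$ through the block typing rule, and verifying the two commutation facts in enough detail to justify the final equality. Once the disjointness $\{\x,\y\}\cap\HB{\ctx}=\emptyset$ is used to justify both commutations, the argument is a short algebraic manipulation of equivalence relations.
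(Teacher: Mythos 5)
Your proposal matches the paper's own proof essentially step for step: induction on the evaluation context, the same trivial base case, inversion for blocks in the inductive step, the inherited disjointness $\{\x,\y\}\cap\HB{\ctxP}=\emptyset$ to invoke the induction hypothesis, and the same two commutations (of $[\z/\resV]$ with $+\{\x,\y\}$, and of $\Remove{\cdot}{\dom{\Gamma'}}$ with $+\{\x,\y\}$ via \refToProp{lessSrRel}.\ref{p4}). No substantive differences to report.
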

\begin{proof}
The proof is in~\ref{app:proofs}.  
\end{proof}

\EZ{As already mentioned, the subject reduction theorem states that in a reduction step $\reduce{\e_1}{\e_2}$:
\begin{enumerate}[(1)]
\item $\e_2$ preserves the type of $\e_1$, and has less or equal sharing effects.
\item For each variable declaration $\Dec{\x}{\_}{\e}$ occurring in $\e_1$ and reduced to $\Dec{\x}{\_}{\e'}$ in $\e_2$, $\e'$ preserves the type of $\e$.
Moreover, ``$\e'$ inside $\e_2$'' has less or equal sharing effects than ``$\e$ inside $\e_1$'', where such sharing effects are those of the initialization expression, plus the connections existing in the store (sequence of evaluated declarations) currently available in the enclosing expression.
\end{enumerate}
Invariant (2) corresponds, in a sense, to the invariant on store which we would have in a conventional calculus, and allows us to express and prove the expected properties of lent and capsule references. }
Note that there is no
guarantee that the sole sharing effects of the initialization expression are reduced, since a
new free variable could be introduced in the expression as an effect of field
access. 

\EZ{To formally express invariant (2), we need some notations and definitions. First of all, we need to trace the reduction of right-hand sides of declarations. To simplify the notation, we assume in the following that expressions contain at most one declaration for a variable (no shadowing, as can be always obtained by alpha-conversion). We define \emph{declaration contexts $\decctx{\mux}$} by:}
\begin{quote}
\begin{grammatica}
\produzione{\decctx{\EZ{\mux}}}
{\BlockLab{\dvs\ \Dec{\Type{\mu}{\C}}{\x}{\emptyctx}\ \decs}{\e}{\X}
\mid\BlockLab{\dvs\ \Dec{\T}{\y}{\decctx{\EZ{\mux}}}\ \decs}{\e}{\X}
\mid\BlockLab{\dvs}{\decctx{\EZ{\mux}}}{\X}
}{}
\end{grammatica}
\end{quote}
That is, in $\Decctx{\EZ{\mux}}{\e}$ the expression $\e$ occurs as the right-hand side
of the (unique) declaration for reference $\x$\EZ{, which has qualifier $\mu$}.  \EZ{Since declaration contexts are a subset of evaluation contexts, the same assumptions and definitions hold.} 
%we 
%assume that there are no evaluated declarations in $\decs$.}
%The type assignment extracted from a context $\decctx{}$, denoted
%$\TypeEnv{\decctx{}}$, and the declaration for a variable
%$\y$ in a context $\decctx{}$ are defined as for evaluation contexts.

To lighten the notation we
write simply $\decctx{\x}$ when the modifier is not relevant, and $\decctx{}$ when not even the variable is relevant.

\EZ{We define now the sharing relation $\induced{\decctx{\x}}$ induced by the store (sequence of evaluated declarations) enclosing the hole in $\decctx{\x}$. To this end, we first inductively define $\extractAllDec{\decctx{\x}}$:}
\begin{itemize}
\item $\extractAllDec{\BlockLab{\dvs\ \Dec{\Type{\mu}{\C}}{\x}{\emptyctx}\ \decs}{\e}{\X}}=\dvs$
\item $\extractAllDec{\BlockLab{\dvs}{\decctx{\x}}{\X}}=\extractAllDec{\BlockLab{\dvs\ \Dec{\T}{\y}{\decctx{\x}}\ \decs}{\e}{\X}}=\dvs\ \extractAllDec{\decctx{\x}}$
\end{itemize}
\EZ{Then, if $\extractAllDec{\decctx{\x}}=\Dec{\C_1}{\x_1}{\ConstrCall{\C_1}{\xs_1}}\ \cdots \Dec{\C_n}{\x_n}{\ConstrCall{\C_n}{\xs_n}}$,
we define
$\induced{\decctx{\x}}=\X_1+\cdots+X_n$ where $\X_i=\{x_i,\xs_i\}$ ($1\leq i\leq n$).}

To prove \EZ{invariant (2) in the subject reduction} we first need to show that it holds for	
the congruence relation.

\begin{lemma}\label{lemma:congrSR}
If $\TypeCheck{\Gamma}{\Decctx{\x}{\e}}{\C}{\sharingRel}$ and $\TypeCheck{\Gamma[\TypeEnv{\decctx{\x}}]}{\e}{\D}{\sharingRel_x}$
and $\congruence{\Decctx{\x}{\e}}{\e_1}$ \EZComm{do we need to mention $\e_1$?}\PGComm{Nella prova} where $\e_1=\DecctxP{\x}{\e'}$ for some $\decctxP{\x}$ and $\e'$, then 
%$\TypeCheck{\Gamma[\TypeEnv{\decctxP{\x}}]}{\e''}{\D}{\sharingRel'_x}$ for some $\e''\variant\e'$
$\TypeCheck{\Gamma[\TypeEnv{\decctxP{\x}}]}{\e'}{\D}{\sharingRel'_x}$ 
and $\induced{\decctx{\x}}+\sharingRel_{x}=\induced{\decctxP{\x}}+\sharingRel'_{x}$.
\end{lemma}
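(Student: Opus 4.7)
The plan is to proceed by induction on the derivation of $\congruence{\Decctx{\x}{\e}}{\e_1}$, performing a case analysis on the congruence axiom applied at the top level; the case where an axiom is applied strictly inside $\e$ or strictly inside the surrounding declaration context is handled by the induction hypothesis together with \refToLemma{congruence} (for the part of the term not on the path to the hole). For the base cases I would treat each of the seven axioms of \refToFigure{congruence} in turn, checking (a) that $\e'$ is still typable with type $\D$ in the new type environment $\Gamma[\TypeEnv{\decctxP{\x}}]$, and (b) that the sum $\induced{\decctx{\x}}+\sharingRel_{x}$ equals $\induced{\decctxP{\x}}+\sharingRel'_{x}$.

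For rule \rn{alpha}, the renaming either occurs entirely outside the hole, in which case $\extractAllDec{\decctxP{\x}}$ is $\extractAllDec{\decctx{\x}}$ with a single variable renamed and both $\induced{}$ and the typing of $\e$ transform accordingly, or it renames the binder $\x$ itself, in which case \refToProp{lessSrRel}.\ref{p5} and the substitution operation on sharing relations witness the required equality. Rule \rn{reorder} is immediate since the multiset of evaluated declarations and hence $\induced{}$ is unchanged. Rule \rn{new} creates exactly one extra evaluated declaration $\Dec{\C}{\y}{\ConstrCall{\C}{\vs}}$ whose induced sharing $\{\y,\vs\}$ is precisely the class added to the result in the typing of the constructor call (as per rule \rn{t-new} after substituting $\y$ for $\resV$), so the two sums agree. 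Rule \rn{block-elim} removes a block with an empty declaration list, leaving both $\induced{}$ and $\sharingRel_x$ unchanged.

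The substantive cases are \rn{dec}, \rn{body}, and \rn{val-ctx}, each of which moves a block of evaluated declarations $\dvs_1$ across a block boundary. When such a rule is applied in a position that sits between the hole of $\decctx{\x}$ and the root, the declarations of $\dvs_1$ pass from the outer block into the inner one or vice versa. I would read off the typing using \refToLemma{invBlock} on both blocks involved, and use the fact that the connections contributed by $\dvs_1$ (namely $\sum_i\{\y_i,\ys_i\}$ for $\Dec{\_}{\y_i}{\ConstrCall{\_}{\ys_i}}\in\dvs_1$) appear exactly once either in $\induced{}$ or inside $\sharingRel_x$. Using \refToProp{lessSrRel}.\ref{p4} (distribution of $\Remove{}{}$ over $+$, enabled by the syntactic side condition $\FV{\dvs_1}\cap\dom{\decs_2}=\emptyset$), together with the fact that removing $\dom{\dvs_1}$ from the inner block and adding those declarations to the outer block's store yields the same sum, the required equality follows. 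For the particular case of \rn{dec} where $\mu=\capsule$, the side condition $\dom{\dvs_1}\cap\X=\emptyset$ guarantees that none of the moved variables is in $\Closure{\resV}{}$ inside the inner block, which is what is needed for the capsule premise to continue to hold after the move, and hence for $\e'$ to remain typable with the same type $\D$.

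The principal obstacle is the bookkeeping in the \rn{dec}/\rn{body}/\rn{val-ctx} cases: one must keep track simultaneously of how $\TypeEnv{\decctx{\x}}$ is enlarged (affecting $\Gamma[\TypeEnv{\decctxP{\x}}]$), how $\induced{\decctx{\x}}$ gains the connections of $\dvs_1$, and how $\sharingRel'_x$ obtained from the reduced inner block loses exactly the same connections after the $\Remove{\cdot}{\dom{\dvs_1}}$ implicit in rule \rn{t-block}. Handling the symmetric direction of each axiom (the rules are equations) adds further case splitting, and one must additionally check that $\decctxP{\x}$ is still a declaration context of the right form after the move --- in particular that the $\capsule$ modifier on the declaration of $\x$, if any, is preserved, which is exactly what the third side condition of \rn{dec} was designed to ensure.
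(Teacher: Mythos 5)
Your proposal takes essentially the same route as the paper's proof: a case analysis on which congruence axiom is applied and where it sits relative to the declaration of $\x$, discharging applications away from the hole path via \refToLemma{congruence}, and handling the scope-extension cases (\rn{dec}, \rn{body}) through \refToLemma{invBlock}, weakening of the type environment, and the algebra of sharing relations, with the third side condition of \rn{dec} guaranteeing that typability (in particular the capsule premise) is preserved. The only real difference --- inducting on the congruence derivation rather than on the structure of $\decctx{\x}$ as the paper does --- is immaterial, and your identification of the \rn{dec} bookkeeping as the crux matches the paper's key calculation.
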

\begin{proof}
By induction on $\decctx{\x}$.\\
\underline{Case $\BlockLab{\dvs\ \Dec{\T}{\x}{\emptyctx}\ \decs}{\e_b}{\X}$}.
Then $\TypeCheck{\Gamma}{\BlockLab{\dvs\ \Dec{\T}{\x}{\e}\ \decs}{\e_b}{\X}}{\C}{\sharingRel}$ and 
$\TypeEnv{\decctx{\x}}=\Gamma_{\dvs},\TypeDec{\x}{\T},\Gamma_{\decs'}$. Let 
$\congruence{\Decctx{\x}{\e}}{\e_1}$, by \refToLemma{congruence} we have
$\TypeCheck{\Gamma}{\e_1}{\C}{\sharingRel}$.\\
Consider how $\e_1$ could be obtained from  $\Decctx{\x}{\e}$ by applying congruence rules to its subexpressions.\\
Rule \rn{reorder}  applied to the block $\BlockLab{\dvs\ \Dec{\T}{\x}{e}\ \decs}{\e_b}{\X}$
does not modify $\e$ or $\sharingRel_{\dvs}$. In particular, since no declaration following $\emptyctx$ can be evaluated, no 
declaration of $\dvs$ can be moved after the one of $\x$, and no declaration in $\decs$ can be moved. So 
$\e_1=\BlockLab{\dvs'\ \Dec{\T}{\x}{e}\ \decs}{\e_b}{\X}$ where $\dvs'$ is a reordering of $\dvs$. So the result is obvious.\\
Rule \rn{body} applied to the block $\BlockLab{\dvs\ \Dec{\T}{\x}{e}\ \decs}{\e_b}{\X}$ can only
modify the structure of the block by ``inserting curly brackets'' between declaration. For example
$\e_1$ could be $\BlockLab{\dvs_1} {\BlockLab{\dvs_2\  \Dec{\T}{\x}{e}\ \decs}{\e_b}{\Y}}{\X}$
where $\dvs=\dvs_1\,\dvs_2$ and $\FV{\dvs_1}\cap\dom{\dvs_2}=\emptyset$. Again in this case 
$\e_1=\DecctxP{\x}{\e}$ for some $\decctxP{\x}$ such that $\induced{\decctx{\x}}=\induced{\decctxP{\x}}$ and therefore the result holds.\\
If $\e_1$ is obtained from  $\Decctx{\x}{\e}$ by applying the congruence rules in $\decs$ or $\e_b$ or $\e$, then
$\e_1=\BlockLab{\dvs\ \Dec{\T}{\x}{e'}\ \decs'}{\e'_b}{\X}$, so $\e_1=\DecctxP{\x}{\e'}=\BlockLab{\dvs\ \Dec{\T}{\x}{e'}\ \decs'}{\e'_b}{\X}$ where $\congruence{\e}{\e'}$ . By \refToLemma{congruence} and
$\TypeCheck{\Gamma[\TypeEnv{\decctx{\x}}]}{\e}{\D}{\sharingRel_x}$ we have $\TypeCheck{\Gamma[\TypeEnv{\decctx{\x}}]}{\e'}{\D}{\sharingRel_x}$. It is easy to see that $\FV{\e}=\FV{\e'}$. (Congruent expressions have the same set of free variables.) 
So again the result holds.\\
Since the congruence has to produce an expression of the shape of $\decctxP{\x}$ no rule can be applied to $\dvs$.\\
Finally, we may have that $\DecctxP{\x}{\e'}$ is obtained by application of rule \rn{dec} to the declaration of $\x$. There are two
cases
\begin{enumerate}[(1)]
\item $\Decctx{\x}{\e}=\BlockLab{\dvs\ \Dec{\T}{\x}{\BlockLab{\dvs_1\ \decs_2}{\MS{\e_b}}{\X}}\ \MS{\decs'}}{\MS{\e'_b}}{\Y}$ and \\
 $\DecctxP{\x}{\e'}=\BlockLab{\dvs\ \dvs_1\ \Dec{\T}{\x}{\BlockLab{\decs_2}{\MS{\e_b}}{\X'}}\ \decs'}{\e'_b}{\Y'}$ 
 or
\item $\Decctx{\x}{\e}=\BlockLab{\dvs\ \dvs_1\ \Dec{\T}{\x}{\BlockLab{\decs_2}{\MS{\e_b}}{\X'}}\ \decs'}{\e'_b}{\Y'}$ and
\\
$\DecctxP{\x}{\e'}=\BlockLab{\dvs\ \Dec{\T}{\x}{\BlockLab{\dvs_1\ \decs_2}{\MS{\e_b}}{\X}}\ \MS{\decs'}}{\MS{\e'_b}}{\Y}$ 
\end{enumerate}
where in both cases 
\begin{enumerate}[(1)]\addtocounter{enumi}{2}
\item $\FV{\dvs_1}\cap\dom{\decs_2}=\emptyset$ and $\FV{\dvs\,\decs'\,{\MS{\e'_b}}}{\cap}\dom{\dvs_1}{=}\emptyset$ and
\item $\mu=\capsule$ implies $\dom{\dvs_1}{\cap}\X{=}\emptyset$
\end{enumerate}
From (3) we have that, if $\Gamma'=\Gamma[\Gamma_{\dvs},\Gamma_{\MS{\decs'}}][\Gamma_{\dvs_1},\Gamma_{\decs_2}]$
and $\Gamma''=\Gamma[\Gamma_{\dvs},\Gamma_{\MS{\decs'}},\Gamma_{\dvs_1}][\Gamma_{\decs_2}]$, then $\Gamma'=\Gamma''$.\\
%e
From $\TypeCheck{\Gamma}{\Decctx{\x}{\e}}{\C}{\sharingRel}$ and \refToLemma{invBlock} we have that
\begin{enumerate}[(a)]
\item $\TypeCheck{\Gamma[\Gamma_{\dvs},\Gamma_{\MS{\decs'}}]}{\BlockLab{\dvs_1\ \decs_2}{\MS{\e_b}}{\X}}{\MS{\T}}{\sharingRel_x}$ 
and $\TypeCheckDecs{\Gamma[\Gamma_{\dvs},\Gamma_{\decs}]}{\dvs}{\sharingRel_d}$ where
\item $\sharingRel_x=\sharingRel_1+\sharingRel_2+\sharingRel_e$
\item $\TypeCheckDecs{\Gamma'}{\dvs_1}{\sharingRel_1}$ and $\TypeCheckDecs{\Gamma'}{\decs_2}{\sharingRel_2}$ and $\TypeCheck{\Gamma'}{\e_b}{\_}{\sharingRel_e}$
\end{enumerate}
%e'
From $\congruence{\Decctx{\x}{\e}}{\DecctxP{\x}{\e'}}$ and \refToLemma{congruence} we have that 
$\TypeCheck{\Gamma}{\DecctxP{\x}{\e'}}{\C}{\sharingRel}$. 
%with $\DecctxS{\x}{\e''}\variant\DecctxP{\x}{\e'}$ and so
%$\e''\variant\e'$ and since $\decctxS{\x}$ and $\decctxP{\x}$ differs only for annotations $\Gamma_{\decctxS{\x}}=\Gamma_{\decctxP{\x}}$.
From \refToLemma{invBlock}
\begin{enumerate}[(a)]\addtocounter{enumi}{3}
\item $\TypeCheck{\Gamma[\Gamma_{\dvs},\Gamma_{\MS{\decs'}},\Gamma_{\dvs_1}]}{\BlockLab{\decs_2}{\MS{\e_b}}{\X'}}{\D}{\sharingRel'_x}$ and\\ 
 $\TypeCheckDecs{\Gamma[\Gamma_{\dvs},\Gamma_{\MS{\decs'}},\Gamma_{\dvs_1}]}{\dvs}{\sharingRel'_d}$ and $\TypeCheckDecs{\Gamma[\Gamma_{\dvs},\Gamma_{\MS{\decs'}},\Gamma_{\dvs_1}]}{\dvs_1}{\sharingRel'_1}$  where
\item $\sharingRel'_x=\sharingRel'_2+\sharingRel'_e$ 
\item $\TypeCheckDecs{\Gamma''}{\decs_2}{\sharingRel'_2}$ and 
$\TypeCheck{\Gamma''}{\e'_b}{\_}{\sharingRel'_e}$
\end{enumerate}
By \refToLemma{weakening}.1 and 2 we have that $\sharingRel'_1=\sharingRel_1$ and
$\sharingRel'_2=\sharingRel_2$ and $\MS{\sharingRel'_e}=\MS{\sharingRel_e}$ and $\sharingRel'_d=\sharingRel_d$.\\
From the fact that forward references can be done only to evaluated declarations and there are none in $\decs$
we have that in $\dvs$ and $\dvs_1$ there cannot free variable in  $\dom{\decs}$. Therefore 
$\sharingRel_d=\sharingRel_{\dvs}$. Moreover, from (3) in $\dvs_1$ there cannot free variable in $\dom{\decs_2}$ and
$\sharingRel_1=\sharingRel_{\dvs_1}$. So we have that\\
\centerline{
$
\begin{array}{lcll}
\induced{\decctx{\x}}+\sharingRel_{x}&=&\sharingRel_{\dvs}+(\sharingRel_1+\sharingRel_2+\sharingRel_e)&\text{by definition of $\induced{\decctx{\x}}$ and $\sharingRel_{x}$}\\
&=&\sharingRel_{\dvs}+\sharingRel_{\dvs_1}+(\sharingRel_2+\sharingRel_e)\\
&=&\induced{\decctxP{\x}}+(\sharingRel_2+\sharingRel_e)&\text{by definition of $\induced{\decctxP{\x}}$}\\
&=&\induced{\decctxP{\x}}+\sharingRel'_{x}&\text{by definition of $\sharingRel'_x$}
\end{array}
$
}
This proves the result for both cases (1) and (2)

\medskip\noindent
The \underline{cases $\BlockLab{\dvs\ \Dec{\T}{\y}{\decctx{\x}}\ \decs}{\e_b}{\X}$ and $\BlockLab{\dvs}{\decctx{\x}}{\X}$} 
are proved using the inductive hypothesis and a case analysis on the congruence used for the block as for the previous case. \\
\end{proof}

\begin{theorem}{\rm (Subject reduction)}\label{theo:subred}
If $\TypeCheck{\Gamma}{\e_1}{\C}{\sharingRel}$ and $\reduce{\e_1}{\e_2}$, then
\begin{enumerate}
  \item  \PG{$\TypeCheck{\Gamma}{\e'_2}{\C}{\sharingRel'}$ where $\e_2'\variant\e_2$ 
    $\Finer{\sharingRel'}{\sharingRel}$, and}
  {\item if $\e_1=\Decctx{\x}{\e}$, $\e_2=\DecctxP{\x}{\e'}$, 
    and $\TypeCheck{\TypeEnv{\decctx{\x}}}{\e}{\D}{\sharingRel_x}$ we have 
    that: $\TypeCheck{\TypeEnv{\decctxP{\x}}}{\e''}{\D}{\sharingRel'_x}$ where $\e''\variant\e'$ and 
    $\Finer{(\induced{\decctxP{\x}}+\sharingRel'_{x})}{(\induced{\decctx{\x}}+\sharingRel_{x})}$.}
\end{enumerate}
\end{theorem}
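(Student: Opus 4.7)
The plan is to proceed by induction on the derivation of $\reduce{\e_1}{\e_2}$, performing a case analysis on the last reduction rule used. For the \rn{congr} case, I would invoke \refToLemma{congruence} to get that the congruent expression is typable with the same type and sharing, and then appeal to \refToLemma{congrSR} to obtain invariant (2), since the congruence moves declarations around without substantively changing the induced sharing plus the right-hand side sharing. In all the other cases, the redex is a proper subterm of $\e_1$, so the natural first step is to apply \refToLemma{context}: this gives me $\TypeCheck{\Gamma[\Gamma_\ctx]}{\redex}{\D}{\sharingRel_1}$ for the redex, lets me work locally, and then transports any reduction in sharing of the redex back up to a reduction in sharing of the whole expression.

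For the simple cases I would proceed as follows. Case \rn{alias-elim} reduces to \refToLemma{substitution}.1 applied to the body of the block (and to each right-hand side) together with \refToProp{lessSrRel}.\ref{p5}, which ensures the substitution does not enlarge the sharing relation. Case \rn{affine-elim} uses \refToLemma{substitution}.2 after invoking \refToLemma{sharingCapsule} to conclude that the capsule value has sharing $\epsilon$. Case \rn{invk} reduces an invocation to a block of declarations; the well-typedness of the class table provides typing of the body, and the typing rule \rn{t-block} yields precisely the same sharing effects as rule \rn{t-invk}, so invariant (1) is immediate and invariant (2) follows because the induced store of the enclosing declaration context is unchanged.

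The technically delicate cases are \rn{field-access} and \rn{field-assign}, because they read from or mutate the store represented by an enclosing declaration context, and because they require tracking how sharing connections already recorded in the store interact with the newly produced expression. For \rn{field-access}, the redex $\FieldAccess{\x}{\f_i}$ has sharing $\{\x,\resV\}$ by inversion, while the contractum $\x_i$ has sharing $\{\x_i,\resV\}$ (or $\epsilon$ if $\x_i$ is affine). Here the key tool is \refToLemma{fieldAcc}: the evaluated declaration $\Dec{\C}{\x}{\ConstrCall{\C}{\x_1,\ldots,\x_n}}$ lies in $\induced{\decctx{\x}}$ and contributes the equivalence class $\{\x,\x_1,\ldots,\x_n\}$, so adding the connection $\{\x,\x_i\}$ (which is already present in the induced store) makes $\sharingRel_1+\{\x,\x_i\}=\sharingRel_2+\{\x,\x_i\}$, giving invariant (1) via \refToLemma{context}. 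For invariant (2), I observe that $\induced{\decctxP{\x}}=\induced{\decctx{\x}}$ in this step (no store mutation), so the sharing inequality reduces to the same computation at the level of the right-hand side containing the hole.

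For \rn{field-assign}, the redex $\FieldAssign{\x}{\f_i}{\y}$ both returns $\y$ and mutates the store entry for $\x$, replacing $\ConstrCall{\C}{\ldots,\x_i,\ldots}$ with $\ConstrCall{\C}{\ldots,\y,\ldots}$. The contractum $\y$ has sharing $\{\y,\resV\}$, while the redex had sharing $\Sum{\{\x,\resV\}}{\{\y,\resV\}}$, so the local sharing strictly decreases. However, the induced store $\induced{\decctxP{\x}}$ now merges the equivalence class of $\x$ with that of $\y$ instead of with that of $\x_i$; the point is that this change is already absorbed into the sharing of the redex, so $\induced{\decctx{\x}}+\sharingRel_x$ already contains $\induced{\decctxP{\x}}+\sharingRel'_x$ by \refToProp{lessSrRel}.\ref{p2}. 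The main obstacle will be bookkeeping here: carefully verifying that the side conditions $\y\notin\HB{\ctxP}$ (and its analogue for field access) ensure that no variable needed in the sharing computation gets captured, and that the annotation sets $\X$ on blocks are updated consistently so that $\e_2'\variant\e_2$ really does yield a well-annotated term. Together with a routine use of \refToLemma{monotoneSharing} to propagate the local decrease in sharing to the enclosing context, this completes the argument.
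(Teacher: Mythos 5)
Your proposal follows essentially the same route as the paper's proof: induction on the reduction derivation with the same case split, using the congruence-preservation lemma together with the lemma on declaration contexts for \rn{congr}, the context lemma to localize to the redex, the substitution lemma (plus the capsule-sharing corollary) for \rn{alias-elim} and \rn{affine-elim}, the block/invocation correspondence for \rn{invk}, and the field-access lemma together with the monotone-sharing lemma and the properties of sharing relations for \rn{field-access} and \rn{field-assign}. The only points where the paper is more explicit than you are bookkeeping details rather than a different argument (e.g., the case split on whether the receiver's evaluated declaration lies inside the traced right-hand side or in the enclosing store when proving invariant (2) for field access, and the observation that the accessed variables cannot be affine).
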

\begin{proof}
By induction on the derivation of $\reduce{\e_1}{\e_2}$ with a case analysis on the last rule of \refToFigure{reduction} used for $\reduce{\Ctx{\redex}}{\CtxP{\e'}}$.
We show the two most interesting cases, which are \rn{congr}, \rn{field-access} and
\rn{field-assign}, and just hint the one for \rn{alias-elim} and
\rn{\EZ{affine-elim}}. The proof of the remaining cases is in~\ref{app:proofs}. 

\underline{Rule \rn{congr}}.
In this case 
    \begin{itemize}
      \item  $\congruence{\e_1}{\e'_1}$ 
      \item $\reduce{\e'_1}{\e_2}$
    \end{itemize}
    If $\TypeCheck{\Gamma}{\e_1}{\C}{\sharingRel}$ and 
$\congruence{\e_1}{\e'_1}$, from \refToLemma{congruence},
$\TypeCheck{\Gamma}{\e'_1}{\C}{\sharingRel}$.% for some $\e''_1\variant\e'_1$.
%Looking at the rules of \refToFigure{reduction} we can see that block annotations do not influence the reduction.
%Therefore, if $\reduce{\e'_1}{\e_2}$, then also $\reduce{\e''_1}{\e_2}$
\begin{enumerate}
\item   
By induction hypothesis on $\e'_1$ we have that
$\TypeCheck{\Gamma}{\e'_2}{\C}{\sharingRel'}$ where  $\e_2'\variant\e_2$ and $\Finer{\sharingRel'}{\sharingRel}$.

\item If  $\e_1=\Decctx{\y}{\e}$  and $\congruence{\e_1}{\e'_1}$, then $\e'_1=\DecctxS{\y}{\e''}$
for some $\decctxS{\y}$. Let  $\e_2=\DecctxP{\y}{\e'}$, 
 from $\TypeCheck{\TypeEnv{\decctx{\y}}}{\e}{\D}{\sharingRel_y}$ and
  \refToLemma{congrSR}, we have that
  $\TypeCheck{\TypeEnv{\decctxS{\y}}}{\e''}{\D}{\sharingRel''_y}$ and  
  ${(\induced{\decctx{\y}}+\sharingRel_{y})}={(\induced{\decctxS{\y}}+\sharingRel''_{y})}$.
  By induction hypothesis $\TypeCheck{\TypeEnv{\decctxP{\x}}}{\e''}{\D}{\sharingRel'_y}$ where $\e''\variant\e'$ and
  $\Finer{(\induced{\decctxP{\y}}+\sharingRel'_{y})}{(\induced{\decctxS{\y}}+\sharingRel''_{y})}$.
  Therefore
  $\Finer{(\induced{\decctxP{\y}}+\sharingRel'_{y})}{(\induced{\decctx{\y}}+\sharingRel_{y})}$.
\end{enumerate}

\medskip
\underline{Rule \rn{field-access}}. 
\begin{enumerate}
  \item In this case 
    \begin{itemize}
			\item \MS{$\e_1 = \ctx[\rho]$, and $\e_2 = \ctx'[\e']$}
      \item $\ctxP=\ctx=\DecCtx{\ctx}{\x}{\ctx_1}$ for some $\ctx_1$
      \item $\extractDec{\ctx}{\x}=\dv_x=\Dec{\D}{\x}{\ConstrCall{\D}{\x_1,\ldots,\x_n}}$
      \item $\redex=\FieldAccess{\x}{\f_i}$ and $\e'=\x_i$ with $\x_i\not\in\HB{\ctx_1}$.
    \end{itemize}
    By definition of $\decCtx{\ctx}{\x}$, for some $\ctx_2$ 
    \begin{enumerate}[(1)]
      \item either $\decCtx{\ctx}{\x}=\ctx_2[\Block{\decs'}{\e_b}]$ with 
        $\decs'=\dvs\ \dv_x\ \Dec{\T}{\z}{\ctx_1[\FieldAccess{\x}{\f_i}]}\ \decs$
      \item or $\decCtx{\ctx}{\x}=\ctx_2[\Block{\dvs\ \dv_x}{\ctx_1}]$.
    \end{enumerate}
    and $\x\not\in\HB{\ctx_1}$. \\
    We consider only case (1) since the proof for the other case is similar and simpler.\\
    From \refToLemma{context}.1 we have that
    $\TypeCheck{\Gamma[\Gamma_{\ctx_2}]}{\Block{\decs'}{\e_b}}{\C'}{\sharingRel_1}$
    for some $\C'$ and $\sharingRel_1$. From typing rule \rn{T-block},
    $\TypeCheck{\Gamma[\Gamma_{\ctx_2}][\Gamma_{\decs'}]}{\ctx_1[\FieldAccess{\x}{\f_i}]}{\D'}{\sharingRel_2}$
    for some $\D'$ and $\sharingRel_2$. From \refToLemma{context}.1 and typing
    rules \rn{field-access} and \rn{var} we have that
    $\TypeCheck{\Gamma[\Gamma_{\ctx_2}][\Gamma_{\decs'}][\Gamma_{\ctx_1}]}{\FieldAccess{\x}{\f_i}}{\D_i}{\{\x,\resV\}}$ and
    $\TypeCheck{\Gamma[\Gamma_{\ctx_2}][\Gamma_{\decs'}][\Gamma_{\ctx_1}]}{\x_i}{\D_i}{\{{x_i},\resV\}}$.
    (Note that neither $\x$ nor $\x_i$ can be forward references to non
    evaluated declarations and therefore they must be defined without the
    $\capsule$ modifier.) From \refToLemma{fieldAcc} we have that: if
    $\TypeCheck{\Gamma[\Gamma_{\ctx_2}][\Gamma_{\decs'}]}{\ctx_1[\x_i]}{\D'}{\sharingRel'_2}$,
    then $\sharingRel_2+\{\x,\x_i\}=\sharingRel'_2+\{\x,\x_i\}$. Since
    $\TypeCheckDecs{\Gamma[\Gamma_{\ctx_2}][\Gamma_{\decs'}]}{\dv_x}{\{\x,\x_1,\ldots,\x_n,\resV\}}$,
    we have that:\\ 
    $\TypeCheckDecs{\Gamma[\Gamma_{\ctx_2}][\Gamma_{\decs'}]}{\Dec{\T}{\z}{\ctx_1[\FieldAccess{\x}{\f_i}]}\ \dv_x}{\sharingRel_3}$ implies
    $\TypeCheckDecs{\Gamma[\Gamma_{\ctx_2}][\Gamma_{\decs''}]}{\Dec{\T}{\z}{\ctx_1[\x_i]}\ \dv_x}{\sharingRel_3}$ where
    $\decs''=\dvs\ \dv_x\ \Dec{\T}{\z}{\ctx_1[x_i]}\ \decs$. Therefore
    \begin{center}
      $\TypeCheck{\Gamma[\Gamma_{\ctx_2}]}{\Block{\dvs\ \dv_x\ \Dec{\T}{\z}{\ctx_1[\x_i]}\ \decs}{\e_b}}{\C'}{\sharingRel_1}$
    \end{center}
    and from \refToLemma{context}.2 we derive \PG{$\TypeCheck{\Gamma}{\CtxP{\e''}}{\C}{\sharingRel}$ where $\Ctx{\e'}\variant\CtxP{\e''}$.}
  \item Let  $\e_1=\Decctx{\y}{\e}$, $\e_2=\DecctxP{\y}{\e'}$, and 
    $\reduce{\e_1=\Ctx{\FieldAccess{\x}{\f_i}}}{\e_2=\CtxP{\x_i}}$. If the
    redex $\FieldAccess{\x}{\f_i}$ is not a subexpression of $\e$ then
    $\e=\e'$, and since $\TypeEnv{\decctx{\y}}=\TypeEnv{\decctxP{\y}}$, the
    result is obvious. If $\FieldAccess{\x}{\f_i}$ is a subexpression of $\e$,
    then from \refToLemma{decomposition} for some $\ctxS$ we have that
    $\e=\ctxS[\FieldAccess{\x}{\f_i}]$ and $\e'=\ctxS[{\x_i}]$. From
    $\TypeCheck{\TypeEnv{\decctx{\x}}}{\ctxS[\FieldAccess{\x}{\f_i}]}{\D}{\sharingRel_x}$
    and $\TypeCheck{\TypeEnv{\decctx{\x}}}{\ctxS[{\x_i}]}{\D}{\sharingRel'_x}$,
    and \refToLemma{context}.1 we have that
    $\TypeCheck{\Gamma[\TypeEnv{\decctx{\x}}][\Gamma_{\ctxS}]}{\FieldAccess{\x}{\f_i}}{\D_i}{\{\x,\resV\}}$
    and
    $\TypeCheck{\Gamma[\TypeEnv{\decctx{\x}}][\Gamma_{\ctxS}]}{\x_i}{\D_i}{\{{x_i},\resV\}}$.
    If $\dv_x\in\extractAllDec{\ctxS}$ then $\sharingRel_x=\sharingRel'_x$, otherwise
    $\dv_x\in\extractAllDec{\decctx{\y}}$. In both cases  
    $\induced{\decctx{\y}}+\sharingRel_x=\induced{\decctx{\y}}+\sharingRel'_x$.
\end{enumerate}

\underline{Rule \rn{field-assign}}. 
\begin{enumerate}
  \item In this case 
    \begin{itemize}
      \item $\ctx=\DecCtx{\ctx}{\x}{\ctx_1}$ for some $\ctx_1$, 
      \item $\ctx'=\UpdateCtxX{\y}{\x}{i}{\ctx_1}$ since $\x\not\in\HB{\ctx_1}$
      \item $\extractDec{\ctx}{\x}=\dv_x=\Dec{\D}{\x}{\ConstrCall{\D}{\x_1,\ldots,\x_n}}$
      \item $\redex=\FieldAssign{\x}{\f_i}{\y}$ and $\e'=\y$ with $\y\not\in\HB{\ctx_1}$.
    \end{itemize}
    As for the case of field update $\decCtx{\ctx}{\x}$ has either shape (1) or (2) with 
    $\y\not\in\HB{\ctx_1}$. We consider only case (1).\\
    From \refToLemma{context}.1 we have that
    $\TypeCheck{\Gamma[\Gamma_{\ctx_2}]}{\Block{\decs'}{\e_b}}{\C'}{\sharingRel_1}$
    for some $\C'$ and $\sharingRel_1$. From typing rule \rn{T-block} we have that
    $\TypeCheck{\Gamma[\Gamma_{\ctx_2}][\Gamma_{\decs'}]}{\ctx_1[\FieldAssign{\x}{\f_i}{\y}]}{\D'}{\sharingRel_2}$
    for some $\D'$ and $\sharingRel_2$. From \refToLemma{context}.1, typing
    rule \rn{T-Field-assign} and \rn{t-var}  we have that
    $\TypeCheck{\Gamma[\Gamma_{\ctx_2}][\Gamma_{\decs'}][\Gamma_{\ctx_1}]}{\x}{\D}{\{\x,\resV\}}$,
    $\TypeCheck{\Gamma[\Gamma_{\ctx_2}][\Gamma_{\decs'}][\Gamma_{\ctx_1}]}{\y}{\D_i}{\{\y,\resV\}}$, and
    $\TypeCheck{\Gamma[\Gamma_{\ctx_2}][\Gamma_{\decs'}][\Gamma_{\ctx_1}]}{{\FieldAssign{\x}{\f_i}{\y}}}{\D_i}{\{\x,\y,\resV\}}$
    where $\fields{\D}{=}\Field{\D_1}{\f_1}\ldots\Field{\D_n}{\f_n}$ and
    $\Finer{\{\y,\resV\}}{\{\x,\y,\resV\}}$.\\
    From $\TypeCheck{\Gamma[\Gamma_{\ctx_2}]}{\Block{\decs'}{\e_b}}{\C'}{\sharingRel_1}$
    and \refToLemma{invBlock}, we have that
    \begin{enumerate}[(a)]
      \item $\TypeCheckDecs{\Gamma[\Gamma_{\ctx_2}][\Gamma_{\decs'}]}{\Dec{\T}{\z}{\ctx_1[\FieldAssign{\x}{\f_i}{\y}]}}{\sharingRel_3}$, 
      \item $\TypeCheckDecs{\Gamma[\Gamma_{\ctx_2}][\Gamma_{\decs'}]}{\dv_x}{\sharingRel_x}$ 
        where $\sharingRel_x=\{\x,\x_1,\ldots,\x_n\}$,
      \item $\TypeCheckDecs{\Gamma[\Gamma_{\ctx_2}][\Gamma_{\decs'}]}{\dvs\ \decs}{\sharingRel_4}$, and 
      \item $\TypeCheck{\Gamma[\Gamma_{\ctx_2}][\Gamma_{\decs'}]}{\e_b}{\C'}{\sharingRel_5}$. 
    \end{enumerate}
    Let $\dv'_x=\Dec{\D}{\x}{\ConstrCall{\D}{\x_1,\ldots,\x_{i-1},\y,\x_{i+1},\ldots,\x_n}}$ 
    and $\decs''=\dvs\ \dv'_x\ \Dec{\T}{\z}{\ctx_1[\y]}\ \decs$. We have that 
    $\Gamma_{\decs'}=\Gamma_{\decs''}$. Therefore
    \begin{enumerate}[(A)]
      \item $\TypeCheckDecs{\Gamma[\Gamma_{\ctx_2}][\Gamma_{\decs''}]}{\Dec{\T}{\z}{\ctx_1[\y]}}{\sharingRel'_3}$, 
      \item $\TypeCheckDecs{\Gamma[\Gamma_{\ctx_2}][\Gamma_{\decs''}]}{\dv'_x}{\sharingRel_{x'}}$ 
        where $\sharingRel_{x'}=\{\x,\x_1,\ldots,\x_{i-1},\y,\x_{i+1},\ldots,\x_n\}$,
      \item $\TypeCheckDecs{\Gamma[\Gamma_{\ctx_2}][\Gamma_{\decs''}]}{\dvs\ \decs}{\sharingRel_4}$, and 
      \item $\TypeCheck{\Gamma[\Gamma_{\ctx_2}][\Gamma_{\decs''}]}{\e_b}{}{\sharingRel_5}$. 
    \end{enumerate}
    From typing rules \rn{T-field-assign} and \rn{t-var}, and
    \refToLemma{fieldAcc} we get
    $\sharingRel_3+\{\x,\y\}=\sharingRel'_3+\{\x,\y\}$. Moreover, from (a),
    typing rule \rn{T-field-assign}, and \refToLemma{monotoneSharing} we have
    that $\x$ and $\y$ are in the same equivalence class in $\sharingRel_3$,
    i.e., $\Closure{\x}{\sharingRel_3}=\Closure{\y}{\sharingRel_3}$.
    Therefore,
    $\Finer{\sharingRel'_3+\sharingRel_{x'}}{\sharingRel_3+\sharingRel_{x}}$.
    From (A)$\div$(D), typing rule \rn{T-block} and  \refToLemma{context}.2 we get 
    \begin{center}
      $\TypeCheck{\Gamma[\Gamma_{\ctx_2}]}{\Block{\dvs\ \dv'_x\ \Dec{\T}{\z}{\ctx_1[\y]}\ \decs}{\e_b}}{\C'}{\sharingRel'_1}$
    \end{center}
    where $\Finer{\sharingRel'_1}{\sharingRel_1}$. From \refToLemma{context}.2
    we derive that $\TypeCheck{\Gamma}{\ctx_2[\Block{\dvs\ \dv'_x\
    \Dec{\T}{\z}{\ctx_1[\y]}\ \decs}{\e_b}]}{\C}{\sharingRel'}$ where
    $\Finer{\sharingRel'}{\sharingRel}$.  Consider
    $\UpdateCtxX{\y}{\x}{i}{\ctx_1}=\ctx_2[\Block{\dvs\ \dv'\
    \Dec{\T}{\z}{\ctx_1}\ \decs}{\e_b}]$, we have that
    $\TypeCheck{\Gamma}{\UpdateCtxX{\y}{\x}{i}{\ctx_1[\y]}}{\C}{\sharingRel'}$,
    which proves the result.
  \item Let  $\e_1=\Decctx{\y}{\e}$, $\e_2=\DecctxP{\y}{\e'}$, and 
    $\reduce{\e_1=\Ctx{\FieldAssign{\x}{\f_i}{\y}}}{\e_2=\CtxP{\y}}$. If the
    redex is not a subexpression of $\e$ then $\e=\e'$, and since
    $\TypeEnv{\decctx{\y}}=\TypeEnv{\decctxP{\y}}$, the result is obvious. If
    $\FieldAssign{\x}{\f_i}{\y}$ is a subexpression of $\e$, then from
    \refToLemma{decomposition} for some $\ctxS$ we have that
    $\e=\ctxS[\FieldAssign{\x}{\f_i}{\y}]$ and $\e'=\ctxS[{\y}]$. From
    $\TypeCheck{\TypeEnv{\decctx{\x}}}{\ctxS[\FieldAssign{\x}{\f_i}{\y}]}{\D}{\sharingRel_x}$,
    $\TypeCheck{\TypeEnv{\decctx{\x}}}{\ctxS[{\y}]}{\D}{\sharingRel'_x}$, and
    \refToLemma{context}.1 we have that
    $\TypeCheck{\Gamma[\TypeEnv{\decctx{\x}}][\Gamma_{\ctxS}]}{\FieldAssign{\x}{\f_i}{\y}}{\D_i}{\{\x,\y,\resV\}}$ and
    $\TypeCheck{\Gamma[\TypeEnv{\decctx{\x}}][\Gamma_{\ctxS}]}{\y}{\D_i}{\{\y,\resV\}}$.
    From $\Finer{\{\y,\resV\}}{\{\x,\y,\resV\}}$, and \refToLemma{context}.2 we
    derive that $\Finer{\sharingRel'_x}{\sharingRel_x}$.  Therefore, from
    \refToProp{lessSrRel}.\ref{p2} $\Finer{\induced{\decctx{\y}}+\sharingRel'_x}{\induced{\decctx{\y}}+\sharingRel_x}$
\end{enumerate}

\underline{Rule \rn{alias-elim}}. 
Clause 1. is proved  using \refToLemma{substitution}.1. Clause 2. is proved as
in the case of \rn{field-assign}.

\underline{Rule \rn{affine-elim}}.
Clause 1. is proved  using \refToLemma{substitution}.2  and
\ref{lemma:sharingCapsule}. Clause 2. is proved as in the case of
\rn{field-assign}.
\end{proof}

\EZComm{rewritten from here}
\paragraph{Lent and capsule properties}
We can now formally express the lent and capsule notions, informally
described in the Introduction.

Informally, a reference $\x$ is (used as) lent if no sharing can be introduced through $\x$. Formally, if $\TypeCheck{\Gamma}{\e}{\_}{\sharingRel}$, then $\x$ is \emph{lent in $\e$} if $\Closure{\x}{\sharingRel}=\{\x\}$. The notion can be generalized to a set of references $\xs$, that is, $\xs$ is \emph{lent in $\e$} if, for each $\x\in\xs$, $\Closure{\x}{\sharingRel}\subseteq \xs$.

Consider now an expression in a \EZ{declaration} context $\Decctx{\x}{\e}$, and a reference $\y$. The portion of store connected to $\y$ before the evaluation of $\e$ is $\Closure{\y}{\induced{\decctx{\x}}}$.
Now, if  the expression $\e$ can access such portion of store only through lent references, then the two following properties are ensured by the evaluation of $\e$:

\begin{enumerate}
\item such portion of store remains isolated from others
\item  such portion of store cannot be part of the final result of $\e$.
\end{enumerate}

This is formally expressed by the following theorem.

\begin{theorem}{\rm (Lent)}\label{theo:lent}
If $\IsWellTyped{\Decctx{}{\e}}$,
$\TypeCheck{\TypeEnv{\decctx{}}}{\e}{\_}{\sharingRel}$, and $\ys=\Closure{\y}{\induced{\decctx{}}}$ is lent in $\e$, then:
\begin{enumerate}
\item if $\Decctx{}{\e}\longrightarrow\DecctxP{}{\e'}$ then $\TypeCheck{\TypeEnv{\decctxP{}}}{\PG{\e''}}{\_}{\sharingRel'}$ where 
\PG{$\e''\variant\e'$} and $\Closure{\y}{(\induced{\decctxP{}}+\sharingRel')}\subseteq\ys$
\item if $\Decctx{}{\e}\longrightarrow^\star\DecctxP{}{\val}$, then $\y\not\in\FV{\remGarbage(\val)}$.
\end{enumerate}
\end{theorem}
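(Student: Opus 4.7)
The plan is to derive part~1 directly from Subject Reduction combined with the lent hypothesis, and then iterate to obtain part~2. The crucial observation for part~1 is that the set $\ys$ is simultaneously closed under $\induced{\decctx{}}$ and under $\sharingRel$: under the former by definition, since $\ys=\Closure{\y}{\induced{\decctx{}}}$ is an equivalence class of that relation; under the latter by the lent hypothesis, which says $\Closure{\x}{\sharingRel}\subseteq\ys$ for every $\x\in\ys$ (for those $\x\in\ys$ that are not free in $\e$, \refToProp{invTyping1} forces them to be singletons in $\sharingRel$, so the condition is automatic). An alternating-chain argument justified by \refToProp{lessSrRel}.1 then yields $\Closure{\y}{(\induced{\decctx{}}+\sharingRel)}\subseteq\ys$. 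Applying clause~2 of \refToTheorem{subred}, I would obtain some $\e''\variant\e'$ with $\TypeCheck{\TypeEnv{\decctxP{}}}{\e''}{\_}{\sharingRel'}$ and $\Finer{(\induced{\decctxP{}}+\sharingRel')}{(\induced{\decctx{}}+\sharingRel)}$, hence $\Closure{\y}{(\induced{\decctxP{}}+\sharingRel')}\subseteq\Closure{\y}{(\induced{\decctx{}}+\sharingRel)}\subseteq\ys$.

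For part~2, I would induct on the length of the reduction sequence, carrying the invariant that the equivalence class of $\y$ in $\induced{\cdot}+\sharingRel_{\text{current}}$ is contained in $\ys$. The base case is part~1. For the inductive step only Subject Reduction is needed; the lent hypothesis need not be re-invoked because the sharing relation only shrinks under $\sqsubseteq$. Once the sequence reaches $\val$, I would apply the Canonical Form \refToTheorem{freevars} to $\remGarbage(\val'')$, which is \nogarbage\ by construction: any non-affine free variable of a \nogarbage\ value must be connected to $\resV$ in its sharing relation. Since $\induced{\cdot}$ does not mention $\resV$, we have $\resV\notin\ys$, so if $\y$ were free and non-affine this would force $\resV\in\Closure{\y}{\sharingRel''}\subseteq\ys$, a contradiction. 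Affine variables are singletons by \refToProp{invTyping1}, so they cannot belong to a non-trivial equivalence class $\ys$ of $\induced{\cdot}$, ruling out the remaining case.

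The hardest part, I expect, will be the last step of part~2: justifying that $\remGarbage(\val'')$ is itself typable, at the same type and with a sub-relation of $\val''$'s sharing, so that \refToTheorem{freevars} genuinely applies. A second delicate point is the bookkeeping of how $\TypeEnv{\decctx{}}$ evolves along reduction, so that Subject Reduction's hypotheses are met at each step; the interaction with rule \rn{affine-elim}, which substitutes a garbage-collected value for an affine variable, deserves particular attention to ensure the invariant $\Closure{\y}{(\induced{\cdot}+\sharingRel)}\subseteq\ys$ is truly preserved rather than merely relativised to a smaller carrier.
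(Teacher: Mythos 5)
Your part~1 and the overall architecture of part~2 (induction on the length of the reduction, monotonicity of $\induced{\cdot}+\sharingRel$ via clause~2 of \refToTheorem{subred}, and an appeal to \refToTheorem{freevars} on the garbage-free value at the end) coincide with the paper's proof; carrying the invariant $\Closure{\y}{(\induced{\cdot}+\sharingRel)}\subseteq\ys$ instead of re-deriving the lent hypothesis at each step is a harmless, if anything cleaner, reformulation. The genuine gap is in how you discharge the \emph{affine} alternative in the last step. \refToTheorem{freevars}.1 only says that a \emph{non-affine} free variable of a garbage-free value is connected to $\resV$, so from $\y\in\FV{\remGarbage(\val)}$ you must separately exclude that $\y$ is affine in the current type environment. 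You do this by claiming that affine variables, being singletons by \refToProp{invTyping1}, ``cannot belong to a non-trivial equivalence class $\ys$ of $\induced{\cdot}$''. This step fails twice over. First, nothing in the statement makes $\ys$ non-trivial: $\Closure{\y}{\induced{\decctx{}}}$ is a singleton whenever $\y$'s object has no fields and no other store entry mentions $\y$ --- and the singleton case is precisely the original lent notion, so it is the typical instantiation, not a corner case; for it your argument says nothing. Second, \refToProp{invTyping1} speaks about sharing relations produced by the typing judgment, whereas $\induced{\decctx{}}$ is defined syntactically from the evaluated declarations, so the proposition does not literally apply to it. The paper closes this case differently, by a structural observation about declaration contexts: the free variables of the value sitting in the hole can only be bound by enclosing \emph{evaluated} declarations (forward references to unevaluated declarations are disallowed, and the whole term is closed), and evaluated declarations are by syntax non-affine; hence $\y$ is non-affine in $\TypeEnv{\decctxP{}}$, and the $\Pair{\y}{\resV}$ branch of \refToTheorem{freevars}.1 contradicts your invariant $\Closure{\y}{\sharingRel'}\subseteq\ys$ together with $\resV\notin\ys$. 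Without some such argument your base case does not go through.

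The other point you flag --- that \refToTheorem{freevars} is stated for garbage-free values, so one needs $\remGarbage(\val)$ to be typable with a sharing relation compatible with that of $\val$ --- is real but routine: discarding unused evaluated declarations only removes connections (cf.\ \refToLemma{weakening} and \refToLemma{invBlock}), and the paper itself silently elides this step. So the item you predicted would be hardest is minor, while the affine case, which you dismissed in one line, is where your proof actually diverges from (and falls short of) the paper's.
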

\begin{proof}\
\begin{enumerate}
\item Since $\ys$ is lent in $\e$, $\Closure{\y}{\sharingRel}\subseteq\ys$, hence $\Closure{\y}{(\induced{\decctx{}}+\sharingRel)}=\Closure{\y}{\induced{\decctx{}}}$. 

From
\refToTheorem{subred}.2, since
$\TypeCheck{\TypeEnv{\decctx{}}}{\e}{\_}{\sharingRel}$, we have that
$\TypeCheck{\TypeEnv{\decctxP{}}}{\e''}{\_}{\sharingRel'}$ where 
\PG{$\e''\variant\e'$} and $\Finer{(\induced{\decctxP{}}+\sharingRel')}{(\induced{\decctx{}}+\sharingRel)}$, hence 
$\Closure{\y}{(\induced{\decctxP{}}+\sharingRel')}\subseteq\Closure{\y}{(\induced{\decctx{}}+\sharingRel)}=\Closure{\y}{\induced{\decctx{}}}\EZ{=\ys}$. 
\item 
By induction on the number $n$ of steps of the reduction
$\Decctx{}{\e}\longrightarrow^\star\DecctxP{}{\val}$. \\
For $n=0$, we have $\TypeCheck{\TypeEnv{\decctx{}}}{\val}{\_}{\sharingRel}$. If
we had $\y\in\FV{\EZ{\remGarbage(\val)}}$, then, from \refToTheorem{freevars}.1, it
should be either $\Pair{\y}{\resV}\in\sharingRel$ or $\y$ affine in $\TypeEnv{\decctx{}}$. However, $\Pair{\y}{\resV}\in\sharingRel$ contradicts the hypothesis that $\ys$ is lent in $\e$, and it is easy to see from the definition that in $\decctx{}$ there are no affine variable declarations.\\
\PG{Let $\Decctx{}{\e} \reduce{} \DecctxP{}{\e'}$ and
$\DecctxP{}{\e''}\longrightarrow^\star\DecctxS{}{\val}$,  where 
$\e''\variant\e'$}, in $n$ steps.  By point 1. of this theorem we have $\TypeCheck{\TypeEnv{\decctxP{}}}{\e''}{\_}{\sharingRel'}$ and \EZ{$\Closure{\y}{(\induced{\decctxP{}}+\sharingRel')}\subseteq\ys$, hence $\Closure{\y}{\sharingRel'}\subseteq\ys$, that is, $\ys$ is lent in $\e''$ (note that to be lent does not depend on the annotations)}, and by inductive
hypothesis on $\TypeCheck{\TypeEnv{\decctxP{}}}{\e''}{\_}{\sharingRel'}$ we
derive the thesis.
\end{enumerate}
\end{proof}

Informally, a capsule is a reachable object graph which is an isolated portion
of store, that is, it does not contain nodes reachable from the outside.  In
our calculus, a reachable object subgraph is a value $\val$, nodes reachable
from the outside are free variables,  hence the condition to be a capsule can
be formally expressed by requiring that $\val$ has no free variables. 
The following theorem states that the right-hand side of a capsule declaration actually reduces to a closed portion of store. 

\PGComm{Review proof with new Theorem 22.2!}
\begin{theorem}{\rm (Capsule)}\label{theo:capsule}
If $\IsWellTyped{\Decctx{\ax}{\e}}$,
$\TypeCheck{\TypeEnv{\decctx{\ax}}}{\e}{\_}{\sharingRel}$ with
$\IsCapsule{\sharingRel}$, and 
$\Decctx{\ax}{\e}\longrightarrow^\star\DecctxP{\ax}{\val}$, then 
$\FV{\remGarbage(\val)}=\emptyset$.
\end{theorem}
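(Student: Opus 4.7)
The plan is to combine Theorem~\ref{theo:subred} (subject reduction), especially clause 2 which tracks the evolution of the right-hand side of a declaration inside its enclosing context, with the canonical form Theorem~\ref{theo:freevars}, via Lemma~\ref{lemma:sharingCapsule}.

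First, I will prove by induction on the length of the reduction sequence $\Decctx{\ax}{\e}\longrightarrow^\star\DecctxP{\ax}{\val}$ that the capsule property of the affine initialiser is preserved: at every intermediate step $\DecctxS{\ax}{\e'}$ there exists $\e''\variant\e'$ with $\TypeCheck{\TypeEnv{\decctxS{\ax}}}{\e''}{\_}{\sharingRel''}$ and $\IsCapsule{\sharingRel''}$. For a single step, Theorem~\ref{theo:subred}.2 yields the typing of $\e''$ together with $\Finer{(\induced{\decctxS{\ax}}+\sharingRel'')}{(\induced{\decctx{\ax}}+\sharingRel)}$. Crucially, $\resV$ never occurs in a relation of the form $\induced{\cdot}$: by construction such a relation only records connections among the variables bound (and the references stored) in evaluated declarations. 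Projecting the refinement onto the equivalence class of $\resV$ therefore gives $\Closure{\resV}{\sharingRel''}\subseteq\Closure{\resV}{\sharingRel}=\{\resV\}$, establishing $\IsCapsule{\sharingRel''}$.

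Instantiating this invariant at the endpoint produces $\TypeCheck{\TypeEnv{\decctxP{\ax}}}{\val^\star}{\_}{\sharingRel^\star}$ for some $\val^\star\variant\val$ with $\IsCapsule{\sharingRel^\star}$. Since values are taken in canonical form, $\remGarbage(\val^\star)$ is garbage-free by construction; a routine auxiliary argument, namely that removing declarations unreachable from the body can only shrink the sharing relation, shows that $\remGarbage(\val^\star)$ is still typable with some $\sharingRel^\diamond$ satisfying $\Finer{\sharingRel^\diamond}{\sharingRel^\star}$, hence still capsule. At this point Lemma~\ref{lemma:sharingCapsule} yields $\sharingRel^\diamond=\epsilon$, and Theorem~\ref{theo:freevars}.2 tells us that every free variable of $\remGarbage(\val^\star)$ must be affine in $\TypeEnv{\decctxP{\ax}}$.

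The main obstacle, and in my view the most delicate point of the proof, is then to rule out affine free variables altogether. Two observations will be combined to close this gap. First, by the syntax of evaluated declarations, $\dv=\Dec{\C}{\x}{\ConstrCall{\C}{\vs}}$ carries no $\capsule$ modifier, so the declarations internal to the canonical-form $\remGarbage(\val^\star)$ are themselves non-affine; any putative affine free variable must therefore be bound by an outer declaration in $\decctxP{\ax}$. Second, the affinity uniqueness side condition of rule \rn{t-block}, together with the substitution performed by rule \rn{affine-elim} (which moves $\remGarbage$ of the initialiser into its unique use site), lets one maintain throughout the reduction the invariant that no outer-bound affine variable occurs free in the capsule slot. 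Combined with what was already established, this yields $\FV{\remGarbage(\val)}=\emptyset$, completing the proof.
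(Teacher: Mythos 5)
Your proposal is correct and follows essentially the same route as the paper: induction on the length of the reduction, using \refToTheorem{subred}.2 together with the observation that $\resV$ never occurs in a relation $\induced{\cdot}$ to propagate $\IsCapsule{\cdot}$ to every intermediate initialiser, and then \refToTheorem{freevars}.2 at the final value. The only differences are cosmetic: the detour through \refToLemma{sharingCapsule} is unnecessary (\refToTheorem{freevars}.2 needs only the capsule property), and your closing invariant about outer-bound affine variables is heavier machinery than needed, since the paper simply observes that a declaration context $\decctx{\ax}$ binds no affine variables that can occur free at the hole (evaluated declarations never carry the $\capsule$ modifier, and forward references to unevaluated declarations are excluded).
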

{\begin{proof}
By induction on the number $n$ of steps of the reduction
$\Decctx{\ax}{\e}\longrightarrow^\star\DecctxP{\ax}{\val}$. \\
For $n=0$, we have $\TypeCheck{\TypeEnv{\decctx{}}}{\val}{\_}{\sharingRel}$.
From \refToTheorem{freevars}.2, $\IsCapsule{\sharingRel}$ implies $\y$ affine in $\TypeEnv{\decctx{}}$ for each $\y\in\FV{\val}$. However, %it is easy to see from the definition that  
there are no affine variable declarations in $\decctx{\ax}$.\\
Let $\Decctx{\ax}{\e} \reduce{} \DecctxP{\ax}{\e'}$ and
\PG{$\DecctxP{\ax}{\e''}\longrightarrow^\star\DecctxS{\ax}{\val}$,  where 
$\e''\variant\e'$, in $n$ steps}.  From
\refToTheorem{subred}.2, since
$\TypeCheck{\TypeEnv{\decctx{}}}{\e}{\_}{\sharingRel}$, and
$\Closure{\resV}{\sharingRel}=\emptyset$, we have
$\TypeCheck{\TypeEnv{\decctxP{}}}{\e''}{\_}{\sharingRel'}$, and
$\Closure{\resV}{\sharingRel'}{=\emptyset}$. By induction hypothesis on
$\TypeCheck{\TypeEnv{\decctxP{}}}{\e''}{\_}{\sharingRel'}$ we derive the
result.
\end{proof}}

\paragraph{Progress} Closed expressions are not ``stuck'', that is, they 
either are values or can be reduced.

To prove the theorem we introduce the set of \emph{redexes},
and we show that expressions can be decomposed  in an evaluation
context filled with a redex. Therefore an expression either is a value or it
matches the left-hand side of exactly one reduction rule.
\begin{definition}
Redexes, $\redex$, are defined by:
{\small \begin{center}
$
\begin{array}{c}
\redex ::=\FieldAccess{\x}{\EZ{\f}}\ \mid\ \MethCall{\val}{\m}{\vs}\ \mid\ 
\FieldAssign{\x}{\f}{\y}
%\\
%& & 
\mid\ \BlockLab{\dvs\ \Dec{\C}{\x}{\y}\ \decs }{\e}{\X}\ \mid\ \BlockLab{\dvs\ \Dec{\Type{\capsule}{\C}}{\x}{\val}\ \decs }{\e}{\X}
\end{array}
$
\end{center}
}
\end{definition}

\begin{lemma}{\rm (Decomposition)}\label{lemma:decomposition}
If 
$\e$ is not a value, then there are 
$\ctx$ and $\redex$ such that $\congruence{\e}{\Ctx{\redex}}$. 
\end{lemma}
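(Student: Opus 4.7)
The proof will proceed by structural induction on the annotated expression $\e$. Throughout, I would freely appeal to Proposition~\ref{prop:value} to replace any value occurring as a subterm by an equivalent canonical one, which is either a bare variable $\x$ or a block $\BlockLab{\dvs}{\x}{\X}$ whose body is a variable. The congruence rules \rn{val-ctx}, \rn{dec}, \rn{body}, \rn{block-elim}, \rn{reorder}, and \rn{alpha} are the instruments for rearranging such canonical forms into the shape demanded by an evaluation context.

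The trivial cases are $\e=\x$ and $\e=\ConstrCall{\C}{\vs}$, which are already values, and $\e=\MethCall{\val}{\m}{\vs}$, which is itself a redex with $\ctx=\emptyctx$ (the redex form admits an arbitrary value receiver). For $\e=\FieldAccess{\val}{\f}$, if $\val\cong\x$ I take $\ctx=\emptyctx$ and redex $\FieldAccess{\x}{\f}$; if instead $\val\cong\BlockLab{\dvs}{\x}{\X}$, one application of \rn{val-ctx} with $\valCtx=\FieldAccess{\emptyctx}{\f}$ yields $\e\cong\BlockLab{\dvs}{\FieldAccess{\x}{\f}}{\Y}$, so $\ctx=\BlockLab{\dvs}{\emptyctx}{\Y}$ and the redex is $\FieldAccess{\x}{\f}$. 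Field assignment is analogous, with up to two successive applications of \rn{val-ctx} (one for each operand) to expose the redex form $\FieldAssign{\x}{\f}{\y}$, possibly after alpha-renaming to avoid capture.

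The real work is the block case $\e=\BlockLab{\decs}{\e_b}{\X}$. I first apply \rn{reorder} to bring every evaluated declaration to the front, obtaining $\decs\cong\dvs\,\decs'$ with $\dvs$ evaluated and $\decs'$ containing no evaluated declaration. If $\decs'=\Dec{\T}{\x}{\e_x}\,\decs''$ is nonempty, I split on $\e_x$: when $\e_x$ is not a value, the inductive hypothesis gives $\e_x\cong\ctx_x[\redex]$, and plugging back yields the evaluation context $\BlockLab{\dvs\,\Dec{\T}{\x}{\ctx_x}\,\decs''}{\e_b}{\X}$ whose side condition (no evaluated declaration after the hole) holds by construction. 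When $\e_x$ is a value in canonical form, it is either a variable $\y$ or a block $\BlockLab{\dvs_1}{\z}{\X_1}$; if $\T=\Type{\capsule}{\C}$ the whole block is already the affine-elim redex (its form admits an arbitrary value on the right-hand side), whereas if $\T$ has no modifier and $\e_x$ is a block value I use \rn{dec} followed by \rn{block-elim} to flatten $\dvs_1$ into the surrounding store and reduce $\e_x$ to a variable, at which point the enclosing block matches alias-elim. Finally, if $\decs'=\epsilon$ then every declaration is evaluated, and either $\e_b$ is a value (making $\e$ a value, contrary to hypothesis) or the inductive hypothesis on $\e_b$ supplies $\ctx_b,\redex$ with $\e_b\cong\ctx_b[\redex]$, giving $\ctx=\BlockLab{\dvs}{\ctx_b}{\X}$.

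The main obstacle is discharging the side conditions of \rn{dec}, \rn{body}, and \rn{val-ctx} whenever I flatten a nested block: one must prevent scope extrusion (an outwardly-moved declaration depending on names local to the inner block) and capture (its fresh names clashing with free variables of the ambient context), both repairable by \rn{alpha}. A subtler point is the third side condition of \rn{dec}, which forbids flattening when the outer declaration is affine and the inner block's annotation links a local name to its result; this is precisely why the affine-elim redex is defined to accept an arbitrary value directly, so that the flattening step is unnecessary in that subcase.
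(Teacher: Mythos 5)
Your proposal is correct and follows essentially the same route as the paper's proof: structural induction, Proposition~\ref{prop:value} to put values in canonical form, \rn{val-ctx} to expose field-access/assignment redexes, and for blocks a case split on the first unevaluated declaration, using \rn{reorder}, \rn{dec} (plus \rn{block-elim}) to flatten non-affine block values into an alias-elimination redex while letting the affine case match the affine-elimination redex directly. The only differences are presentational (you reorder evaluated declarations up front rather than on demand, and make the \rn{block-elim} step explicit), so no further comment is needed.
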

\begin{proof}
The proof is in~\ref{app:proofs}.  
\end{proof}

{We write $\reduce{\e}{}$ for $\reduce{\e}{\e'}$ for some $\e'$,
$\TypeCheckGround{\e}{\C}{\sharingRel}$ for
$\TypeCheck{\emptyset}{\e}{\C}{\epsilon}$, and $\IsWellTyped{\e}$ for
$\TypeCheckGround{\e}{\C}{\epsilon}$ for some $\C$ (note that an expression
with no free variables has the identity as sharing effects).}
\begin{theorem}{\rm (Progress)}\label{theo:progress}
If $\IsWellTyped{\e}$, and $\e$ is not a value, then 
$\reduce{\e}{}$.
\end{theorem}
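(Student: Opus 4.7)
The plan is to combine the Decomposition Lemma with a case analysis on the shape of the redex, using well-typedness to guarantee that the side conditions of each reduction rule are satisfied (after, if needed, suitable applications of congruence).

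First, I would apply \refToLemma{decomposition} to the non-value expression $\e$ to obtain an evaluation context $\ctx$ and a redex $\redex$ such that $\congruence{\e}{\Ctx{\redex}}$. By \refToLemma{congruence}, since $\IsWellTyped{\e}$, there is a well-typed annotated term equivalent to $\Ctx{\redex}$, so without loss of generality we may assume $\IsWellTyped{\Ctx{\redex}}$. By \refToLemma{context}.1 the redex itself is typable in the extended environment $\Gamma_\ctx$, and by the Canonical Form theorem together with inversion on the typing rules, the structural assumptions needed by the reduction rules can be read off the typing derivation.

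Next I would do a case analysis on the shape of $\redex$. If $\redex=\FieldAccess{\x}{\f_i}$, then from \rn{t-field-access} the reference $\x$ must be typed with some class $\C$ at this position, and by well-formedness of $\Ctx{\redex}$ (no free variables, each used variable bound by some enclosing declaration) there is an innermost enclosing declaration of $\x$, so $\extractDec{\ctx}{\x}$ is defined and equal to some $\Dec{\C}{\x}{\ConstrCall{\C}{\x_1,\ldots,\x_n}}$; the field $\f_i$ exists in $\C$ by typing. If the stored reference $\x_i$ happens to lie in $\HB{\ctx_1}$ (where $\ctx=\DecCtx{\ctx}{\x}{\ctx_1}$) I would first apply \rn{alpha} via \rn{congr} to rename the offending bound occurrences, after which \rn{field-access} fires. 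The cases $\redex=\FieldAssign{\x}{\f_i}{\y}$ and $\redex=\MethCall{\val}{\m}{\vs}$ are handled analogously: the typing ensures that the declaration of $\x$ (resp.\ the class of $\val$ via the auxiliary function \textit{class}) is present, that $\f_i$ (resp.\ $\m$) exists with the right arity, and any freshness requirement such as $\y\not\in\HB{\ctx_1}$ is secured by $\alpha$-conversion through \rn{congr}. The redexes $\BlockLab{\dvs\,\Dec{\C}{\x}{\y}\,\decs}{\e}{\X}$ and $\BlockLab{\dvs\,\Dec{\Type{\capsule}{\C}}{\x}{\val}\,\decs}{\e}{\X}$ trigger \rn{alias-elim} and \rn{affine-elim} immediately, with no structural side conditions to verify (the garbage-free form of $\val$ used in \rn{affine-elim} is obtained by $\remGarbage(\cdot)$ directly in the rule).

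In each case we obtain $\reduce{\Ctx{\redex}}{\e''}$ for some $\e''$, and then \rn{congr} gives $\reduce{\e}{\e''}$, establishing $\reduce{\e}{}$.

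The main obstacle I anticipate is the careful management of the hole-binder side conditions ($\x_i\not\in\HB{\ctx_1}$, $\y\not\in\HB{\ctx_1}$, and the freshness conditions hidden inside the auxiliary functions $\extractDec{\cdot}{\cdot}$ and $\class{\cdot}{\cdot}$): one must argue that every such obstruction can be removed by a finite sequence of $\alpha$-conversions and scope-extension congruences, and that the result is still well-typed thanks to \refToLemma{congruence}. The remainder of the argument is a routine structural dispatch once the decomposition is in hand.
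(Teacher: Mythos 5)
Your overall skeleton (decomposition, then a case analysis on the redex, fixing side conditions by congruence before firing the reduction rule) matches the paper's proof, and the field-access case is handled exactly as in the paper: when $\x_i\in\HB{\ctxP}$ you $\alpha$-rename the inner declaration of $\x_i$ and then apply \rn{field-access}. The gap is in the field-assignment case. There you claim the side condition $\y\not\in\HB{\ctx_1}$ is ``secured by $\alpha$-conversion through \rn{congr}'', but $\alpha$-conversion cannot help here: if $\y$ is declared in a block strictly inside the sub-context under the declaration of $\x$, renaming that declaration also renames the occurrence of $\y$ in the redex $\FieldAssign{\x}{\f_i}{\y}$ (it is bound by it), so the obstruction is unchanged. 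What is actually needed — and what the paper's proof spends most of its effort on — is to \emph{extrude} the evaluated declaration of $\y$, together with the evaluated declarations its value depends on, out of the intervening blocks using the scope-extension rules \rn{body} and \rn{dec}, by induction on the number of blocks separating $\y$'s declaration from $\x$'s.

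That extrusion is not automatic, because rule \rn{dec} carries the side condition that, when the intervening block initializes an affine ($\capsule$) variable, no declaration in the block's annotation $\X$ may be moved out. So you must argue this situation cannot arise for a well-typed redex: since $\FieldAssign{\x}{\f_i}{\y}$ has sharing $\{\x,\y,\resV\}$, by \refToLemma{monotoneSharing} $\x$ and $\y$ are connected in the sharing relation of the enclosing block; hence if $\y$ (or any of the extruded declarations) were connected to the block's result, so would be $\x$, which is declared \emph{outside} that block — contradicting $\IsCapsule{\cdot}$, so the modifier cannot be $\capsule$ and \rn{dec} applies. Your closing paragraph gestures at ``scope-extension congruences'' as a possible worry, but you never identify this capsule-annotation obstruction nor the typing argument that discharges it; without that step the claim that the side condition can always be restored is unjustified, and it is precisely the point where the block annotations introduced by the type system do real work.
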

\begin{proof}

By \refToLemma{decomposition}, if $\e$ is not a value, then
$\congruence{\e}{\Ctx{\redex}}$.  {By rule \rn{congr}, it is enough to show the
thesis for $\Ctx{\redex}$.  For all $\redex$, except field access and field
update, we have that the corresponding reduction rule is applicable, since
either there are no side conditions (cases \rn{alias-elim} and
\rn{affine-elim}), or the side conditions can be easily shown to hold (case
\rn{invk}).}

In the proof for field access and field update, we use the following auxiliary
notation. Given an evaluation context $\ctx$, the context $\HoleCtx{\ctx}$,
the outermost block of the evaluation context, is defined by
\begin{itemize}
\item $\HoleCtx{\ctx}=\Block{\dvs\ \Dec{\T}{\y}{\emptyctx}\ \decs}{\e}$ if $\ctx=\Block{\dvs\ \Dec{\T}{\y}{\ctxP}\ \decs}{\e}$ and
\item $\HoleCtx{\ctx}=\Block{\dvs}{\emptyctx}$ if  $\ctx=\Block{\dvs}{\ctxP}$.
\end{itemize}

If \underline{$\redex$ is $\FieldAccess{\x}{\f_i}$}, from
\refToLemma{context}.1, rule \rn{t-field-access}, and rule \rn{t-var} of
\refToFigure{typing}, we have that
$\TypeCheck{\Gamma_{\ctx}}{{\FieldAccess{\x}{\f_i}}}{\C_i}{\EZ{\{\x,\resV\}}}$ where
$\fields{\C}=\Field{\C_1}{\f_1}\ldots\Field{\C_n}{\f_n}$. So, we have that \EZ{$\extractDec{\ctx}{\x}=\Dec{\C}{\x}{\ConstrCall{\C}{\x_1,\ldots,\x_n}}$,}
$\decCtx{\ctx}{\x}$ is defined, and, for some $\ctx'$,
$\ctx=\DecCtx{\ctx}{\x}{\ctxP}$.  If $\x_i\not\in\HB{\ctxP}$, then rule
\rn{Field-Access} is applicable. \\
Otherwise, since $\x_i\in\HB{\ctxP}$, $\decCtx{\ctxP}{\x_i}$ is
defined, and
$\CtxP{\FieldAccess{\x}{\f_i}}=\ctx_1[\HoleCtx{{\decCtx{\ctx'}{\x_i}}}[\ctx_2[\FieldAccess{\x}{\f_i}]]]$
for some $\ctx_1$ and $\ctx_2$ such that
${\HoleCtx{{\decCtx{\ctx'}{\x_i}}}[\ctx_2[\FieldAccess{\x}{\f_i}]]}=\BlockLab{\dvs\,\Dec{\EZ{\C_i}}{\x_i}{\val}\,\decs}{\e}{\X}$.
Using rule \rn{alpha} of \refToFigure{congruence} we have that
\begin{quote}
 $\congruence{\BlockLab{\dvs\,\Dec{\EZ{\C_i}}{\x_i}{\val}\,\decs}{\e}{\X}}{\BlockLab{\Subst{\dvs}{\y}{\x_i}\ \Dec{\T}{\y}{\Subst{\val}{\y}{\x_i}}\ \Subst{\decs'}{\y}{\x_i}}{\Subst{\e}{\y}{\x_i}}{{\Subst{\X}{\y}{\x_i}}}}$
 \end{quote}
where $\y$ can be chosen such that $\y\not\in\HB{\ctx_2}$. Therefore 
$\congruence{\DecCtx{\ctx}{\x}{\CtxP{\FieldAccess{\x}{\f_i}}}}{\DecCtx{\ctx}{\x}{\ctx_3[\FieldAccess{\x}{\f_i}]}}$ 
where $\x_i\not\in\HB{\ctx_3}$.  So 
$\reduce{\DecCtx{\ctx}{\x}{\ctx_3[{\FieldAccess{\x}{\f_i}}]}}{\e_2}$ by applying rule \rn{field-access}.

\bigskip

If \underline{$\redex$ is $\FieldAssign{\x}{\f_i}{\y}$}, from
\refToLemma{context}.1, rule \rn{t-field-assign}, and rule \rn{t-var} of
\refToFigure{typing}, we have that
$\TypeCheck{\Gamma_{\ctx}}{{\FieldAssign{\x}{\f_i}{\y}}}{\C_i}{\{\x,\y,\resV\}}$
where $\fields{\C}=\Field{\C_1}{\f_1}\ldots\Field{\C_n}{\f_n}$. So, we have
that $\decCtx{\ctx}{\x}$ is defined, and $\ctx=\DecCtx{\ctx}{\x}{\ctxP}$ for
some $\ctxP$. Therefore, for some $\ctx_1'$,
$\ctx=\ctx_1'[\HoleCtx{\decCtx{\ctx}{\x}}[{\ctxP}]]$.  If
$\y\not\in\HB{\ctxP}$, then rule \rn{Field-Assign} is applicable. \\
Otherwise, since $\y\in\HB{\ctxP}$, $\decCtx{\ctxP}{\y}$ is defined, and
$\CtxP{\FieldAssign{\x}{\f_i}{\y}}=\ctx_1[\HoleCtx{{\decCtx{\ctx'}{\y}}}[\ctx_2[\FieldAssign{\x}{\f_i}{\y}]]]$
for some $\ctx_1$ and $\ctx_2$ such that
$\HoleCtx{{\decCtx{\ctx'}{\y}}}[\ctx_2[\FieldAssign{\x}{\f_i}{\y}]]=\BlockLab{\EZ{\dvs}\,\Dec{\EZ{\C_i}}{\y}{\val}\,\decs}{\e}{}$
where \EZ{$\dvs=\Reduct{(\dvs\, \decs)}{\FV{\val}}$ are all the declarations connected to the free variables of $\val$, hence to be extruded together with the declaration of $\y$}.\\
By induction on the
number $n>0$ of blocks from which we have to extrude the declaration of $\y$.
Let $\dvs_1=\EZ{\dvs}\,\Dec{\EZ{\C_i}}{\y}{\val}$. 
If $n>1$, then for some $\ctx''\neq\emptyctx$, either
\begin{enumerate}[(a)]
  \item  $\HoleCtx{\decCtx{\ctx}{\x}}[{\ctxP[\redex]}]=\BlockLab{\dvs'\ \Dec{\EZ{\C}}{\x}{\val'}}{\ctx''[\BlockLab{\dvs_1\,\decs}{\e}{}]}{}$   or
  \item 
    $\HoleCtx{\decCtx{\ctx}{\x}}[{\ctxP[\redex]}]=\BlockLab{\dvs'\ \Dec{\EZ{\C}}{\x}{\val'}\ \Dec{\T}{\z}{\ctx''[\BlockLab{\dvs_1\,\decs}{\e}{\X}]}\ \decs'}{\e'}{}$,
\end{enumerate}
For (a),  by induction hypothesis we have that
\begin{quote}
$\congruence{\BlockLab{\dvs'\ \Dec{\EZ{\C}}{\x}{\val'}}{\ctx''[\BlockLab{\dvs_1\,\decs}{\e}{}]}{}}{\BlockLab{\dvs'\ \Dec{\EZ{\C}}{\x}{\val'}}{\BlockLab{\dvs'_1\,\decs'}{\e'}{}}{}}$
\end{quote} 
for some $\decs'$, $\e'$, and \PG{$\dvs'_1$ containing $\Dec{\EZ{\C_i}}{\y}{\val}$, which are the declarations that have been extruded}. 
Let $\decs'=\dvs_2\,\decs_2$ where $\decs_2$ are not evaluated declarations and let $\decs_2=\dvs'_2\,\decs'_2 $
where $\dvs'_2=\Reduct{\dvs_2}{\FV{\dvs'_1}}$.
Since we cannot have forward reference to unevaluated variables $\FV{\dvs'_2\,\dvs_1'}\cap\dom{\decs'_2 }=\emptyset$. Therefore we can apply rule \rn{Body}  of \refToFigure{congruence}.
Applying rule \rn{body}  of \refToFigure{congruence} we have that 
\begin{quote}
%\begin{tabular}{l}
$\BlockLab{\dvs'\ \Dec{\EZ{\C}}{\x}{\val'}}{\BlockLab{\dvs_1\,\decs'}{\e'}{}}{}\cong\BlockLab{\dvs'\ \Dec{\EZ{\C}}{\x}{\val'}\ \dvs'_1\ \dvs'_2}{\BlockLab{\decs'_2}{\e'}{}}{}$. 
%\end{tabular} 
\end{quote}
%where $\X'=\X\cup(\Y'\cap\dom{\dvs_1})$.\\
For (b),  by induction hypothesis we have that
\begin{quote}
\begin{tabular}{l}
$\BlockLab{\dvs'\ \Dec{\EZ{\C}}{\x}{\val'}\ \Dec{\T}{\z}{\ctx''[\BlockLab{\dvs_1\,\decs}{\e}{\X}]}\ \decs'}{\e'}{}\cong$\\
\BigSpace$\BlockLab{\dvs'\ \Dec{\EZ{\C}}{\x}{\val'}\ \Dec{\T}{\z}{\BlockLab{\dvs'_1\,\decs'}{\e'}{\Y}}\ \decs''}{\e''}{}$
\end{tabular}
\end{quote} 
for some $\decs'$, $\decs''$, $\e'$, $\e''$, $\Y$ and \PG{$\dvs'_1$ containing $\Dec{\EZ{\C_i}}{\y}{\val}$, which are the declarations that have been extruded}. 
Let $\decs'=\dvs_2\,\decs_2$ where $\decs_2$ are not evaluated declarations and let $\decs_2=\dvs'_2\,\decs'_2 $
where $\dvs'_2=\Reduct{\dvs_2}{\FV{{\dvs'_1}}}$.
Since we cannot have forward reference to unevaluated variables $\FV{\dvs'_2\,\dvs_1'}\cap\dom{\decs'_2 }=\emptyset$. \\
From the fact that the term is well typed, we have that $\Y=\Closure{\resV}{\sharingRel}\cap(\dom{\dvs'_1}\cup\dom{\decs'})$ 
for some $\sharingRel$ such that $(\dom{\dvs'_2}\cup\dom{\dvs'_1})\subseteq\Closure{\y}{\sharingRel}$ and from \refToLemma{monotoneSharing}
and $\TypeCheck{\Gamma_{\ctx}}{{\FieldAssign{\x}{\f_i}{\y}}}{\C_i}{\{\x,\y,\resV\}}$ also $\x\in\Closure{\y}{\sharingRel}$.
Moreover, let $\Z=\dom{\dvs'_1}\cup\dom{\decs'}$, $\Remove{\sharingRel}{\Z}$ is the sharing relation associated to the inner block.\\
In order to apply congruence rule \rn{dec} we have to prove that if $\T=\Type{\mu}{\D}$ and $\mu=\capsule$, then $(\dom{\dvs'_2}\cup\dom{\dvs_1'})\cap\Y=\emptyset$. \\
If $\y\not\in\Closure{\resV}{\sharingRel}$, then $\Closure{\resV}{\sharingRel}\cap(\dom{\dvs'_2}\cup\dom{\dvs_1'})=\emptyset$.\\
If $\y\in\Closure{\resV}{\sharingRel}$, then, since $\x\in\Closure{\y}{\sharingRel}$ we have that  $\x\in\Closure{\resV}{\sharingRel}$.
Therefore $\Closure{\resV}{\sharingRel}\setminus\Z\not=\emptyset$ and $\mu$ cannot be $\capsule$. \\
Applying rule \rn{Dec} of
\refToFigure{congruence} we get 
\begin{quote}
\begin{tabular}{l}
$\BlockLab{\dvs'\ \Dec{\T}{\x}{\val'}\ \Dec{\T}{\z}{\BlockLab{\dvs'_1\,\decs'}{\e'}{\Y}}\ \decs''}{\e''}{}\cong$\\
\BigSpace$\BlockLab{\dvs'\ \Dec{\T}{\x}{\val'}\ \dvs'_1\ \dvs'_2\ \Dec{\T}{\z}{\BlockLab{\decs'_2}{\e'}{\Y'}}\ \decs''}{\e''}{}$.
\end{tabular}
\end{quote} 
Therefore 
$\congruence{\DecCtx{\ctx}{\x}{\CtxP{\FieldAssign{\x}{\f_i}{\y}}}}{\DecCtx{\ctx}{\x}{\ctx_3[\FieldAssign{\x}{\f_i}{\y}]}}$ 
for some $\ctx_3$ such that $\y\not\in\HB{\ctx_3}$.  So 
$\reduce{\ctx_1'[\HoleCtx{\decCtx{\ctx}{\x}}[{\ctx_3[{\FieldAssign{\x}{\f_i}{\y}}]}]]}{\e_2}$
applying rule \rn{field-assign}. 
\end{proof}

% !TEX root =main.tex

\section{Related work}\label{sect:related}

\emph{Capsule and lent notions.} As mentioned, the capsule property has many
variants in the literature, such as \emph{isolated} \cite{GordonEtAl12},
\emph{uniqueness} \cite{Boyland10} and \emph{external
uniqueness}~\cite{ClarkeWrigstad03}, \emph{balloon}
\cite{Almeida97,ServettoEtAl13a}, \emph{island} \cite{DietlEtAl07}. The fact
that aliasing can be controlled by using \emph{lent} (\emph{borrowed})
references is well-known~\cite{Boyland01,NadenEtAl12}.  However, before the
work in \cite{GordonEtAl12}, the capsule property was only detected in simple
situations, such as using a primitive deep clone operator, or composing
subexpressions with the same property.

The important novelty of the type system in \cite{GordonEtAl12} has been
\emph{recovery}, that is, the ability to detect properties (e.g., capsule or
immutability) by keeping into account not only the expression itself but the
way the surrounding context is used. In \cite{GordonEtAl12} an expression which
does not use external mutable references is recognized to be a capsule.
However, expressions which \emph{do} use  external mutable references, but do
not introduce sharing between them and the final result, are not recognized to
be capsules. For instance, \refToExample{One} and \refToExample{Five} in
\refToSection{examples} would be ill-typed in \cite{GordonEtAl12}. Our type
system generalizes recovery by precisely computing sharing effects.

\noindent 
\emph{Capabilities.}
In other proposals
{\cite{HallerOdersky10,HallerLoiko16,ClebschEtAl15,CastegrenWrigstad16,CastegrenW17}}
types are compositions of one or more \emph{capabilities}, and expose the union
of their operations. The modes of the capabilities in a type control how
resources of that type can be aliased. 
%The compositional aspect of capabilities is an important difference from type
%modifiers, as accessing different parts of an object through different
%capabilities in the same type gives different properties. 
By using capabilities it is possible to obtain an expressivity which looks
similar to our type system, even though with different sharing notions and
syntactic constructs. For instance, the \emph{full encapsulation} notion in
\cite{HallerOdersky10}\footnote{{This paper includes a very good survey of work
in this area, notably explaining the difference between \emph{external
uniqueness}~\cite{ClarkeWrigstad03} and \emph{full encapsulation}.}}, apart
from the fact that sharing of immutable objects is not allowed, is equivalent
to the guarantee of our $\capsule$ modifier.  Their model has a higher
syntactic/logic overhead to explicitly  track regions.  As for all work
before~\cite{GordonEtAl12}, objects need to be born \Q|@unique| (that is, with the capsule property) and the type
system permits to manipulate data preserving their uniqueness. With
recovery~\cite{GordonEtAl12} (as with our type and effect system), instead, we can forget
about uniqueness, use normal code designed to work on conventional shared
data, and then recover the aliasing encapsulation property.

\noindent 
\emph{Ownership.}
A closed stream of research is that on \emph{ownership} (see an overview
in~\cite{ClarkeEtAl13}) which, however, offers an opposite approach. The ownership invariant, which can be
  expressed and proved, is, however, expected to be guaranteed by defensive   
  cloning.
In our approach, instead, the capsule concept models an efficient
\emph{ownership transfer}. In other words, when an object $\x$ is ``owned'' by
another object $\y$, it remains always true that $\y$ can only be accessed
through $\x$, whereas the capsule notion is more dynamic: a capsule can be
``opened'', that is, assigned to a standard reference and modified, and then we
can recover the original capsule guarantee.  Cloning, if needed, becomes a
responsibility of the client.  Other work in the literature supports ownership
transfer, see for example~\cite{MullerRudich07, ClarkeWrigstad03}.  In the
literature it is however applied to uniqueness/external uniqueness, thus not
{the whole} reachable object graph is transferred.

\noindent
\emph{Full, deep and shallow interpretation.}
The literature on sharing control distinguishes three interpretations for properties of objects.
  
\begin{myitemize}
\item Full: the whole reachable object graph respects that property.
\item Shallow: only the object itself respects that property.
\item Deep: the reachable object graph is divided in 2 parts:
The first part is the one that is logically ``owned'' by the
object itself, while the second part is just ``referenced''.
\end{myitemize}
In our approach, {as in  \cite{Almeida97,GianniniEtAl16,GordonEtAl12,
ServettoEtAl13a,ServettoZucca15}}, properties have the \emph{full}
interpretation, in the sense that they are propagated to the whole reachable
object graph. In a deep interpretation, instead, {as in
\cite{Boyland10,NadenEtAl12,Reynolds02}, it is possible, for instance, to
reach a mutable object from an immutable object.  In this sense, approaches
based on ownership, or where it is somehow possible to have any form of
``internal mutation'' are (only) deep, as in
\cite{CastegrenWrigstad16,Hogg91, BocchinoADAHKOSSV09,Turon17}.  This also
includes \cite{ClarkeWrigstad03}, where an unique object can point to
arbitrarily shared objects, if they do not, in turn, point back to the unique
object itself.}

The advantage of the full interpretation is that libraries can declare strong
intentions in a coherent and uniform way, independently of the concrete
representation of the user input (that, with the use of interfaces, could be
unknown to the library). On the other side, providing (only) full modifiers
means that we do not offer any language support for (as an example) an
immutable list of mutable objects.

\noindent
\emph{Destructive reads.}
Uniqueness can be enforced by destructive reads, i.e., assigning a copy of the
unique reference to a variable and destroying the original reference, see
\cite{GordonEtAl12,Boyland10}. Traditionally, borrowing/fractional
permissions~\cite{NadenEtAl12} are related to uniqueness  in the opposite way:
a unique reference can be borrowed, it is possible to track when all borrowed
aliases are buried~\cite{Boyland01}, and then uniqueness can be recovered.
These techniques offer a sophisticate alternative to destructive reads.  We
also wish to avoid destructive reads. In our work, we ensure uniqueness by
linearity, that is, by allowing at most one use of an $\capsule$ reference.

\noindent
\emph{Alias analysis.} 
Alias analysis is a fundamental static analysis, used in compilers and code
analysers. Algorithms such as Steensgaard's algorithm, \cite{Steensgaard96},
infer equivalence classes that may alias. In \cite{DeD12} is presented a
refined version of such algorithm, performing a uniqueness analysis, similar to
our detection of ``capsule'' values. However, the aim of our work is to design
a language in which annotations, such as the affine modifier, can be used by
the user to enforce properties of its code. Then the inference system checks
that such annotations are correctly used.

\noindent
\emph{Calculus.}
Finally, an important distinguishing feature of our work is that sharing can be
directly represented at the syntactic level as a relation among free variables,
thanks to the fact that the calculus is pure.  Models of the imperative
paradigm as pure calculi have been firstly proposed in
\cite{ServettoLindsay13,CapriccioliEtAl15}.

% !TEX root =main.tex

\section{Conclusion}\label{sect:conclu}
We have presented a type and effect system which \emph{infers} sharing possibly
introduced by the evaluation of an expression.  Sharing is directly represented
at the syntactic level as a relation among free variables. This representation
allows us to express in a natural way, and to generalize, widely-used notions
in literature, providing great expressivity, as shown by the examples of
\refToSection{examples}. 

We adopt a non standard operational model, where store is encoded directly
  in the language terms. In this model, sharing properties are directly
expressed at the syntactic level. Notably, in a subterm $\e$ of a program,
objects reachable from other parts of the program are simply those denoted by
free variables in $\e$, whereas local objects are those denoted by local
variables declared in $\e$. In our opinion, this offers a very intuitive and
simple understanding of the portion of memory only reachable from $\e$, since
it is encoded in $\e$ itself. Another advantage is that, the store being
encoded in terms, most proofs can be done by structural induction on
terms, as we have exploited in the Coq implementation mentioned
below.

On the other hand, a disadvantage of our model may be that it is
possibly more esoteric for people used to the other one. Moreover, since
isolation is encoded by scoping, some care is needed to avoid scope extrusion
during reduction. For this reason, reduction is defined on typechecked terms,
where blocks have been annotated with the information of which local variables
will be connected to the result. In this way, it is possible to define a rather
sophisticated notion of congruence among terms, which allows to move
declarations out of a block only when this preserves well-typedness.

Besides standard type soundness, we proved that uniqueness and lent properties
inferred by the type system actually ensure the expected behaviour. 

To focus on novelties and allow complete formal proofs, we illustrate our
type and effect system on a minimal language, only supporting the features
which are strictly necessary for significance. However, we expect that the
approach could be smoothly extended to typical constructs of
imperative/object-oriented languages, in the same way as other type systems
or type and effect systems (e.g., when effects are possibly thrown
exceptions \cite{AnconaLZ01}). We briefly discuss two key features below.
\begin{description}
  \item[Inheritance] For a method redefined in a heir class, the returned 
    sharing relation should be \emph{contained} in that of the parent, that is, 
    \emph{less} connections should be possibly caused by the method, 
    analogously to the requirement that possibly thrown exceptions should be a 
    subset (modulo subtyping) of those of the parent.
  \item[Control flow] As it is customary in type systems, control flow 
    constructs are handled by taking the ``best common approximation'' of the 
    types obtained by the different paths. For instance, in the case of 
    if-then-else we would get the \emph{union} of the sharing relations of 
    the two branches (the same happens for possibly thrown exceptions).
\end{description}

Note that including control flow constructs we would get, again as it is
customary in type systems, which are never complete for exactly this reason,
examples which are ill-typed but well-behaved at runtime. For instance, an
expression of shape:
\begin{lstlisting}
D y= new D(y);
C$^\capsule$ z= {
  D x= new D(x); 
  if (...) x.f = x else x.f = y; 
  x
};
z
\end{lstlisting}
will be ill-typed, since the sharing relation computed for the right-hand
side of the declaration of \lstinline{z} is $\{\texttt{y}, \resV\}$, that is, the
type and effect system conservatively assumes that this expression does
\emph{not} denote a capsule. However, if the guard of the conditional
evaluates to true, then the right-hand side of the declaration reduces to a
capsule.

We have implemented in Coq the type and effect system and the reduction rules. 
The current code can be found at
\texttt{//github.com/paola-giannini/sharing}.  We did not include the
construct of method call since, as we can see from the typing and reduction
rules, it can be translated into a block. To complete the implementation of the operational semantics, we need an oriented version of the congruence relation on terms to be applied before reduction steps, as explained at the end of \refToSection{calculus}. Then, we plan to mechanize the full proof of soundness. As mentioned before, we argue that the Coq implementation nicely illustrates the advantages of our purely syntactic model, since proofs can be carried out inductively, without requiring more complicated techniques such as, e.g., bisimulation. 

In further work, in addition to completing the formalization in Coq of the
proof of soundness, we will enrich the type system to also handle
\emph{immutable} references.  A preliminary presentation of this extension is in \cite{GianniniSZ18}.

We also plan to formally state and prove
behavioural equivalence of the calculus with the conventional imperative
model. Intuitively, the reachability information which, in our calculus, is
directly encoded in terms, should be reconstructable from the
dependencies among references in a conventional model with a flat global
memory. 

As a long term goal, it would be interesting to investigate (a form of)
Hoare logic on top of our model. We believe that the hierarchical structure of
our memory representation should help local reasoning, allowing specifications
and proofs to mention only the relevant portion, similarly to what is achieved
by separation logic \cite{Reynolds02}.

\EZ{\paragraph{Acknowledgement} 
We are indebted to the anonymous SCP referees for the thorough job they did reviewing our paper and for their valuable suggestions.  We also thank the OOPS'17 and FTfJP'17 referees for their helpful comments on preliminary versions. {Finally, we thank Isaac Oscar Gariano for his careful reading.}}

\section*{References}
\bibliography{main}

\newpage
% !TEX root =ms.tex

\appendix \section{Type derivation for \refToExample{Five}}\label{app:derivation}
{In \refToFigure{TypingFive} we
give the type derivation that shows that the expression $\ett$ of
Example~\ref{ex:Five} is well-typed, where
$\Gamma_1=\TypeDec{\cU}{\Ctt},\TypeDec{\outC}{\Ctt}$,
$\Gamma_2=\TypeDec{\cD}{\Ctt},\TypeDec{\inC}{\Type{\capsule}{\Ctt}}$, and
$\Gamma_3=\TypeDec{\cT}{\Ctt},\TypeDec{\restt}{\Ctt}$ are the type contexts
corresponding to the top level, outer, and inner block, respectively.}

\begin{figure}[t]
{\small
\begin{center}
\begin{math}
\begin{array}{l}
\deriv_1:\Space{
\prooftree
{\prooftree
\begin{array}{c}
\TypeCheck{\Gamma}{\cU}{\Ctt}{\{\cU,\resV\}}\\
\TypeCheck{\Gamma}{\cD}{\Ctt}{\{\cD,\resV\}}
\end{array}
\justifies
\TypeCheck{\Gamma}{\MethCall{\cD}{\mix}{\cU}}{\Ctt}{\{\cU,\cD,\resV\}}
\endprooftree
}
\justifies
\TypeCheck{\Gamma}{{\MethCall{\MethCall{\cD}{\mix}{\cU}}{\clone}{}}}{\Ctt}{\{\cU,\cD\}}
\endprooftree
}\BigSpace
\deriv_2:\Space
{\prooftree
\begin{array}{c}
\TypeCheck{\Gamma}{\restt}{\Ctt}{\{\restt,\resV\}}\\
\TypeCheck{\Gamma}{\cT}{\Ctt}{\{\cT,\resV\}}
\end{array}\justifies
\TypeCheck{\Gamma}{{\MethCall{\restt}{\mix}{\cT}}}{\Ctt}{\{\cT,\restt,\resV\}}
\endprooftree
} \\ \\ \\
\deriv_3:\Space\prooftree
\begin{array}{c}
\prooftree
\TypeCheck{\Gamma}{\cT}{\Ctt}{\{\cT,\resV\}}
\justifies
\TypeCheck{\Gamma}{{\ConstrCall{\Ctt}{\cT}}}{\Ctt}{\{\cT,\resV\}}
\endprooftree
\BigSpace
\deriv_1
\BigSpace
\deriv_2
\end{array}
\justifies
\TypeCheck{{\SubstFun{\Gamma_1}{\Gamma_2}}}{\Block{\Dec{\Ctt}{\cT}{\ConstrCall{\Ctt}{\cT}}\,\Dec{\Ctt}{\restt}{\MethCall{\MethCall{\cD}{\mix}{\cU}}{\clone}{}}}{\MethCall{\restt}{\mix}{\cT}}}{\Ctt}{\{\cU,\cD\}}
\endprooftree
\\ \\ \\
\deriv_4:\Space\prooftree
\prooftree
\TypeCheck{\SubstFun{\Gamma_1}{\Gamma_2}}{\cD}{\Ctt}{\{\cD,\resV\}}
\justifies
\TypeCheck{\SubstFun{\Gamma_1}{\Gamma_2}}{{\ConstrCall{\Ctt}{\cD}}}{\Ctt}{\{\cD,\resV\}}
\endprooftree
\Space
\deriv_3
\Space
\prooftree
\begin{array}{c}
\TypeCheck{\SubstFun{\Gamma_1}{\Gamma_2}}{\inC}{\Ctt}{\epsilon}\\
\TypeCheck{{\SubstFun{\Gamma_1}{\Gamma_2}}}{\cD}{\Ctt}{\{\cD,\resV\}}\\[0.5ex]
\end{array}\justifies
\TypeCheck{\SubstFun{\Gamma_1}{\Gamma_2}}{\MethCall{\inC}{\mix}{\cD}}{\Ctt}\{\cD,\resV\}
\endprooftree
%\end{array}
\justifies
\TypeCheck{\Gamma_1}{\Block{\Dec{\Ctt}{\cD}{\ConstrCall{\Ctt}{\cD}}\,\Dec{\Type{\capsule}{\Ctt}}{\inC}{\eU}}{\MethCall{\inC}{\mix}{\cD}}}{\Ctt}{\{\cU,\resV\}}
\endprooftree\\ \\ \\
\deriv:\Space\prooftree
%\begin{array}{c}
\prooftree
\TypeCheck{\Gamma_1}{\cU}{\Ctt}{\{\cU,\resV\}}
\justifies
\TypeCheck{\Gamma_1}{\ConstrCall{\Ctt}{\cU}}{\Ctt}{\{\cU,\resV\}}
%\using
%\rn{t-new}
\endprooftree
\BigSpace
\deriv_4
\BigSpace
\TypeCheck{\Gamma_1}{\outC}{\Ctt}{\{\outC,\resV\}}
%\end{array}
\justifies
\TypeCheck{}{\Block{\Dec{\Ctt}{\cU}{\ConstrCall{\Ctt}{\cU}}\,\Dec{\Ctt}{\outC}{\eD}}{\outC}}{\Ctt}{\epsilon}
\endprooftree\\ \\
\end{array}
\end{math}
\end{center}
}
\hrulefill
{\small
\begin{itemize}
\item $\deriv_3$ yields $\eUA=\BlockLab{\Dec{\Ctt}{\cT}{\ConstrCall{\Ctt}{\cT}}\,\Dec{\Ctt}{\restt}{\MethCall{\MethCall{\cD}{\mix}{\cU}}{\clone}{}}}{\MethCall{\restt}{\mix}{\cT}}{\{\restt,\cT\}}$
\item $\deriv_4$ yields $\eDA=\BlockLab{\Dec{\Ctt}{\cD}{\ConstrCall{\Ctt}{\cD}}\,\Dec{\Type{\capsule}{\Ctt}}{\inC}{\eU}}{\MethCall{\inC}{\mix}{\cD}}{\{\cD\}}$
\item $\deriv$ yields
%\begin{center}
${\ett'=}\BlockLab{\Dec{\Ctt}{\cU}{\ConstrCall{\Ctt}{\cU}}\,\Dec{\Ctt}{\outC}{\eD}}{\outC}{\{\outC\}}$
%\end{center}
\end{itemize}
%\framebox{The block is annotated by $\emptyset$}
}
\caption{Type derivation for \refToExample{Five}}\label{fig:TypingFive}
\end{figure}

Derivations $\deriv_1$ and $\deriv_2$ end with an application of rule
\rn{T-Invk}. Consider $\deriv_2$.  The method $\mix$ produces sharing between
its receiver, parameter, and result.  Then the call of $\mix$ with receiver
$\restt$ and  parameter $\cT$ returns a result connected with both these
variables. The call of $\mix$ with receiver $\cD$ and  parameter $\cU$ in the
derivation $\deriv_1$ does the same with $\cD$ and $\cU$. Instead, the call of
$\clone$ does not produce any connection from its receiver and result. So the
call of $\clone$ with receiver $\MethCall{\cD}{\mix}{\cU}$ in the derivation
$\deriv_1$ does not cause connections for $\resV$. 

The type derivation $\deriv_3$ justifies the judgment
$\TypeCheck{\SubstFun{\Gamma_1}{\Gamma_2}}{\eU}{\Ctt}{\{\cU,\cD\}}$ where $\eU$
is the inner block (the initialization expression of $\inC$). {The effects of
the evaluation of initialization expressions and body are to mix the external
variables $\cU$ and $\cD$, and the local variables $\restt$ and $\cT$.
Moreover, the result is only connected with the two local variables.  Hence,
the inner block denotes a capsule, so it can be used as initialization
expression of the affine variable $\inC$.} The sharing relation resulting from
the evaluation of the declarations and the body (before removing the local
variables $\restt$ and $\cT$) is represented by
$\{\cU,\cD\}\,\{\cT,\restt,\resV\}$.\label{sequence-two}
%The block denotes a capsule, since the result of the body of the block,
%$\MethCall{\restt}{\mix}{\cT}$, is connected only to local variables.
%Therefore, {$\eU$} can be used as initialization expression of the affine
%variable $\inC$. 
The annotation for the block, i.e., the set of local variables connected to the
result, is $\{\restt,\cT\}$, so, when applying the congruence relation, the
variables $\restt$ and $\cT$ cannot be moved outside this block.

The type derivation $\deriv_4$ justifies the judgment
{$\TypeCheck{{\Gamma_1}}{\eD}{\Ctt}{\{\cU,\resV\}}$,  where $\eD$ is the outer
block (the initialization expression of $\outC$). The effects of the evaluation
of initialization expressions and body are to mix the external variable $\cU$
with the local variable $\cD$ (this effect propagates from the inner block).
Moreover, the result is connected with variable $\cD$. Hence, the result turns
out to be connected with the external variable $\cU$ as well.  Therefore, this
block is not a capsule, and could not be used to initialize an affine variable.
Note that the variable $\inC$, being affine, is not in the domain of the
sharing relation.  Indeed, it will be substituted with the result of the
evaluation of $\eU$ and so it will disappear.}

Finally, $\deriv$ is the derivation for the expression $\ett$. The block is a
closed expression, and closed expressions are capsules. The block is annotated
with the local variable $\outC$, which is connected with its result.

\section{Proofs}\label{app:proofs}
\noindent
{\bf Proposition \ref{prop:value}.}
{\it If $\val$ is a value, then there exists $\val'$ such that $\congruence{\val}{\val'}$ and $\val'$ is in canonical form.}
%\end{proposition*}
\begin{proof}
By structural induction on values.\\
\underline{Consider $\ConstrCall{\C}{\vs}$}. 
By inductive hypotheses on $\vs$ we
have that $\congruence{\ConstrCall{\C}{\vs}}{\ConstrCall{\C}{\vs'}}$ for some 
$\vs'$ in canonical form.\\
\PG{By induction on the number $n$ of $\val\in\vs'$ such that $\val$ is not a 
variable, i.e., is a block value, we show that 
$\congruence{\ConstrCall{\C}{\vs'}}{{\BlockLab{\dvs}{\ConstrCall{\C}{\xs}}{\X}}}$
for some $\dvs$, $\X$ and $xs$.\\
If $n=0$, then
$\congruence{\ConstrCall{\C}{\vs'}}{\BlockLab{\Dec{\C}{\x}{\ConstrCall{\C}{\vs'}}}{\x}{{\{\x\}}}}$
where $\x$ is a fresh variable using congruence rule \rn{new}. \\
If $n>0$, then
$\ConstrCall{\C}{\vs'}=\ConstrCall{\C}{\xs,\BlockLab{\dvs}{\x}{\X},vs''}$ where 
$\dvs\neq\epsilon$ is in canonical form. We may assume,  by $\alpha$-renaming, that 
$(\FV{\vs''}\cup\{\xs\})\cap\dom{\dvs}=\emptyset$. Using rule \rn{val-ctx} we
have that $\congruence{\ConstrCall{\C}{\vs'}}{\BlockLab{\dvs}{\val'}{\X}}$ where
$\val'=\ConstrCall{\C}{\xs,\x,\vs''}$. Since $\val'$ has $n-1$ arguments which
are not variables by inductive hypothesis we have that 
$\congruence{\ConstrCall{\C}{\xs,\x,\vs''}}{\BlockLab{\dvs'}{\ConstrCall{\C}{\xs,\x,\ys}}{\Y}}$.
Therefore $\congruence{\ConstrCall{\C}{\vs'}}{\BlockLab{\dvs}{\BlockLab{\dvs'}{\ConstrCall{\C}{\xs,\x,\ys}}{\Y}}{\X}}$.\\
We may assume,  by $\alpha$-renaming, that 
$(\FV{\dvs}\cup\dom{\dvs})\cap\dom{\dvs'}=\emptyset$. 
Using congruence rule \rn{body}, we get
%\begin{equation}\label{eq:constr}
$\congruence{\ConstrCall{\C}{\vs'}}
{\BlockLab{\dvs\ \dvs'}{\ConstrCall{\C}{\xs,\x,\ys}}{\X\cup\Y}}$.
%\end{equation}
Applying congruence rule \rn{new} followed by \rn{body} we have that 
$\congruence{\ConstrCall{\C}{\vs'}}
{\BlockLab{\dvs\ \dvs'\ \Dec{\C}{\z}{\ConstrCall{\C}{\xs,\x,\ys}}}{\z}{\X\cup\Y\cup\{\z\}}}$  with $\z$ a fresh variable.\\
\underline{Consider $\BlockLab{\dvs}{\val}{\X}$}. 
%For each $\dv$ in $\dvs$, $\dv$ is of shape , where each value $\val$ in $\vs$
%is either a variable or not. 
By induction hypothesis on values $\congruence{\BlockLab{\dvs}{\val}{\X}}{\BlockLab{\dvs}{\val'}{\X}}$
with $\val'$ in canonical form.\\
By induction on the number $n$ of declarations $\Dec{\C}{\x}{\ConstrCall{\C}{\vs}}\in\dvs$ which are not in canonical
form we show that $\congruence{\BlockLab{\dvs}{\val'}{\X}}{\BlockLab{\dvs'}{\val'}{\Y}}$ where all the $\dvs'$
are in canonical form. \\
If $n=0$, then $\dvs$ is in canonical form. We have two cases $\val'=\x$ or $\val'=\BlockLab{\dvs'}{\y}{\Y}$
with $\dvs'$ in canonical form.
In the first case $\congruence{\BlockLab{\dvs}{\val}{\X}}{\BlockLab{\dvs}{\x}{\X}}$ and we are done.
In the second case $\congruence{\BlockLab{\dvs}{\val}{\X}}{\BlockLab{\dvs}{\BlockLab{\dvs'}{\y}{\Y}}{\X}}$. 
 We may assume,  by $\alpha$-renaming, that $\dom{\dvs'}\cap(\FV{dvs}\cup\dom{\dvs})=\emptyset$,
  so congruence rule \rn{body} can be applied to obtain
  $\congruence{\BlockLab{\dvs}{\val}{\X}}{\BlockLab{\dvs\,\dvs'}{\BlockLab{}{\y}{\emptyset}}{\X\cup\Y}}$
  and with rule \rn{{block-elim}} we obtain
  $\congruence{\BlockLab{\dvs}{\val}{\X}}{\BlockLab{\dvs\,\dvs'}{{\y}}{\X\cup\Y}}$.\\
If $n>0$, then  We may assume that
  $\dvs=\dvs_1\,\Dec{\C}{\z}{\ConstrCall{\C}{\vs}}\,\dvs_2$ with $\dvs_1$ in
  canonical form, the values in $\vs$ in canonical form, and some $\val\in\vs'$
  not a variable. As for the proof for constructor values we can prove that
  $\congruence{\ConstrCall{\C}{\vs}}{\BlockLab{\dvs'}{\ConstrCall{\C}{\ys}}{\Y}}$
  for some $\Y$, $\ys$ and $\dvs'$ in canonical form. (We just omit the application of the
  congruence rule \rn{new} from the proof.) Therefore
  $\congruence{\BlockLab{\dvs}{\val}{\X}}{\BlockLab{\dvs_1\,\Dec{\C}{\z}{\BlockLab{\dvs'}{\ConstrCall{\C}{\ys}}{\Y}}\,\dvs_2}{\val}{\X}}$.
  We may assume,  by $\alpha$-renaming, that
  $\dom{\dvs'}\cap(\FV{\dvs_1\,\dvs_2\,\val}\cup\dom{\dvs_1\,\dvs_2})=\emptyset$.
  Therefore, by applying congruence rule \rn{dec} and  \rn{block-elim} we get
  \begin{center}
    $\congruence{\BlockLab{\dvs_1\,\Dec{\C}{\z}{\BlockLab{\dvs'}{\ConstrCall{\C}{\vs}}{\Y}}\,\dvs_2}{\val}{\X}}{\val'}$
  \end{center} where
  $\val'=\BlockLab{\dvs_1\,\dvs'\,\Dec{\C}{\z}{\ConstrCall{\C}{\ys}}\,\dvs_2}{\val}{\X}$.
  Since $\dvs'$ is in canonical form, the number of declarations which are not
  in canonical form in $\val'$ is $n-1$, hence the thesis holds by the inductive
  hypothesis.
  }
\end{proof}

\PG{The following lemma shows that scope extrusion preserves types. In particular, annotations on blocks 
associated to capsule variables prevent extrusion of declarations that may be connected to the result
of the block. The lemma is the main result needed to prove that congruence preserves typability.}
\begin{lemma}\label{lemma:extrusion}
Let  
\begin{enumerate}[(i)]
\item $\dec=\Dec{\Type{\mu}{\C}}{\x}{\BlockLab{\dvs_1\,\decs_2}{\e}{\X}}$ and
\item $\decs=\dvs_1\,\Dec{\Type{\mu}{\C}}{\x}{\BlockLab{\decs_2}{\e}{\Y}}$ and
\item $\dom{\decs_2}\cap\FV{\dvs_1}=\emptyset$ and 
\item if $\mu=\capsule$, then $D_1\cap\X=\emptyset$.
\end{enumerate}
$\TypeCheckDecs{{\Gamma}[{\x{:}\Type{\mu}{\C}}]}{\dec}{\sharingRel}$ if and only if
$\TypeCheckDecs{{\Gamma}[{\x{:}\Type{\mu}{\C}},\Gamma_{\dvs_1}]}{\decs}{\sharingRel'}$ where $\sharingRel=\Remove{\sharingRel'}{D_1}$.
\end{lemma}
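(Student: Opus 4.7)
The plan is a bidirectional unpacking of both typing judgements through Def.~\ref{def:typeBlock} and the block inversion Lemma~\ref{lemma:invBlock}, followed by an algebraic comparison of the resulting sharing relations. Both directions follow the same template, so I would prove $\Rightarrow$ in detail and indicate that $\Leftarrow$ is symmetric.

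For the forward direction I start by unfolding $\TypeCheckDecs{\Gamma[\x{:}\Type{\mu}{\C}]}{\dec}{\sharingRel}$ via Def.~\ref{def:typeBlock} to obtain a block typing with $\sharingRel = \SubstEqRel{\sharingRel_0}{\x}{\resV}$ (and $\IsCapsule{\sharingRel_0}$ if $\mu=\capsule$), then apply Lemma~\ref{lemma:invBlock} to extract derivations for $\dvs_1$, $\decs_2$ and $\e$ in the fully extended environment together with the identities $\sharingRel_0 = \Remove{(\sharingRel_{\dvs_1} + \sharingRel_{\decs_2} + \sharingRel_e)}{D_1\cup\dom{\decs_2}}$ and $\X = \Closure{\resV}{(\sharingRel_{\dvs_1}+\sharingRel_{\decs_2}+\sharingRel_e)}\cap(D_1\cup\dom{\decs_2})$. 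Condition (iii), combined with Lemma~\ref{lemma:weakening}, lets me transport the derivation of $\dvs_1$ into the environment $\Gamma[\x{:}\Type{\mu}{\C},\Gamma_{\dvs_1}]$ required for $\decs$, since the names removed together with $\Gamma_{\decs_2}$ are disjoint from $\FV{\dvs_1}\cup\dom{\dvs_1}$. Rule \rn{t-block} then assembles the inner block of $\decs$ from the same premises for $\decs_2$ and $\e$ with sharing $\sharingRel_0' = \Remove{(\sharingRel_{\decs_2}+\sharingRel_e)}{\dom{\decs_2}}$ and annotation $\Y = \Closure{\resV}{(\sharingRel_{\decs_2}+\sharingRel_e)}\cap\dom{\decs_2}$; wrapping in the declaration of $\x$ and prepending $\dvs_1$ yields $\sharingRel' = \sharingRel_{\dvs_1} + \SubstEqRel{\sharingRel_0'}{\x}{\resV}$.

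The identity $\sharingRel = \Remove{\sharingRel'}{D_1}$ then reduces to an algebraic manipulation. By (iii) and the syntactic disjointness $D_1\cap\dom{\decs_2} = \emptyset$ we have $\Remove{\sharingRel_{\dvs_1}}{\dom{\decs_2}} = \sharingRel_{\dvs_1}$, so Prop.~\ref{prop:lessSrRel}.4 yields $\Remove{(\sharingRel_{\dvs_1}+\sharingRel_{\decs_2}+\sharingRel_e)}{\dom{\decs_2}} = \sharingRel_{\dvs_1} + \sharingRel_0'$. Decomposing $\Remove{\cdot}{D_1\cup\dom{\decs_2}}$ as $\Remove{\Remove{\cdot}{\dom{\decs_2}}}{D_1}$, observing that $\SubstEqRel{\cdot}{\x}{\resV}$ commutes with $\Remove{\cdot}{D_1}$ because $\x,\resV\notin D_1$, and using $\SubstEqRel{\sharingRel_{\dvs_1}}{\x}{\resV} = \sharingRel_{\dvs_1}$ (since $\resV$ is a singleton in $\sharingRel_{\dvs_1}$ and $\x$ does not occur in it), both sides collapse to $\Remove{(\sharingRel_{\dvs_1} + \SubstEqRel{\sharingRel_0'}{\x}{\resV})}{D_1}$.

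The main obstacle is the capsule constraint, which is where conditions (iii) and (iv) become essential. For direction $\Rightarrow$ with $\mu = \capsule$ I must show $\IsCapsule{\sharingRel_0'}$, so that the extruded inner block is itself a capsule. Condition (iv), $D_1\cap\X=\emptyset$, together with the identity for $\X$, forces $\Closure{\resV}{(\sharingRel_{\dvs_1}+\sharingRel_{\decs_2}+\sharingRel_e)}\subseteq\dom{\decs_2}\cup\{\resV\}$; dropping the $\sharingRel_{\dvs_1}$ summand can only shrink this class (Prop.~\ref{prop:lessSrRel}.2), so it stays inside $\dom{\decs_2}\cup\{\resV\}$, and removing $\dom{\decs_2}$ leaves $\{\resV\}$. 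The converse direction does not need (iv): assuming $\IsCapsule{\sharingRel_0'}$, condition (iii) prevents $\sharingRel_{\dvs_1}$ from extending the $\resV$-class through $\dom{\decs_2}$ (since $\sharingRel_{\dvs_1}$ mentions no name in $\dom{\decs_2}$), so no variable of $D_1$ enters $\Closure{\resV}{(\sharingRel_{\dvs_1}+\sharingRel_{\decs_2}+\sharingRel_e)}$, and $\sharingRel_0$ is itself a capsule.
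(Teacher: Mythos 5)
Your plan follows essentially the same route as the paper's proof: inversion for blocks (\refToLemma{invBlock}) plus weakening justified by condition (iii), reassembly of the inner block via \rn{t-block}, the same algebraic chain using \refToProp{lessSrRel}.4 and the commutation of $\SubstEqRel{\cdot}{\x}{\resV}$ with $\Remove{\cdot}{D_1}$, and the same use of (iv) (resp.\ (iii)) for the capsule side condition in the forward (resp.\ backward) direction. One small precision: in the forward capsule case the containment $\Closure{\resV}{(\sharingRel_{\dvs_1}+\sharingRel_{\decs_2}+\sharingRel_e)}\subseteq\dom{\decs_2}\cup\{\resV\}$ does not follow from (iv) and the identity for $\X$ alone, but needs in addition the hypothesis $\IsCapsule{\sharingRel_0}$ that you extracted in the unfolding, so cite it explicitly there.
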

\begin{proof}
Let $D_1=D_1$ and $D_2=\dom{\decs_2}$. \\
\underline{We first show the ``only if'' implication}.\\
Let $\TypeCheckDecs{\Gamma[{\x{:}\Type{\mu}{\C}}]}{\dec}{\sharingRel}$ and
$\Gamma_1={\Gamma}[{\x{:}\Type{\mu}{\C}}][\Gamma_{\dvs_1},\Gamma_{\decs_2}]$. From \refToLemma{invBlock} we have that
 \begin{enumerate} [(1)]
     \item  $\sharingRel=\SubstEqRel{\sharingRel_x}{\x}{\resV}$ where $\sharingRel_x=\Remove{(\sharingRel_1+\sharingRel_2+\sharingRel_e)}{(D_1\cup D_2)}$
       \item $\TypeCheckDecs{\Gamma_1}{\dvs_1}{\sharingRel_1}$ and $\TypeCheckDecs{\Gamma_1}{\decs_2}{\sharingRel_2}$
       and $\TypeCheck{\Gamma_1}{\e}{\C}{\sharingRel_e}$ 
      \item $\X=\Closure{\resV}{(\sharingRel_1+\sharingRel_2+\sharingRel_e)}\cap(D_1\cup D_2)$
     \item if $\mu=\capsule$, then $\IsCapsule{\sharingRel_x}$
 \end{enumerate}
 By $\alpha$-renaming
we may assume that $\x\not\in(D_1\cup D_2)$ so let $\Gamma'_1={\Gamma}[{\x{:}\Type{\mu}{\C}},\Gamma_{\dvs_1}][\Gamma_{\decs_2}]$ we have $\Gamma_1=\Gamma'_1$. From (2) we derive
\begin{enumerate} [(a)]
       \item $\TypeCheckDecs{\Gamma'_1}{\decs_2}{\sharingRel_2}$
      and $\TypeCheck{\Gamma'_1}{\e}{\C}{\sharingRel_e}$ 
\end{enumerate}
and applying rule \rn{T-block} to (a)
\begin{enumerate} [(a)]\addtocounter{enumi}{1}
       \item $\TypeCheck{{\Gamma}[{\x{:}\Type{\mu}{\C}},\Gamma_{\dvs_1}]}{\BlockLab{\decs_2}{\e}{\Y}}{\C}{\sharingRel'_x}$ where
       \item $\sharingRel'_x=\Remove{(\sharingRel_2+\sharingRel_e)}{D_2}$ 
       \item $\Y=\Closure{\resV}{{(\sharingRel_2+\sharingRel_e)}}\cap{{D_2}}$
\end{enumerate}
If $\mu=\capsule$, from (iv) we get ${D_1}\cap(\Closure{\resV}{(\sharingRel_1+\sharingRel_2+\sharingRel_e)}\cap(D_1\cup D_2))=\emptyset$.
Therefore, $\Closure{\resV}{\sharingRel_x}=\Closure{\resV}{(\sharingRel_1+\sharingRel_2+\sharingRel_e)}\setminus(D_1\cup D_2)=\Closure{\resV}{(\sharingRel_1+\sharingRel_2+\sharingRel_e)}\setminus{D_2}$. From (d) 
$\Closure{\resV}{\sharingRel'_x}=\Closure{\resV}{(\sharingRel_2+\sharingRel_e)}\setminus{D_2}\subseteq\Closure{\resV}{(\sharingRel_1+\sharingRel_2+\sharingRel_e)}\setminus{D_2}$. Therefore from (4), $\IsCapsule{\sharingRel_x}$ and we get
that $\IsCapsule{\sharingRel'_x}$. From (b)
\begin{enumerate} [(a)]\addtocounter{enumi}{4}
\item $\TypeCheckDecs{{\Gamma}[{\x{:}\Type{\mu}{\C}},\Gamma_{\dvs_1}]}{\Dec{\Type{\mu}{\C}}{\x}{\BlockLab{\decs_2}{\e}{\Y}}}{\SubstEqRel{\sharingRel'_x}{\x}{\resV}}$ 
\end{enumerate}
From (iii) and \refToLemma{weakening}.2 and (2)
\begin{enumerate} [(a)]\addtocounter{enumi}{5}
       \item $\TypeCheckDecs{{\Gamma}[{\x{:}\Type{\mu}{\C}},\Gamma_{\dvs_1}]}{\dvs_1}{\sharingRel_1}$
\end{enumerate}
Therefore\\
 \centerline{$\TypeCheckDecs{{\Gamma}[{\x{:}\Type{\mu}{\C}},\Gamma_{\dvs_1}]}{\dvs_1\,\Dec{\Type{\mu}{\C}}{\x}{\BlockLab{\decs_2}{\e}{\X\setminus{D_1}}}}{\sharingRel'=\sharingRel_1+\SubstEqRel{\sharingRel'_x}{\x}{\resV}}$.}\\
 Let $\sharingRel''=\sharingRel_{2}+\sharingRel_e$. From (iii) and \refToProp{invTyping1} we have that 
 $\Remove{\sharingRel_1}{D_2}=\sharingRel_1$ and so by \refToProp{lessSrRel}.\ref{p3}
 \begin{enumerate} [(a)]\addtocounter{enumi}{6}
       \item $\Remove{(\sharingRel_1+\sharingRel'')}{D_2}=\Remove{\sharingRel_1}{D_2}+\Remove{\sharingRel''}{D_2}=\sharingRel_1+\Remove{\sharingRel''}{D_2}$
\end{enumerate}
Therefore\\
 \centerline{$
 \begin{array}{lcll}
 \sharingRel&=& \SubstEqRel{(\Remove{(\sharingRel_1+\sharingRel'')}{(D_1\cup D_2)})}{\x}{\resV}
\\
 &=& \SubstEqRel{(\Remove{(\Remove{(\sharingRel_1+\sharingRel'')}{D_2})}{D_1})}{\x}{\resV}& \\
&=& \SubstEqRel{(\Remove{(\sharingRel_1+\Remove{\sharingRel''}{D_2})}{D_1})}{\x}{\resV}&\text{by (g)} \\
&=& \Remove{(\SubstEqRel{(\sharingRel_1+\Remove{\sharingRel''}{D_2})}{\x}{\resV})}{D_1}&\text{since $\{\x,\resV\}\cap D_1=\emptyset$} \\
&=& \Remove{\sharingRel_1+(\SubstEqRel{(\Remove{\sharingRel''}{D_2})}{\x}{\resV})}
 {D_1}&\text{since $\Closure{\resV}{\sharingRel_1}=\{\resV\}$ and} \\
 & & & 
 \quad\Closure{\x}{\sharingRel_1}=\{\x\} \\
 &=& \Remove{\sharingRel_1+(\SubstEqRel{\sharingRel'_x}{\x}{\resV})}
 {D_1}&\text{since $\sharingRel'_x=\Remove{\sharingRel''}{\D_2}$} \\
 &=& \Remove{\sharingRel'}{D_1}
  \end{array}
 $}

 \underline{We now show the ``if'' implication}.\\
Let $\TypeCheckDecs{\Gamma[{\x{:}\Type{\mu}{\C}},\Gamma_{\dvs_1}]}{\decs}{\sharingRel'}$  and 
$\Gamma_1={{\Gamma}[{\x{:}\Type{\mu}{\C}},\Gamma_{\dvs_1}]}[\Gamma_{\decs_2}]$. From \refToLemma{invBlock} we have that
 \begin{enumerate} [(1)]
     \item  $\sharingRel'=\sharingRel_1+\SubstEqRel{\sharingRel_x}{\x}{\resV}$ where $\sharingRel_x=\Remove{(\sharingRel_2+\sharingRel_e)}{D_2}$
     \item $\TypeCheckDecs{{\Gamma}[{\x{:}\Type{\mu}{\C}},\Gamma_{\dvs_1}]}{\dvs_1}{\sharingRel_1}$ 
   \item $\TypeCheckDecs{\Gamma_1}{\decs_2}{\sharingRel_2}$
       and $\TypeCheck{\Gamma_1}{\e}{\C}{\sharingRel_e}$ 
      \item $\Y=\Closure{\resV}{(\sharingRel_2+\sharingRel_e)}\cap D_2$
     \item if $\mu=\capsule$, then $\IsCapsule{\sharingRel_x}$
 \end{enumerate}
 By $\alpha$-renaming
we may assume that $\x\not\in D_2$ so let $\Gamma'_1={\Gamma}[{\x{:}\Type{\mu}{\C}}][\Gamma_{\dvs_1},\Gamma_{\decs_2}]$ we have $\Gamma_1=\Gamma'_1$. From (3) we derive
\begin{enumerate} [(a)]
       \item $\TypeCheckDecs{\Gamma'_1}{\decs_2}{\sharingRel_2}$
      and $\TypeCheck{\Gamma'_1}{\e}{\C}{\sharingRel_e}$ 
\end{enumerate}
and, from (iii) and \refToLemma{weakening}.2 also
\begin{enumerate} [(a)]\addtocounter{enumi}{1}
       \item $\TypeCheckDecs{\Gamma'_1}{\dvs_1}{\sharingRel_1}$
\end{enumerate}
Applying rule \rn{T-block} to (a) and (b) we have
\begin{enumerate} [(a)]\addtocounter{enumi}{2}
       \item $\TypeCheck{{\Gamma}[{\x{:}\Type{\mu}{\C}}]}{\BlockLab{\dvs_1\,\decs_2}{\e}{\X}}{\C}{\sharingRel'_x}$ where
       \item $\sharingRel'_x=\Remove{(\sharingRel_1+\sharingRel_2+\sharingRel_e)}{(D_1\cup D_2)}$ and
       \item $\X=\Closure{\resV}{(\sharingRel_1+\sharingRel_2+\sharingRel_e)}\cap(D_1\cup D_2)$
\end{enumerate}
\PG{From (iii) and  \refToProp{invTyping1}  there are no $\Pair{\y}{\y'}\in\sharingRel_2$ 
with $\y\neq\y'$ such that either $\y\in{D_1}$ or $\y'\in{D_1}$. 
Therefore $\Pair{\z}{\resV}\in(\sharingRel_1+\sharingRel_2+\sharingRel_e)\setminus(D_1\cup D_2)$
and $\z\in{D_2}$ implies $\Pair{\z}{\resV}\in(\sharingRel_2+\sharingRel_e)\setminus{D_1}$.
If $\mu=\capsule$, then from (5) we have that $\IsCapsule{\sharingRel'_x}$.}
From  \refToDef{typeBlock}  we derive
\begin{enumerate} [(a)]\addtocounter{enumi}{5}
       \item $\TypeCheckDecs{{\Gamma}[{\x{:}\Type{\mu}{\C}}]}{\Dec{\Type{\mu}{\C}}{\x}{\BlockLab{\dvs_1\,\decs_2}{\e}{\X}}}{\SubstEqRel{\sharingRel'_x}{\x}{\resV}}$
\end{enumerate}
As for the ``ony if'' proof we can show that $\sharingRel=\Remove{\sharingRel'}{D_1}$.\end{proof}

\medskip

\noindent{\bf Lemma \ref{lemma:congruence}.} (Congruence preserves types)
{\it Let $\e_1$ and $\e_2$ be annotated expressions. 
If $\TypeCheck{\Gamma}{\e_1}{\C}{\sharingRel}$
and $\congruence{\e_1}{\e_2}$, then $\TypeCheck{\Gamma}{\e'_2}{\C}{\sharingRel}$ for some $\e'_2$ such that $\e_2'\variant\e_2$.
}
\begin{proof} 
  \PG{By cases on the congruence rule used. We do the 
  case for rules \rn{dec} and \rn{val-ctx} with  $\terminale{new}$ that are the most significative and show how scope extrusion preserves 
  typing. The other cases are similar and simpler. In both cases we first show that typability of the left-side of $\congruence{}{}$
  implies typability of the right-side with the same type and sharing relation and  then the viceversa.} 

\underline{Rule \rn{{dec}}}. Let
\begin{itemize} 
\item $\e_1=\BlockLab{\dvs\ \Dec{\Type{\mu}{\C}}{\x}{\BlockLab{\dvs_1\ \decs_2}{\e}{\X}}\ \decs'}{\e'}{\Y}$ and
\item$\e_2=\BlockLab{\dvs\ \dvs_1\ \Dec{\Type{\mu}{\C}}{\x}{\BlockLab{\decs_2}{\e}{\X'}\ \decs'}}{\e'}{\Y'}$.
\end{itemize} 
where
\begin{enumerate} [(1)]
\item $\FV{\dvs_1}\cap\dom{\decs_2}=\emptyset$ 
\item $\FV{\dvs\,\decs'\,\e'}\cap\dom{\dvs_1}=\emptyset$ 
\item $\mbox{if }\mu=\capsule\mbox{ then }\dom{\dvs_1}\cap\X=\emptyset$.
\end{enumerate}
Let $D_1=\dom{\dvs_1}$.\\
We show that \underline{$\TypeCheck{\Gamma}{\e_1}{\C}{\sharingRel}$ implies $\TypeCheck{\Gamma}{\e_2}{\C}{\sharingRel}$} 
for some $\X'$ and $\Y'$.\\
Let $\TypeCheck{\Gamma}{\e_1}{\C}{\sharingRel}$, define $\Gamma_1=\SubstFun{\Gamma}{\Gamma_{\dvs},\x{:}\Type{\mu}{\C},\Gamma_{\decs'}}$, 
and $\Z_d=\dom{\dvs}\cup\dom{\decs'}\cup\{\x\}$. From \refToLemma{invBlock} we have that
\begin{enumerate}[(a)]
\item $\sharingRel=\Remove{(\sharingRel_d+\sharingRel_x+\sharingRel_e)}{\Z_d}$ 
\item $\Y=\Closure{\resV}{(\sharingRel_d+\sharingRel_x+\sharingRel_e)}\cap\Z_d$
  \item $\TypeCheck{\Gamma_1}{\e'}{\C}{\sharingRel_e}$ and $\TypeCheckDecs{\Gamma_1}{\dvs\ \decs'}{\sharingRel_d}$
 \item $\TypeCheckDecs{\Gamma_1}{\Dec{\Type{\mu}{\C}}{\x}{\BlockLab{\dvs_1\ \decs_2}{\e}{\X}}}{\sharingRel_x}$
\end{enumerate}
From (d), (1), (3) and \refToLemma{extrusion}, letting $\Gamma'_1=\SubstFun{\Gamma}{\Gamma_{\dvs},\x{:}\Type{\mu}{\C}, \Gamma_{\dvs_1},\Gamma_{\decs'}}$, we have that
\begin{enumerate}[(a)]\addtocounter{enumi}{4}
 \item $\TypeCheckDecs{\Gamma'_1}{\dvs_1\ \Dec{\Type{\mu}{\C}}{\x}{\BlockLab{\decs_2}{\e}{\X\setminus{D_1}}}}{\sharingRel'_x}$  where $\sharingRel_x=\Remove{\sharingRel'_x}{D_1}$
\end{enumerate}
From (2), (c) and \refToLemma{weakening}.1 we have 
\begin{enumerate}[(a)]\addtocounter{enumi}{5}
  \item $\TypeCheck{\Gamma'_1}{\e'}{\C}{\sharingRel_e}$ and $\TypeCheckDecs{\Gamma'_1}{\dvs\ \decs'}{\sharingRel_d}$
\end{enumerate}
\begin{enumerate}[(a)]\addtocounter{enumi}{5}
  \item $\TypeCheck{\Gamma'_1}{\e'}{\C}{\sharingRel_e}$ and $\TypeCheckDecs{\Gamma'_1}{\dvs\ \decs'}{\sharingRel_d}$
\end{enumerate}
From (e), (f) and rule \rn{T-Block}\\
\centerline{$
\TypeCheck{\Gamma}{\BlockLab{\dvs\ \dvs_1\ \Dec{\Type{\mu}{\C}}{\x}{\BlockLab{\decs_2}{\e}{\X\setminus{D_1}}}\ \decs'}{\e'}{\Y'}}{\C}{\sharingRel'}
$}\\
where $\sharingRel'=\Remove{(\sharingRel_d+\sharingRel'_x+\sharingRel_e)}{(\Z_d\cup{D_1})}$
and $\Y'=\Closure{\resV}{(\sharingRel_d+\sharingRel'_x+\sharingRel_e)}\cap(\Z_d\cup{D_1})$. \\
From (2), \refToProp{invTyping1} and  \refToProp{lessSrRel}.\ref{p4} we have that 
\begin{enumerate}[(a)]\addtocounter{enumi}{6}
  \item $\Remove{(\sharingRel'_x+\sharingRel_d+\sharingRel_e)}{D_1}=\sharingRel_x+\sharingRel_d+\sharingRel_e$ 
  \end{enumerate}
Therefore\\
\centerline{$
\begin{array}{lcll}
\sharingRel'&=& \Remove{(\Remove{(\sharingRel'_x+\sharingRel_d+\sharingRel_e)}{D_1})}{\Z_d} &\text{by definition of $\setminus$} \\
 &=& \Remove{(\sharingRel_x+\sharingRel_d+\sharingRel_e)}{\Z_d} &\text{from (g)} \\
&=&\sharingRel
 \end{array}
$}\\
%By definition of $\sharingRel'_x$ and \refToProp{lessSrRel}.2 $\sharingRel'_x+\sharingRel_d+\sharingRel_e$
%From (e) $\sharingRel'_x=\sharingRel_1+\sharingRel''_x$ for $\TypeCheckDecs{\Gamma'_1}{\dvs_1}{\sharingRel_1}$ and
%$\Finer{\sharingRel_x}{\sharingRel''_x}$. Since $\Closure{\resV}{\sharingRel_1}=\{\resV\}$ and 
%If $\y\in\Remove{\Y'}{D_1}$, 
%then $\Pair{\y}{\resV}\in\Remove{(\sharingRel'_x+\sharingRel_d+\sharingRel_e)}{D_1}$ and $\y\in\Z_d$.
%From (g)  $\Pair{\y}{\resV}\in\sharingRel_x+\sharingRel_d+\sharingRel_e$ and so $\y\in\Y$. Viceversa $\y\in\Y$
%implies $\y\in\Z_d$ and $\Pair{\y}{\resV}\in\sharingRel_x+\sharingRel_d+\sharingRel_e$ and by (g)
%$\Pair{\y }{\resV}\in\Remove{(\sharingRel'_x+\sharingRel_d+\sharingRel_e)}{D_1}$. Therefore $\y\in\Remove{\Y'}{D_1}$. So $\Remove{\Y'}{D_1}=\Y$.

\medskip\noindent
We show that \underline{$\TypeCheck{\Gamma}{\e_2}{\C}{\sharingRel}$ implies $\TypeCheck{\Gamma}{\e_1}{\C}{\sharingRel}$} for some $\X$ and $\Y$.\\
Let $\TypeCheck{\Gamma}{\e_2}{\C}{\sharingRel}$, define $\Gamma_1=\SubstFun{\Gamma}{\Gamma_{\dvs},\Gamma_{\dvs_1},\x{:}\Type{\mu}{\C},\Gamma_{\decs'}}$, 
and $\Z_d=\dom{\dvs}\cup\dom{\decs'}\cup\{\x\}$. From \refToLemma{invBlock} we have that
\begin{enumerate}[(a)]
\item $\sharingRel=\Remove{(\sharingRel_d+\sharingRel_{1}+\sharingRel_x+\sharingRel_e)}{(\Z_d\cup{D_1})}$ 
\item $\Y'=\Closure{\resV}{(\sharingRel_d+\sharingRel_{1}+\sharingRel_x+\sharingRel_e)}\cap(\Z_d\cup{D_1})$
  \item $\TypeCheck{\Gamma_1}{\e'}{\C}{\sharingRel_e}$ and $\TypeCheckDecs{\Gamma_1}{\dvs\ \decs'}{\sharingRel_d}$ 
  \item $\TypeCheckDecs{\Gamma_1}{\dvs_1}{\sharingRel_1}$
 \item $\TypeCheckDecs{\Gamma_1}{\Dec{\Type{\mu}{\C}}{\x}{\BlockLab{\decs_2}{\e}{\X'}}}{\sharingRel''_x}$
\end{enumerate}
From (d), (e) and \refToDef{typeBlock}, letting $\sharingRel'_x=\sharingRel_1+\sharingRel''_x$, we have that $\TypeCheckDecs{\Gamma_1}{\dvs_1\, \Dec{\Type{\mu}{\C}}{\x}{\BlockLab{\decs_2}{\e}{\X'}}}{\sharingRel'_x}$. Therefore from \refToLemma{extrusion}, 
letting $\Gamma'_1=\SubstFun{\Gamma}{\Gamma_{\dvs},\x{:}\Type{\mu}{\C},\Gamma_{\decs'}}$, we have that
\begin{enumerate}[(a)]\addtocounter{enumi}{5}
 \item $\TypeCheckDecs{\Gamma'_1}{\Dec{\Type{\mu}{\C}}{\x}{\BlockLab{\dvs_1\ \decs_2}{\e}{\X}}}{\sharingRel_x}$  where $\sharingRel_x=\Remove{\sharingRel'_x}{D_1}$ and $\Remove{\X}{D_1}=\X'$.
\end{enumerate}
From (2), (c) and \refToLemma{weakening}.2 $\TypeCheck{\Gamma'_1}{\e'}{\C}{\sharingRel_e}$ and $\TypeCheckDecs{\Gamma'_1}{\dvs\ \decs'}{\sharingRel_d}$. From rule \rn{T-block} we have that\\
\centerline{$
\TypeCheck{\Gamma}{\BlockLab{\dvs\ \Dec{\Type{\mu}{\C}}{\x}{\BlockLab{\dvs_1\ \decs_2}{\e}{\X}}\ \decs'}{\e'}{\Y}}{\C}{\sharingRel'}
$}\\
where, $\Y=\Closure{\resV}{\sharingRel_x+\sharingRel_e+\sharingRel_d}\cap\Z_d$ and $\sharingRel'=\Remove{\sharingRel_x+\sharingRel_e+\sharingRel_d}{\Z_d}$. The prof that $\sharingRel'=\sharingRel$ is as for the previous 
implication.

\medskip
\underline{Rule \rn{val-ctx}} with $\terminale{new}$. Let
\begin{itemize} 
\item $\e_1=\ConstrCall{\C}{\val_1,\dots,\val_n,\BlockLab{\dvs_1\ \dvs_2}{\val}{\X},\val_{n+1},\dots,\val_{n+m}}$ and
\item$\e_2=\BlockLab{\dvs_1}{\ConstrCall{\C}{\val_1,\dots,\val_n,\BlockLab{\dvs_2}{\val}{\X'},\val_{n+1},\dots,\val_{n+m}}}{\Y}$.
\end{itemize} 
where
\begin{enumerate} [(1)]
\item $\FV{\dvs_1}\cap\dom{\dvs_2}=\emptyset$ 
\item $\FV{\vs\,\vs'}\cap\dom{\dvs_1}=\emptyset$ 
\end{enumerate}
Let $D_1=\dom{\dvs_1}$ and $D_2=\dom{\dvs_2}$.\\
We show that \underline{$\TypeCheck{\Gamma}{\e_1}{\C}{\sharingRel}$ implies $\TypeCheck{\Gamma}{\e_2}{\C}{\sharingRel}$} for some $\X'$ and $\Y$.\\
Let $\TypeCheck{\Gamma}{\e_1}{\C}{\sharingRel}$, define $\Gamma_1=\SubstFun{\Gamma}{\Gamma_{\dvs_1},\Gamma_{\dvs_2}}$. From rule \rn{T-new}  \refToLemma{invBlock} we have that
\begin{enumerate}[(a)]
\item $\sharingRel=\sharingRel_{\val}+\Remove{(\sharingRel'_1+\sharingRel'_2+\sharingRel')}{(D_1\cup D_2)}$ where $\sharingRel_{\val}=\sum\limits_{i=1}^{n+m}\sharingRel_i$
  \item $\TypeCheck{\Gamma}{\val_i}{\T_i}{\sharingRel_i}$ for all $i$, $1\leq i\leq n+m$
  \item $\TypeCheckDecs{\Gamma'}{\dvs_1}{\sharingRel'_1}$  
  \item $\TypeCheckDecs{\Gamma'}{\dvs_2}{\sharingRel'_2}$ and  $\TypeCheck{\Gamma'}{\val}{\T'}{\sharingRel'}$  and
 \item %$\fields{\C}{=}\Field{\T_1}{\f_1},\dots,\Field{\T_n}{\f_n},\Field{\T'}{\f'},\Field{\T_{n+1}}{\f_{n+1}},\dots,\Field{\T_{n+m}}{\f_{n+m}}$ and\\
 $\X=\Closure{\resV}{(\sharingRel'_1+\sharingRel'_2+\sharingRel')}\cap(D_1\cup D_2)$
\end{enumerate}
where $\fields{\C}=\Field{\T_1}{\f_1}\dots\Field{\T_n}{\f_n}\Field{\T'}{\f'}\Field{\T_{n+1}}{\f_{n+1}}\dots\Field{\T_{n+m}}{\f_{n+m}}$.\\
By wellformedness of blocks $D_1\cap D_2=\emptyset$. Therefore $\Gamma_1={\Gamma}[{\Gamma_{\dvs_1}][\Gamma_{\dvs_2}}]$.
From \refToLemma{weakening}.2 and (d) we get $\TypeCheckDecs{\SubstFun{\Gamma}{\Gamma_{\dvs_2}}}{\dvs_2}{\sharingRel'_2}$ and  $\TypeCheck{\SubstFun{\Gamma}{\Gamma_{\dvs_2}}}{\val}{\T'}{\sharingRel'}$ and by rule \rn{T-block}
\begin{enumerate}[(A)]%\addtocounter{enumi}{5}
  \item $\TypeCheck{\SubstFun{\Gamma}{\Gamma_{\dvs_1}}}{\BlockLab{\dvs_2}{\val}{\X'}}{\T'}{\Remove{(\sharingRel'_2+\sharingRel')}{D_2}}$ where  $\X'=\Closure{\resV}{(\sharingRel'_2+\sharingRel')}\cap D_2$  
\end{enumerate}
From (1), (c) and \refToLemma{weakening}.2 we get
\begin{enumerate}[(A)]\addtocounter{enumi}{1}
  \item $\TypeCheckDecs{\SubstFun{\Gamma}{\Gamma_{\dvs_1}}}{\dvs_1}{\sharingRel'_1}$
  \end{enumerate}
From (b), (2) and \refToLemma{weakening}.1 we get
\begin{enumerate}[(A)]\addtocounter{enumi}{2}
  \item $\TypeCheck{\SubstFun{\Gamma}{\Gamma_{\dvs_1}}}{\val_i}{\T_i}{\sharingRel_i}$ for all $i$,  $1\leq i\leq n+m$  
\end{enumerate}
From (A), (B), (C) and rules \rn{T-New} and \rn{T-block} we get\\
\centerline{
$
\TypeCheck{\Gamma}{\e_2}{\C}{\Remove{((\sharingRel'_1+\sharingRel_{\val})+\Remove{(\sharingRel'_2+\sharingRel')}{D_2})}{D_1}}
$
}
and $\Y=\Closure{\resV}{((\sharingRel'_1+\sharingRel_{\val})+\Remove{(\sharingRel'_2+\sharingRel')}{D_2})}\cap D_1$. 
By $\alpha$-congruence we may assume that $\FV{\vs\,\vs'}\cap(D_1\cup D_2)=\emptyset$ so  
\begin{enumerate}[(A)]\addtocounter{enumi}{3}
\item $\sharingRel_{\val}+\Remove{(\sharingRel'_1+\sharingRel'_2+\sharingRel')}{(D_1\cup D_2)}=
\Remove{(\sharingRel_{\val}+\sharingRel'_1+\sharingRel'_2+\sharingRel')}{(D_1\cup D_2)}$.
\end{enumerate}
By (1) and $\FV{\vs\,\vs'}\cap  D_2=\emptyset$ and \refToProp{invTyping1} we have that 
$\Remove{(\sharingRel_{\val}+\sharingRel'_1)}{D_2}=\sharingRel_{\val}+\sharingRel'_1$, so from
\refToProp{lessSrRel}.\ref{p4} we have that
\begin{enumerate}[(A)]\addtocounter{enumi}{4}
\item $\Remove{(\sharingRel_{\val}+\sharingRel'_1+\sharingRel'_2+\sharingRel')}{D_2}=(\sharingRel_{\val}+\sharingRel'_1)+(\Remove{(\sharingRel'_2+\sharingRel')}{D_2})$
\end{enumerate}
Therefore we have that\\
\centerline{
$
\begin{array}{lcll}
\sharingRel&=& \Remove{(\sharingRel_{\val}+\sharingRel'_1+\sharingRel'_2+\sharingRel')}{(D_1\cup D_2)}& \text{by (D)}\\
&=& \Remove{(\Remove{(\sharingRel_{\val}+\sharingRel'_1+\sharingRel'_2+\sharingRel')}{D_2})}{D_1}& \text{by definition of $\setminus$}\\
&=& \Remove{((\sharingRel_{\val}+\sharingRel'_1)+\Remove{(\sharingRel'_2+\sharingRel')}{D_2})}{D_1}& \text{by (E)}\\
\end{array}
$
}
which proves that the typing of $\e_1$ and $\e_2$ produce the same sharing relation. 

\medskip\noindent
We show that \underline{$\TypeCheck{\Gamma}{\e_2}{\C}{\sharingRel}$ implies $\TypeCheck{\Gamma}{\e_1}{\C}{\sharingRel}$} for some $\X$.\\
Let $\TypeCheck{\Gamma}{\e_2}{\C}{\sharingRel}$, define $\Gamma_1=\SubstFun{\Gamma,\Gamma_{\dvs_1}}{\Gamma_{\dvs_2}}$. From rule \rn{T-new}  \refToLemma{invBlock} we have that
\begin{enumerate}[(a)]
\item $\sharingRel=\Remove{\sharingRel_3}{D_1}$ where $\sharingRel_3=\sharingRel_{\val}+\sharingRel'_1+\Remove{(\sharingRel'_2+\sharingRel'')}{D_2}$ and $\sharingRel_{\val}=\sum\limits_{i=1}^{n+m}\sharingRel_i$
  \item $\TypeCheck{\Gamma[\Gamma_{\dvs_1}]}{\val_i}{\T_i}{\sharingRel_i}$ for all $i$, $1\leq i\leq n+m$
  \item $\TypeCheckDecs{\Gamma[\Gamma_{\dvs_1}]}{\dvs_1}{\sharingRel'_1}$  
  \item $\TypeCheckDecs{\Gamma_1}{\dvs_2}{\sharingRel'_2}$ and  $\TypeCheck{\Gamma_1}{\val}{\T'}{\sharingRel'}$  and
 \item %$\fields{\C}{=}\Field{\T_1}{\f_1},\dots,\Field{\T_n}{\f_n},\Field{\T'}{\f'},\Field{\T_{n+1}}{\f_{n+1}},\dots,\Field{\T_{n+m}}{\f_{n+m}}$ and\\
 $\X'=\Closure{\resV}{(\sharingRel'_2+\sharingRel')}\cap(D_2)$ and $\Y=\Closure{\resV}{\sharingRel_3}\cap(D_2)$
\end{enumerate}
where $\fields{\C}=\Field{\T_1}{\f_1}\dots\Field{\T_n}{\f_n}\Field{\T'}{\f'}\Field{\T_{n+1}}{\f_{n+1}}\dots\Field{\T_{n+m}}{\f_{n+m}}$.\\
By wellformedness of blocks $D_1\cap D_2=\emptyset$. Therefore, letting $\Gamma_2={\Gamma}[{\Gamma_{\dvs_1},\Gamma_{\dvs_2}}]$ we have that $\Gamma_1=\Gamma_2$.
From \refToLemma{weakening}.1 and (c) and (d) we get $\TypeCheckDecs{\Gamma_2}{\dvs_2}{\sharingRel'_2}$ and  $\TypeCheck{\Gamma_2}{\val}{\T'}{\sharingRel'}$ and $\TypeCheckDecs{\Gamma_2}{\dvs_1}{\sharingRel'_1}$ and by rule \rn{T-block}
\begin{enumerate}[(A)]%\addtocounter{enumi}{5}
  \item $\TypeCheck{\SubstFun{\Gamma}{\Gamma_{\dvs_1}}}{\BlockLab{\dvs_2}{\val}{\X}}{\T'}{\Remove{(\sharingRel'_1+\sharingRel'_2+\sharingRel')}{(D_2\cup D_1)}}$ where  $\X'=\Closure{\resV}{(\sharingRel'_1+\sharingRel'_2+\sharingRel')}\cap (D_2\cup D_1)$  
\end{enumerate}
From (1), (b) and \refToLemma{weakening}.2 we get
\begin{enumerate}[(A)]\addtocounter{enumi}{1}
   \item $\TypeCheck{\Gamma}{\val_i}{\T_i}{\sharingRel_i}$ for all $i$, $1\leq i\leq n+m$
 \end{enumerate}
From (A), (B) and rule \rn{T-New} we get\\
\centerline{
$
\TypeCheck{\Gamma}{\e_1}{\C}{\sharingRel_{\val}+\Remove{(\sharingRel'_1+\sharingRel'_2+\sharingRel')}{(D_1\cup D_2)}}
$
}
and $\X=\Closure{\resV}{(\sharingRel'_1+\sharingRel'_2+\sharingRel')}\cap(D_1\cup D_2)$.\\
The proof that $\sharingRel=\sharingRel_{\val}+\Remove{(\sharingRel'_1+\sharingRel'_2+\sharingRel')}{(D_1\cup D_2)}$ is as for the previous implication.
\end{proof}

The following lemma asserts that subexpressions of typable expressions are
themselves typable, and may be replaced with expressions that have the same
type and the same or possibly less sharing effects. 

\noindent{\bf Lemma \ref{lemma:context}.} (Context)
%\begin{lemma*} [Context]
{\it Let $\TypeCheck{\Gamma}{\Ctx{\e}}{\C}{\sharingRel}$, then
\begin{enumerate}
  \item $\TypeCheck{\Gamma[\Gamma_{\ctx}]}{{\e}}{\D}{\sharingRel_1}$ for some
    $\D$ and $\sharingRel_1$,
   \item if $\TypeCheck{\Gamma[\Gamma_{\ctx}]}{{\e'}}{\D}{\sharingRel_2}$ where  
    $\Finer{\sharingRel_2}{\sharingRel_1}$ (${\sharingRel_2}={\sharingRel_1}$), 
    then $\TypeCheck{\Gamma}{\CtxP{\e'}}{\C}{\sharingRel'}$ for some $\ctxP$ such that
    $\CtxP{\e'}\variant\Ctx{\e'}$ and
    $\Finer{\sharingRel'}{\sharingRel}$ (${\sharingRel'}={\sharingRel}$). 
\end{enumerate}
%\end{lemma*}
}
\begin{proof}
\begin{enumerate}
  \item Easy, by induction on evaluation contexts.
  \item Let $\TypeCheck{\Gamma}{\Ctx{\e}}{\C}{\sharingRel}$. By point 1. of this lemma 
  $\TypeCheck{\Gamma[\Gamma_{\ctx}]}{{\e}}{\D}{\sharingRel_1}$ for
  some $\D$ and $\sharingRel_1$. By induction on evaluation contexts.\\
  If $\ctx=\emptyctx$, then $\C=\D$ and $\sharingRel_1=\sharingRel$.
  The result is immediate.\\
  If $\ctx=\BlockLab{\dvs\ \Dec{\T}{\x}{\ctx'}\ \decs}{\e_b}{\X}$,
  then $\TypeCheck{\Gamma}{\BlockLab{\dvs\ \Dec{\T}{\x}{\CtxP{\e}}\ \decs}{\e_b}{\X}}{\C}{\sharingRel}$. 
  Let $\Gamma'=\Gamma_{\dvs},\TypeDec{\x}{\T},\Gamma_{\decs}$, from \refToLemma{invBlock} we have
  \begin{enumerate}[(1)]
  \item $\sharingRel=\Remove{(\sharingRel_1+\sharingRel_2+\sharingRel_3)}{\dom{\Gamma'}}$ where 
   \item $\TypeCheckDecs{\Gamma[\Gamma']}{\Dec{\T}{\x}{\CtxP{\e}}}{\sharingRel_1}$ where   $\sharingRel_1=\SubstEqRel{\sharingRel_x}{\x}{\resV}$ and $\T=\Type{\mu}{\C_x}$ and $\TypeCheck{\Gamma[\Gamma']}{{\CtxP{\e}}}{\C_x}{\sharingRel_x}$,
  \item $\TypeCheckDecs{\Gamma[\Gamma']}{\dvs\ \decs}{\sharingRel_2}$,  
  \item $\TypeCheck{\Gamma[\Gamma']}{\e_b}{}{\sharingRel_3}$ and
  \item  $\X=\Closure{\resV}{\sharingRel_1+\sharingRel_2+\sharingRel_3}\cap(\dom{\dvs}\cup\dom{\decs}\cup\{\x\})$
   \end{enumerate}
   From (2)  and point 1. of this lemma $\TypeCheck{\Gamma[\Gamma'][\Gamma_{\ctxP}]}{{\e}}{\D}{\sharingRel_4}$ for some $\D$ and 
   $\sharingRel_4$. 
Let
  $\TypeCheck{\Gamma[\Gamma'][\Gamma_{\ctxP}]}{{\e'}}{\D}{\sharingRel'_4}$
  where $\Finer{\sharingRel'_4}{\sharingRel_4}$. From (2), by induction hypothesis on
  $\ctxP$, we have that
  $\TypeCheck{\Gamma[\Gamma']}{{\ctxS[\e']}}{\D}{\sharingRel'_x}$
  where $\ctxP[\e']\variant\ctxS[\e']$ and $\Finer{\sharingRel'_x}{\sharingRel_x}$.
  Moreover,
  from $\TypeCheck{\Gamma}{\BlockLab{\dvs\ \Dec{\T}{\x}{\CtxP{\e}}\
  \decs}{\e_b}{\X}}{\C}{\sharingRel}$, if $\mu=\capsule$, then
  $\IsCapsule{\sharingRel_x}$, and  so also  $\IsCapsule{\sharingRel'_x}$.
  Therefore
  \begin{enumerate}[(a)]
  \item $\TypeCheckDecs{\Gamma[\Gamma']}{\Dec{\T}{\x}{\CtxS{\e}}}{\sharingRel'_1}$ where $\ctxP[\e']\variant\ctxS[\e']$ and $\sharingRel'_1=\SubstEqRel{\sharingRel'_x}{\x}{\resV}$.
 \end{enumerate}
 From \refToProp{lessSrRel}.\ref{p2} and \ref{p3} we have that $\Finer{\sharingRel'_1}{\sharingRel_1}$. 
 Applying rule \rn{T-block} to  (2), (3), (4) and (a) we have\\
 \centerline{
  $\TypeCheck{\Gamma}{\BlockLab{\dvs\ \Dec{\T}{\x}{\CtxS{\e'}}\
  \decs}{\e_b}{\X'}}{\C}{\sharingRel''}$}\\
where $\sharingRel''=\Remove{(\sharingRel'_1+\sharingRel_2+\sharingRel_3)}{\dom{\Gamma'}}$ and $\X'=\Closure{\resV}{\sharingRel'_1+\sharingRel_2+\sharingRel_3}\cap(\dom{\dvs}\cup\dom{\decs}\cup\{\x\})$. From \refToProp{lessSrRel}.\ref{p2} and \ref{p3} we derive that 
  $\Finer{\sharingRel''}{\sharingRel}$. The case with equality is similar.\\ 
  If
  $\ctx=\BlockLab{\dvs}{\ctx}{\X}$, then the proof is similar to the previous one.
\end{enumerate}
\end{proof}

\noindent{\bf Lemma \ref{lemma:monotoneSharing}.} 
{\it Let $\TypeCheck{\Gamma}{\Ctx{e}}{\C}{\sharingRel}$ and $\TypeCheck{\Gamma}{\e}{\D}{\sharingRel'}$. 
If $\Pair{\x}{\y}\in\sharingRel'$ with $\x,\y\not\in\HB{\ctx}$ and $\x,\y\neq\resV$, 
then $\Pair{\x}{\y}\in\sharingRel$.}
\begin{proof}
By induction on $\ctx$. \\
For \underline{$\ctx=\emptyctx$} is obvious. \\
Let \underline{$\ctx=\BlockLab{\dvs\ \Dec{\Type{\mu}{\C_1}}{\z}{\ctxP}\ \decs}{\e_b}{\X}$}.
From $\TypeCheck{\Gamma}{\Ctx{e}}{\C}{\sharingRel}$ and \refToLemma{invBlock}, let
$\Gamma'=\Gamma_{\dvs},\TypeDec{\x}{\T},\Gamma_{\decs}$
\begin{enumerate}[(a)]
  \item$\TypeCheck{\Gamma[\Gamma']}{{\ctxP[\e]}}{\C_1}{\sharingRel_1}$,   
  \item  $\TypeCheckDecs{\Gamma[\Gamma']}{\dvs\ \decs}{\sharingRel_2}$, and 
  \item  $\TypeCheck{\Gamma[\Gamma']}{\e_b}{}{\sharingRel_3}$. 
  \item  $\sharingRel=\Remove{(\sharingRel'_1+\sharingRel_2+\sharingRel_3)}{\dom{\Gamma'}}$ where $\sharingRel'_1=\Remove{(\sharingRel_1+\{\z,\resV\})}{\resV}$. 
\end{enumerate}
Assume that, $\Pair{\x}{\y}\in\sharingRel'$ with $\x,\y\not\in\HB{\ctx}$ and $\x,\y\neq\resV$. From
$\HB{\ctxP}\subseteq\HB{\ctx}$ and $\TypeCheck{\Gamma}{\e}{\D}{\sharingRel'}$, by induction hypothesis
on $\ctxP$ we derive that  $\Pair{\x}{\y}\in\sharingRel_1$. Since $(\{\z\}\cup\dom{\Gamma'})\subseteq\HB{\ctx}$ we have that 
$\x,\y\not\in(\{\z\}\cup\dom{\Gamma'})$. Therefore $\Pair{\x}{\y}\in\sharingRel$.\\
Similar (and simpler) for \underline{$\ctx=\BlockLab{\dvs}{\ctxP}{\X}$}.
\end{proof}

\noindent{\bf Lemma \ref{lemma:fieldAcc}.} 
{\it
If $\TypeCheck{\Gamma}{\Ctx{e_1}}{\C}{\sharingRel_1}$, $\TypeCheck{\Gamma}{\Ctx{e_2}}{\C}{\sharingRel_2}$,
$\TypeCheck{\Gamma}{\e_1}{\D}{\{\xs,\resV\}}$ and $\TypeCheck{\Gamma}{\e_2}{\D}{\{\ys,\resV\}}$ with 
$\{\xs,\ys\}\cap\HB{\ctx}=\emptyset$. Then $\sharingRel_1+\{\xs,\ys\}=\sharingRel_2+\{\xs,\ys\}$.
}
\begin{proof}
Let \underline{$\ctx=\emptyctx$}. Then $\{\xs,\resV\}+\{\xs,\ys\}=\{\xs, \ys, \resV\}$ and $\{\ys,\resV\}+\{\xs,\ys\}=\{\xs, \ys, \resV\}$. \\
Let \underline{$\ctx=\BlockLab{\dvs\ \Dec{\Type{\mu}{\D}}{\z}{\ctxP}\ \decs}{\e}{\X}$},
$\TypeCheck{\Gamma}{\BlockLab{\dvs\ \Dec{\Type{\mu}{\D}}{\z}{\ctxP[\e_1]}\ \decs}{\e}{\X}}{\C}{\sharingRel_1}$, 
$\TypeCheck{\Gamma}{\BlockLab{\dvs\ \Dec{\Type{\mu}{\D}}{\z}{\ctxP[\e_2]}\ \decs}{\e}{\X}}{\C}{\sharingRel_2}$, 
$\TypeCheck{\Gamma}{\e_1}{\C}{\{\xs,\resV\}}$, and 
$\TypeCheck{\Gamma}{\e_2}{\C}{\{\ys ,\resV\}}$.
Let $\Gamma'=\Gamma_{\dvs},\TypeDec{\z}{\T},\Gamma_{\decs}$, from \refToLemma{invBlock}
\begin{enumerate}[(a)]
  \item $\TypeCheck{\Gamma[\Gamma']}{{\ctxP[\e_1]}}{\C_1}{\sharingRel_3}$
    and $\TypeCheck{\Gamma[\Gamma']}{{\ctxP[\e_2]}}{\C_1}{\sharingRel_4}$  
  \item $\TypeCheckDecs{\Gamma[\Gamma']}{\dvs\ \decs}{\sharingRel_d}$ 
  \item  $\TypeCheck{\Gamma[\Gamma']}{\e}{}{\sharingRel_e}$ 
  \item $\sharingRel_1=\Remove{(\sharingRel_z+\sharingRel_d+\sharingRel_e)}{\dom{\Gamma'}}$
    and $\sharingRel_2=\Remove{(\sharingRel'_z+\sharingRel_d+\sharingRel_e)}{\dom{\Gamma'}}$
    where $\sharingRel_z=\SubstEqRel{\sharingRel_3}{\z}{\resV}$ and
    $\sharingRel'_z=\SubstEqRel{\sharingRel_4}{\z}{\resV}$. 
\end{enumerate}
Let $\Y=\{\xs,\ys\}$. By (a) and induction hypotheses we have that
$\Sum{\sharingRel_3}{\Y}=\Sum{\sharingRel_4}{\Y}$. Therefore $\SubstEqRel{(\Sum{\sharingRel_3}{\Y})}{\z}{\resV}=\SubstEqRel{(\Sum{\sharingRel_4}{\Y})}{\z}{\resV}$. From $\Y\cap\HB{\ctx}=\emptyset$, we have that
for all $\x\in\xs$, $\x\not=\z$ and for all $\y\in\ys$, $\y\not=\z$ and so
$\Sum{(\SubstEqRel{\sharingRel_3}{\z}{\resV})}{\Y}=\Sum{(\SubstEqRel{\sharingRel_4}{\z}{\resV})}{\Y}$. Therefore
$\Sum{\sharingRel_z}{\Y}=\Sum{\sharingRel'_z}{\Y}$ and so also
$\sharingRel_z+\sharingRel_d+\sharingRel_e+\Y=\sharingRel'_z+\sharingRel_d+\sharingRel_e+\Y$,
which implies
$\Remove{(\sharingRel_z+\sharingRel_d+\sharingRel_e+\Y)}{\dom{\Gamma'}}=\Remove{(\sharingRel'_z+\sharingRel_d+\sharingRel_e+\Y)}{\dom{\Gamma'}}$.
Since $\Y\cap\dom{\ctx}=\emptyset$ we have that
$\Y\cap\dom{\Gamma'}=\emptyset$, and $\Remove{\Y}{\dom{\Gamma'}}=\Y$.
Therefore from  \refToProp{lessSrRel}.\ref{p4} we have that
$\Remove{((\sharingRel_z+\sharingRel_d+\sharingRel_e)+\Y)}{\dom{\Gamma'}}=\Remove{(\sharingRel_z+\sharingRel_d+\sharingRel_e)}{\dom{\Gamma'}}+\Y)=\Remove{(\sharingRel'_z+\sharingRel_d+\sharingRel_e+\Y)}{\dom{\Gamma'}}=\Remove{(\sharingRel'_z+\sharingRel_d+\sharingRel_e)}{\dom{\Gamma'}}+\Y)$ which implies $\sharingRel_1+\Y=\sharingRel_2+\Y$.\\
Similar (and simpler) for \underline{$\ctx=\BlockLab{\dvs}{\ctxP}{\X}$}.
\end{proof}

\noindent{\bf Theorem \ref{theo:subred}.} (Subject reduction)
%\begin{theorem*}[Subject reduction]
{\it If $\TypeCheck{\Gamma}{\e_1}{\C}{\sharingRel}$ and $\reduce{\e_1}{\e_2}$, then
\begin{enumerate}
  \item  \PG{$\TypeCheck{\Gamma}{\e'_2}{\C}{\sharingRel'}$ for $\e'_2\variant\e_2$ and 
    $\Finer{\sharingRel'}{\sharingRel}$, and}
  {\item for all $\x$ such that $\e_1=\Decctx{\x}{\e}$, \PG{$\e'_2=\DecctxP{\x}{\e'}$},
    and $\TypeCheck{\TypeEnv{\decctx{\x}}}{\e}{\D}{\sharingRel_x}$ we have that:
    $\TypeCheck{\TypeEnv{\decctxP{\x}}}{\e'}{\D}{\sharingRel'_x}$ and 
    $\Finer{(\sharingRel'_x+\sharingRel_{\dvs'})}{(\sharingRel_x+\sharingRel_{\dvs})}$
    where $\dvs=\extractDec{\decctx{}}{\FV{\e}}$ and 
    $\dvs'=\extractDec{\decctxP{}}{\FV{\e'}}$.}
\end{enumerate}
}
\begin{proof}
\underline{Rule \rn{invk}}. 
\begin{enumerate}
  \item In this case $\redex=\MethCall{\val_0}{\m}{\val_1,..,\val_n}$ and
    \begin{center}
      $\e'=\Block{\Dec{\Type{\mu}{\C_0}}{\this}{\val_0}\, \Dec{\Type{\mu_1}{\C_1}}{\x_1}{\val_1}..\Dec{\Type{\mu_n}{\C_n}}{\x_n}{\val_n}}{\e_b}$ 
    \end{center} 
    where 
    $\method{\C_0}{\m}{=}\Method{\ReturnTypeNew{\D}{\sharingRel_b}}{\mu_0}{\Param{\Type{\mu_1}{\C_1}}{\x_1}\ldots\Param{\Type{\mu_n}{\C_n}}{\x_n}}{\e_b}$.
    From \refToLemma{context}.1 we have that 
    $\TypeCheck{\Gamma[\Gamma_{\ctx}]}{\MethCall{\val_0}{\m}{\val_1,..,\val_n}}{\D}{\sharingRel''}$ 
    for some $\sharingRel'_1$. From typing rule \rn{T-invk} 
    \begin{enumerate} [(1)]
      \item $\TypeCheck{\Gamma[\Gamma_{\ctx}]}{\val_i}{\C_i}{\sharingRel_i}$ 
        ( $0\leq i\leq n$)
      \item $\forall\ {0 \leq i \leq n}\ \ \mu_i=\capsule\Longrightarrow{\IsCapsule{\sharingRel_i}}$
      \item $\sharingRel'_0=\SubstEqRel{\sharingRel_0}{\this}{\resV}$
      \item $\sharingRel'_i=\SubstEqRel{\sharingRel_i}{\x_i}{\resV}$ ($1\leq i\leq n$)
      \item $\sharingRel''=\Remove{(\Sum{\sum\limits_{i=1}^{n}\sharingRel'_i}{\sharingRel_b})}{\{\this,\x_1,\ldots,\x_n\}}$
    \end{enumerate}
    From the fact that the class table is well-typed we have that
    %\begin{center}
    $\TypeCheck{\Gamma'}{\e_b}{\D}{\sharingRel_b}$ where
    $\Gamma'=\TypeDec{\this}{\Type{\mu}{\C_0},\TypeDec{\x_1}{\Type{\mu_1}{\C_1}},\ldots,\TypeDec{\x_n}{\Type{\mu_n}{\C_n}}}$.
    Moreover, since we may assume that
    $\{\this,\x_1,\ldots,\x_n\}\cap\dom{\Gamma[\Gamma_{\ctx}]}=\emptyset$,
    from \refToLemma{weakening} we have that 
    \begin{enumerate}[(a)] 
      \item $\TypeCheck{\Gamma[\Gamma_{\ctx}][\Gamma']}{\e_b}{\D}{\sharingRel_b}$, and
      \item $\TypeCheck{\Gamma[\Gamma_{\ctx}][\Gamma']}{\val_i}{\C_i}{\sharingRel_i}$ ( $0\leq i\leq n$).
    \end{enumerate}
    Therefore by typing rule \rn{T-block}, (1)$\div$(5), (a) and (b) we have 
    that $\TypeCheck{\Gamma[\Gamma_{\ctx}]}{\e'}{\D}{\sharingRel''}$. From 
    \refToLemma{context}.2 we derive $\TypeCheck{\Gamma}{\CtxP{\e''}}{\D}{\sharingRel}$ where $\CtxP{\e''}\variant\Ctx{\e'}$.
  \item The result is proved as in the case of \rn{field-assign} just replacing 
    $\Finer{\sharingRel'_x}{\sharingRel_x}$ with 
    ${\sharingRel'_x}={\sharingRel_x}$ since the sharing relation of the redex 
    is equal to the one of the block to which it reduces.
\end{enumerate}

\underline{Rule \rn{alias-elim}}. 
\begin{enumerate}
  \item In this case 
    \begin{enumerate} [(1)]
      \item $\redex=\BlockLab{\decs' }{\e_b}{\X}$ where 
        $\decs'=\dvs\ \Dec{\C_1}{\x}{\y}\ \decs$ 
      \item $\e'=\BlockLab{\dvs\ \Subst{\decs}{\y}{\x}}{\Subst{\e_b}{\y}{\x}}{X\setminus\{\x\}}$
    \end{enumerate}
    From \refToLemma{context}.1 we have that 
    $\TypeCheck{\Gamma[\Gamma_{\ctx}]}{\BlockLab{\decs'}{\e_b}{\X}}{\D}{\sharingRel_1}$ 
    for some $\sharingRel_1$. Therefore from \refToLemma{invBlock} we have that
    \begin{enumerate} [(a)]
      \item $\TypeCheckDecs{\Gamma[\Gamma_{\ctx}][\Gamma_{\decs'}]}{\dvs}{\sharingRel_2}$ for some $\sharingRel_2$,
      \item $\TypeCheckDecs{\Gamma[\Gamma_{\ctx}][\Gamma_{\decs'}]}{\Dec{\C_1}{\x}{\y}}{\{\x,\y\}}$,
      \item $\TypeCheckDecs{\Gamma[\Gamma_{\ctx}][\Gamma_{\decs'}]}{\decs}{\sharingRel_3}$  for some $\sharingRel_3$,
      \item $\TypeCheck{\Gamma[\Gamma_{\ctx}][\Gamma_{\decs'}]}{\e_b}{\D}{\sharingRel_4}$  for some $\sharingRel_4$,
      \item $\sharingRel_1=\Remove{\sharingRel'_1}{\dom{\decs'}}$ where $\sharingRel'_1=\sum\limits_{i=2}^{4}\sharingRel_i+\{\x,\y\}$ 
        and $\X=\Closure{\resV}{\sharingRel'_1}\cap\dom{\decs'}$.
    \end{enumerate}
    Since $\x$ cannot be free in $\dvs$, from \refToLemma{weakening}.2 and (a) 
    we derive
    \begin{enumerate} [(A)]
      \item $\TypeCheckDecs{\Gamma[\Gamma_{\ctx}][\Remove{\Gamma_{\decs'}}{\x}]}{\dvs}{\Remove{\sharingRel_2}{\x}}$.
    \end{enumerate}
    From  \refToLemma{substitution}.1 and (c) and (d) we have that 
    \begin{enumerate} [(A)]\addtocounter{enumii}{2}
      \item $\TypeCheckDecs{\Remove{\Gamma[\Gamma_{\ctx}][\Gamma_{\decs'}]}{\x}}{\Subst{\decs}{\y}{\x}}{\Remove{\sharingRel_3}{\x}}$ and 
      \item $\TypeCheckDecs{\Remove{\Gamma[\Gamma_{\ctx}][\Gamma_{\decs'}]}{\x}}{\Subst{\e_b}{\y}{\x}}{\Remove{\sharingRel_4}{\x}}$.
    \end{enumerate}
    Moreover, 
    \begin{enumerate}[(A)]\addtocounter{enumii}{4}
          \item let $\sharingRel''=\sum\limits_{i=2}^{4}(\Remove{\sharingRel_i}{\x})$, 
    \end{enumerate}
    from (A), (C)$\div$(E) and rule \rn{T-block} we have that 
    \begin{center}
      $\TypeCheck{\Gamma[\Gamma_{\ctx}][\Gamma_{\decs'}]}{\BlockLab{\dvs\ \Subst{\decs}{\y}{\x}}{\Subst{\e_b}{\y}{\x}}{\Y}}{\D}{\Remove{\sharingRel''}{\dom{\dvs\,\decs}}}$
    \end{center}
    where $\Y=\Closure{\resV}{{\sharingRel''}}\cap\dom{\dvs\,\decs}$. If
    $\x\not\in\Closure{\resV}{\sharingRel'_1}$, then
    $\Closure{\resV}{\sharingRel'_1}=\Closure{\resV}{{\sharingRel''}}$, and
    since $\dom{\dvs\,\decs}\cup\{x\}=\dom{\decs'}$ we have that $\X=\Y$. If
    $\x\in\Closure{\resV}{\sharingRel'_1}$, then
    $\Closure{\x}{\sharingRel'_1}=\Closure{\resV}{\sharingRel'_1}$ and from (e)
    we have that
    $\Closure{\x}{\sharingRel'_1}=\Closure{\resV}{\sharingRel'_1}=\Closure{\y}{\sharingRel'_1}$.
    Therefore
    $\Closure{\resV}{\sharingRel''}=\Closure{\resV}{\sharingRel'_1}\setminus\{\x\}$
    and $\Y=X\setminus\{\x\}$. From
    $\Finer{\Remove{\sharingRel_i}{\x}}{\sharingRel_i}$ ($2\leq i\leq 4$) and
    \refToProp{lessSrRel}.\ref{p2} we have that
    $\Finer{\sharingRel''}{\sharingRel'_1}$.  Therefore from
    \refToProp{lessSrRel}.\ref{p3} we derive $\Finer{\sharingRel_2}{\sharingRel_1}$.
  \item The result is proved as in the case of \rn{field-assign} since from  
    $\Finer{\sharingRel_2}{\sharingRel_1}$ by \refToLemma{context} we derive
    ${\sharingRel'_x}={\sharingRel_x}$.
\end{enumerate}

\underline{Rule \rn{affine-elim}}.
\begin{enumerate}
  \item In this case 
    \begin{enumerate} [(1)]
      \item $\redex=\BlockLab{\decs' }{\e_b}{\X}$ where 
        $\decs'=\dvs\ \Dec{\Type{\capsule}{\C_1}}{\x}{\val}\ \decs$ 
      \item $\e'=\BlockLab{\dvs\ \Subst{\decs}{\val}{\x}}{\Subst{\e_b}{\val}{\x}}{X\setminus\{\x\}}$
    \end{enumerate}
  
    From \refToLemma{context}.1 we have that
    $\TypeCheck{\Gamma[\Gamma_{\ctx}]}{\BlockLab{\decs'}{\e_b}{\X}}{\D}{\sharingRel_1}$
    for some $\sharingRel_1$. Therefore from \refToLemma{invBlock} we 
    have that
    \begin{enumerate} [(a)]
      \item  $\TypeCheckDecs{\Gamma[\Gamma_{\ctx}][\Gamma_{\decs'}]}{\dvs}{\sharingRel_2}$ 
        for some $\sharingRel_2$,
      \item $\TypeCheck{\Gamma[\Gamma_{\ctx}][\Gamma_{\decs'}]}{\val}{\C_1}{\sharingRel_v}$ 
        where $\IsCapsule{\sharingRel_v}$, therefore also 
        $\TypeCheckDecs{\Gamma[\Gamma_{\ctx}][\Gamma_{\decs'}]}{\Dec{\Type{\capsule}{\C_1}}{\x}{\val}}{\sharingRel_v}$
      \item $\TypeCheckDecs{\Gamma[\Gamma_{\ctx}][\Gamma_{\decs'}]}{\decs}{\sharingRel_3}$
        for some $\sharingRel_3$,
      \item $\TypeCheck{\Gamma[\Gamma_{\ctx}][\Gamma_{\decs'}]}{\e_b}{\D}{\sharingRel_4}$ 
        for some $\sharingRel_4$,
      \item $\sharingRel_1=\Remove{\sharingRel'_1}{\dom{\decs'}}$ where 
        $\sharingRel'_1=\sum\limits_{i=2}^{4}\sharingRel_i+\sharingRel_v$ and 
        $\X=\Closure{\resV}{\sharingRel'_1}\cap\dom{\decs'}$.
    \end{enumerate}
    Let
    \begin{enumerate} [(A)]\addtocounter{enumii}{1}
         \item $\TypeCheck{\Gamma[\Gamma_{\ctx}][\Gamma_{\decs'}]}{\remGarbage(\val)}{\C_1}{\sharingRel'_v}$ 
    \end{enumerate}
    we also have $\IsCapsule{\sharingRel'_v}$.\\ 
    From (B), the fact that $\Gamma_{\decs'}(\x)=\Type{\capsule}{\C_1}$, and
    \refToLemma{sharingCapsule} we have that $\sharingRel'_v=\epsilon$. Since 
    we do not have forward references to unevaluated variables,
    $\x$ cannot be free in $\dvs$ and from \refToLemma{weakening}.2 and (a) we derive
    \begin{enumerate} [(A)]%\addtocounter{enumii}{1} 
      \item  $\TypeCheckDecs{\Gamma[\Gamma_{\ctx}][\Remove{\Gamma_{\decs'}}{\x}]}{\dvs}{\Remove{\sharingRel_2}{\x}}$.
    \end{enumerate}
    From (B) with $\sharingRel'_v=\epsilon$, \refToLemma{substitution}.2 and (c)
    and (d) we have that 
   \begin{enumerate} [(A)]\addtocounter{enumii}{2} 
      \item $\TypeCheckDecs{\Remove{\Gamma[\Gamma_{\ctx}][\Gamma_{\decs'}]}{\x}}{\Subst{\decs}{\remGarbage(\val)}{\x}}{\Remove{\sharingRel_3}{\x}}$ and 
      \item $\TypeCheckDecs{\Remove{\Gamma[\Gamma_{\ctx}][\Gamma_{\decs'}]}{\x}}{\Subst{\e_b}{\remGarbage(\val)}{\x}}{\Remove{\sharingRel_4}{\x}}$.
    \end{enumerate}
    Moreover, 
     \begin{enumerate} [(A)]\addtocounter{enumii}{4}  
      \item let 
        $\sharingRel''=\sum\limits_{i=2}^{4}(\Remove{\sharingRel_i}{\x})$, 
    \end{enumerate}
    from (A), (C)-(E) and rule \rn{T-block} we have that 
    \begin{center}
      $\TypeCheck{\Gamma[\Gamma_{\ctx}][\Gamma_{\decs'}]}{\BlockLab{\dvs\ \Subst{\decs}{\val}{\x}}{\Subst{\e_b}{\val}{\x}}{\Y}}{\D}{\Remove{\sharingRel''}{\dom{\dvs\,\decs}}}$
    \end{center}
    where $\Y=\Closure{\resV}{{\sharingRel''}}\cap\dom{\dvs\,\decs}$. Since
    $\Closure{\x}{\sharingRel'_1}=\{\x\}$, we have that 
    $\x\not\in\Closure{\resV}{\sharingRel'_1}$. Moreover 
    $\dom{\dvs\,\decs}\cup\{x\}=\dom{\decs'}$. Therefore we have that 
    $\X=\Remove{\X}{\x}=\Y$. From $\Finer{\Remove{\sharingRel_i}{\x}}{\sharingRel_i}$ 
    ($2\leq i\leq 4$) and \refToProp{lessSrRel}.\ref{p2} we get 
    $\Finer{\sharingRel''}{\sharingRel'_1}$. 
  \item The result is proved as in the case of \rn{field-assign} since from
    $\Finer{\sharingRel_2}{\sharingRel_1}$ by \refToLemma{context} we derive 
    ${\sharingRel'_x}={\sharingRel_x}$.
\end{enumerate}
\end{proof}

\noindent{\bf Lemma \ref{lemma:decomposition}.} (Decomposition)
%\begin{lemma*}[Unique Decomposition]
{\it If  $\e$ is not a value, then there are 
$\ctx$ and $\redex$ such that $\congruence{\e}{\Ctx{\redex}}$.}
\begin{proof}
  By structural induction on expressions. \\ If
  \underline{$\FieldAssign{\val}{\f}{\val'}$}, then from \refToProp{value} we
  have that $\val=\x$ or $\val=\BlockLab{\dvs}{\x}{\X}$ and $\val'=\y$ or
  $\val'=\BlockLab{\dvs'}{\y}{\Y}$. If $\val=\BlockLab{\dvs}{\x}{\X}$ and
  $\val'=\BlockLab{\dvs'}{\y}{\Y}$, we may assume, by $\alpha$-renaming, that
  $\dom{dvs}\cap\dom{dvs'}=\emptyset$.  From rule \rn{val-ctx} (applied twice)
  $\congruence{\FieldAssign{\val}{\f}{\val'}}{\BlockLab{\dvs\
  \dvs'}{\FieldAssign{\x}{\f}{\y}}{\X\cup\Y}}$.  So
  $\congruence{\FieldAssign{\val}{\f}{\val'}}{\Ctx{\FieldAssign{\x}{\f}{\y}}}$
  where $\ctx=\BlockLab{\dvs\ \dvs'}{\emptyctx}{\X\cup\Y}$ The other cases are
  easier.\\ For field access the proof is similar. \\ Method call is a redex so
  the result holds with $\ctx=\emptyctx$.\\ If
  \underline{$\BlockLab{\decs}{\e}{\X}$} is not a value, then either 
  \begin{enumerate}[(1)]
    \item $\decs=\dvs\,\Dec{\T}{\x}{\e_1}\,\decs_1$ where $\e_1$ is not 
      $\ConstrCall{\C}{\xs}$ for some $\C$ and $\xs$ or
    \item $\decs=\dvs$ and $\e$ is not a value.
  \end{enumerate}
In \underline{case (1)}, either $\e_1$ is not a value or $\e_1$ is a value but not $\ConstrCall{\C}{\xs}$ for some $\C$
and $\xs$.\\
In case $\e_1$ is not a value, by induction hypothesis, there are $\ctx$, and $\redex$  such that
$\congruence{\e_1}{\Ctx{\redex}}$. Applying congruence rule \rn{reorder} of
\refToFigure{congruence} we have that
\begin{center}
$\congruence{\BlockLab{\dvs\,\dvs_1\Dec{\T}{\x}{\e_1}\,\decs_2}{\e}{\X}}{\BlockLab{\dvs\,\Dec{\T}{\x}{\e_1}\,\decs_1}{\e}{\X}}$ 
\end{center}
where $\dvs_1$ are all the evaluated declarations of $\decs_1$. Therefore
$\congruence{\BlockLab{\decs}{\e}{\X}}{\CtxP{\redex}}$ where
$\ctxP=\BlockLab{\dvs\,\dvs_1\Dec{\T}{\x}{\ctx}\,\decs_2}{\e}{\X}$.\\ 
In case $\e_1$ is a value but not $\ConstrCall{\C}{\xs}$ for some $\C$
and $\xs$, by Proposition~\ref{prop:value}, either $\congruence{\e_1}{\y}$ or
$\congruence{\e_1}{\BlockLab{\dvs'}{\y}{\Y}}$. If $\T=\Type{\capsule}{\D}$ for
some $\D$, then the block is a redex, else either $\congruence{\e_1}{\y}$ or
$\congruence{\e_1}{\BlockLab{\dvs'}{\y}{\Y}}$. \\ If $\congruence{\e_1}{\y}$,
then $\BlockLab{\decs}{\e}{\Y}$ is congruent to
$\BlockLab{\dvs\,\Dec{\T}{\x}{\y}\,\decs_1}{\e}{\X}$, which is a redex. \\ If
$\congruence{\e_1}{\BlockLab{\dvs'}{\y}{\X}}$, then
%\begin{center}
$\congruence{\BlockLab{\dvs\,\Dec{\T}{\x}{\e_1}\,\decs_1}{\e}{\X}}{\BlockLab{\dvs\,\Dec{\T}{\x}{\BlockLab{\dvs'}{\y}{\Y}}\,\decs_1}{\e}{\X}}$
%\end{center} 
Since $\T=\D$ for some $\D$, with $\alpha$-renaming of variables in $\dom{\dvs'}$, applying congruence rule \rn{Dec} we have
\begin{center}
  $\congruence{\BlockLab{\dvs\,\Dec{\T}{\x}{\e_1}\,\decs_1}{\e}{\X}}{\BlockLab{\dvs\,\dvs'\Dec{\T}{\x}{{\y}}\,\decs_1}{\e}{\X}}$
\end{center}
and the expression on the right is a redex.\\ 
In \underline{case (2)}, by
induction hypothesis, there are $\ctx$, and $\redex$  such that
$\congruence{\e}{\Ctx{\redex}}$. Therefore
$\congruence{\BlockLab{\decs}{\e}{\X}}{\CtxP{\redex}}$ where
$\ctxP=\BlockLab{\dvs}{\ctx}{\X}$.
 \end{proof}

\end{document}